\documentclass[11pt,a4paper]{article}
\usepackage{latexsym,amssymb,amsmath,amsthm,amsfonts,enumerate,verbatim,xspace,
exscale}

\input xy
\xyoption{all} \CompileMatrices \UseComputerModernTips

\parskip 1mm
\usepackage{graphicx,tabularx}
\usepackage{amsmath,amsbsy,amsfonts,amssymb}
\usepackage{color}
\usepackage{amsthm}

\addtolength{\textwidth}{30mm}
\addtolength{\hoffset}{-15mm}
%
%
%

\title{Hamiltonization of solids of revolution through reduction}

\author{P. Balseiro  \footnotemark}
\author{{\sc{Paula Balseiro}\thanks{
         Universidade Federal Fluminense, Instituto de Matem\'atica, Rua Mario Santos Braga S/N, 24020-140,
        Niteroi, Rio de Janeiro, Brazil. \newline{\texttt{E-mail: pbalseiro@vm.uff.br}}}
        }  \ \
}

\theoremstyle{plain}
\newtheorem{theorem}{Theorem}[section]
\newtheorem{lemma}[theorem]{Lemma}
\newtheorem{proposition}[theorem]{Proposition}
\newtheorem{corollary}[theorem]{Corollary}
\newtheorem*{theorem*}{Theorem}
\newtheorem{remarkth}[theorem]{Remark}

\theoremstyle{definition}
\newtheorem{definition}[theorem]{Definition}
\newtheorem{example}[theorem]{Example}

\newenvironment{remark}{\begin{remarkth}\upshape}{\hfill$\diamond$\end{remarkth}}

\newcommand{\g}{\mathfrak{g}}



\def\W{\mathcal{W}}
\def\M{\mathcal{M}}

\def\V{\mathcal{V}}
\def\S{\mathcal{S}}
\def\C{\mathcal{C}}
\def\Ham{\mathcal{H}}

\def\R{\mathbb{R}}

\def\red{{\mbox{\tiny{red}}}}
\def\nh{{\mbox{\tiny{nh}}}}

\def\B{{\mbox{\tiny{$B$}}}}
\def\subW{{\mbox{\tiny{$\W$}}}}
\def\subC{{\mbox{\tiny{$\C$}}}}

\def\subM{{\mbox{\tiny{$\M$}}}}
\def\subQ{{\mbox{\tiny{$Q$}}}}
\def\subSS{{\mbox{\tiny{$S$}}}}
\def\subWW{{\mbox{\tiny{$W$}}}}

\def\vecOm{\boldsymbol{\Omega}}

\def\a{\alpha}

\def\vecL{\boldsymbol{\lambda}}

\def\vecgamma{\boldsymbol{\gamma}}
\def\vecalpha{\boldsymbol{\alpha}}
\def\vecbeta{\boldsymbol{\beta}}
\def\veca{\boldsymbol{a}}

\let\OLDthebibliography\thebibliography
\renewcommand\thebibliography[1]{
  \OLDthebibliography{#1}
  \setlength{\parskip}{0pt}
  \setlength{\itemsep}{0pt plus 0.3ex}
}
\date{}
\begin{document}
\maketitle

\begin{abstract}
 In this paper we study the relation between conserved quantities of nonholonomic systems and the hamiltonization problem employing the geometric methods of \cite{Jac,PL2011}.  We illustrate the theory with classical examples describing the dynamics of solids of revolution rolling without sliding on a plane.  In these cases, using the existence of two conserved quantities we obtain, by means of {\it gauge transformations} and symmetry reduction, genuine Poisson brackets describing the reduced dynamics.
\end{abstract}

\tableofcontents

\section{Introduction}

Nonholonomic systems are mechanical systems with nonintegrable constraints in their velocities which, as a result, do not fit into the classical hamiltonian formalism. A central issue in the study of nonholonomic systems is determining 
whether they can be ``hamiltonized'' upon reduction by symmetries. This is known as the {\em hamiltonization} problem, about which there is a vast literature (see e.g. \cite{Bolsinov,BorisovMamaev2002, Chapligyn_reducing_multiplier, EhlersKoiller, FedorovJovan, Naranjo2008, JovaChap, Ohsawa} and references therein). 

This paper explores the connection between the presence of conserved quantities for a nonholonomic system and its hamiltonization, as raised in \cite{Naranjo2008}. Using the geometric techniques
developed in \cite{Jac,PL2011}, we show that, for certain types of symmetries admitting conserved quantities, one can distinguish particular 2-forms that can be used to modify the classical nonholonomic bracket (by means of {\em gauge transformations}); the reduction of such modified brackets to the orbit space are genuine Poisson brackets, relative to which the reduced equations of motion are hamiltonian. We show that all conditions for this procedure to work are met for a concrete set of examples, namely solids of revolution rolling on a plane without sliding as well as the classical example of an inhomogeneous ball rolling on a plane. 
As a consequence, we establish their hamiltonization, providing a geometric explanation for the reduced brackets found in \cite{BorisovMamaev2002,Ramos2004}.

Let us describe the mathematical set-up and results of the paper more precisely.

\noindent{\bf Nonholonomic systems and hamiltonization}

A nonholonomic system on a manifold $Q$ is defined by a lagrangian $L:TQ\to \R$ (of mechanical type) and a {\em nonintegrable} subbundle $D\subset TQ$ (the permitted velocities), see e.g. \cite{Blochbook,BorisovMamaev2008,Book:CDS}.
In this paper we will treat nonholonomic systems through their {\em hamiltonian} formalism: the lagrangian $L$ and the distribution $D$ induce a submanifold $\M \subset T^*Q$, an almost Poisson bracket $\{\cdot,\cdot\}_\nh$ and a hamiltonian function $\Ham_\subM :\M \to \R$, in such a way that the nonholonomic dynamics on $\M$ is determined by the vector field  
$$X_\nh = \{ \cdot, \Ham_\subM\}_\nh,$$
see e.g.  \cite{IbLeMaMa1999,Marle,SchaftMaschke1994}. The equations of motion $\dot{c}(t) = X_\nh(c(t))$ are not hamiltonian. In geometric terms, what is happening is that the bracket $\{\cdot, \cdot\}_\nh$ fails to satisfy the Jacobi identity due to the non-integrability of the distribution $D$.   In the presence of symmetries, 
the dynamics can be reduced to the quotient space $\M/G$, being defined by the vector field $X_\red$ obtained by the push-forward of $X_\nh$. As mentioned above,  the hamiltonization problem studies  whether the reduced equations of motion,
$$
\dot{\bar{c}}(t) = X_\red (\bar c(t)),
$$ 
are hamiltonian or not. Note that the push-forward of $\{ \cdot, \cdot\}_\nh$ to $\M/G$ defines a bracket $\{ \cdot, \cdot \}_\red$ which describes the reduced dynamics via $X_\red = \{ \cdot, \Ham_\red\}_\red$, where $\Ham_\red:\M/G\to \R$ is the reduced hamiltonian. Although $\{ \cdot, \cdot\}_\nh$ is not a Poisson bracket, it may be that 
 $\{ \cdot, \cdot \}_\red$ is, in which case we say that the system admits a {\it hamiltonization}.

There is, however, a more general set-up for hamiltonization. Notice that, even if $\{ \cdot, \cdot \}_\red$ is not a Poisson bracket, there might still exist other brackets $\{ \cdot, \cdot\}'$ on $\M/G$, satisfying
\begin{equation} \label{Intro:1} 
X_\red = \{ \cdot, \Ham_\red\}',
 \end{equation}
which are Poisson. Following \cite{PL2011,Naranjo2008},
one way to find new brackets on $\M/G$ 
is to first consider new invariant brackets on $\M$. Those can be obtained through modifications of $\{ \cdot, \cdot \}_\nh$ by
gauge transformations \cite{SW}  by (invariant) 2-forms $B$. One then considers their reductions $\{ \cdot, \cdot\}_\red^\B$ on $\M/G$, and searches for Poisson brackets satisfying \eqref{Intro:1} within this family. In this way, the hamiltonization problem 
is translated into the search of 2-forms $B$ with suitable properties.

\noindent{\bf Results: hamiltonization and conserved quantities} 
 
Let us consider a nonholonomic system defined on a manifold $Q$, with symmetry group $G$. Following  \cite{Jac}, we will assume that the symmetries satisfy an additional property, called  {\it vertical-symmetry condition} (see Def.~\ref{Def:VertSym} below). Motivated by examples, we will assume that $G$ acts properly, though not necessarily freely. So we will work in the context of {\em singular reduction}, as in \cite{Book:BC,BS2016}. In practice, this means that we will formulate our results in terms of the ring of $G$-invariant functions on $\M$ (thought of as the ring of smooth functions on $\M/G$, viewed as a differential space). 
Following \cite{PL2011}, the main new aspect of the present work is that we will relate the 2-forms $B$ used to
gauge transform the nonholonomic bracket with the presence of first integrals of the dynamics. 

More precisely, let us assume the existence of a conserved quantity $J\in C^\infty(\M)$ which is a {\it horizontal gauge momentum} \cite{FSG2008,FSG2009} (see Def.~\ref{Def:Gauge-Momentum}). 
Contrary to what occurs in hamiltonian mechanics, the vector field $X_J= \{ \cdot, J\}_\nh$ may not be vertical (i.e., tangent to a $G$-orbit). We hence search for (invariant) 2-forms $B$ with the property that the modified nonholonomic bracket $\{\cdot,\cdot\}_\B$ is such that the vector field
$$
X_J^\B := \{ \cdot, J\}_\B
$$ 
is vertical with respect to the $G$-action (Thm.~\ref{T:Casimirs}).  If this holds, we show that the gauge transformed bracket $\{\cdot, \cdot\}_\B$ induces a reduced bracket $\{\cdot, \cdot\}_\red^\B$ on $\M/G$ for which $J$ is a Casimir (provided $J$ is also $G$-invariant).  We will observe that in various examples these Casimirs play a fundamental role in verifying that $\{\cdot, \cdot\}_\red^\B$ is a Poisson bracket. Concerning the existence of horizontal gauge momenta, we 
use the geometric framework of  \cite{Jac} to derive a ``momentum equation'' (see Prop.~\ref{P:gaugeMomentum}) in the spirit of the one
in \cite{BGMConservation}, but with a more clear dependence on the geometric information.  
 
We apply this theory to study the nonholonomic dynamics of a solid of revolution rolling without slipping on a plane,
which  includes the Routh sphere and the rolling (axisymmetric) ellipsoid \cite{Bolsinov,BorisovMamaev2002,Cushman1998, Book:CDS}. Following \cite{Jac}, we express the failure of the Jacobi identity of the nonholonomic bracket before and after reduction, which is controlled by the 2-form $\langle {\mathcal J},{\mathcal K}_\subW \rangle$ introduced in \cite{Jac}
(see Lemma~\ref{L:JK} and Prop.~\ref{P:notPoisson}). By using the momentum equation of Prop.~\ref{P:gaugeMomentum}, we derive a system of differential equations that leads us to an alternative way to express two (known) $G$-invariant horizontal gauge momenta $J_1$ and $J_2$, as in \cite{Book:CDS}. By analyzing how far the vector fields $X_{J_1}$ and $X_{J_2}$ are from being vertical with respect to the $G$-action, we devise a 2-form $B$ on $\M$ that is compatible with the dynamics, in the sense that $X_\nh = \{\cdot , \Ham_\subM \}_\B$. More importantly, this gauge-transformed bracket has the property that the vector fields $X_{J_i}^\B = \{\cdot , J_i \}_\B$ are vertical. As a consequence, the reduced bracket  $\{\cdot, \cdot\}_\red^\B$ on the (differential) space $\M/G$ (which satisfies \eqref{Intro:1} by construction) admits two Casimir functions, defined by $J_1$ and $J_2$ (see Thm.~\ref{T:Solids-Casimirs}); using this fact, one can directly verify that $\{\cdot, \cdot\}_\red^\B$ is a Poisson
bracket.  Motivated by \cite{Bolsinov}, we remark that our hamiltonization of rolling solids of revolution is relative to
the action of $G= E(2)\times S^1$,  and that we do not have analogous results using a smaller group of symmetries (see Remarks \ref{R:Chaplygin} and \ref{R:Bolsinov}).

\medskip

\noindent {\it Acknowledgments}: I thank CNPq (Brazil) for supporting this project.  I am grateful to Richard Cushman, Jedrzej Sniatycki, Larry Bates, Nicola Sansonetto and Alessia Mandini for stimulating conversations. I thank Dmitry Zenkov for the invitation to the CMS meeting in Edmonton in July 2016, where part of this work was presented.   I am especially indebted to  Luis Garcia-Naranjo for inspiring discussions, particularly concerning the symmetry group for the Routh sphere (Sec.~\ref{S:Routh}) ; his joint work with J. Montaldi \cite{LuisJames} contains results related to ours, but independently obtained, where the horizontal gauge momenta become Casimirs of an alternative reduced bracket.

\section{Nonholonomic mechanics: hamiltonization and conserved quantities}

\subsection{Preliminaries: Nonholonomic systems}

Consider a nonholonomic system on a manifold $Q$ defined by a lagrangian $L:TQ\to \R$ of mechanical type and a (non-integrable and constant rank) distribution $D$ on $Q$.  The distribution $D$ describes the permitted velocities of the system. Our first goal is to write the equations of motion of the system on the cotangent bundle using an almost Poisson bracket (see e.g., \cite{Blochbook,CdLMD,IbLeMaMa1999,MarsdenKoon}).  

Denoting by $\kappa$ the kinetic energy metric, we define the {\it constraint submanifold} $\M$ of $T^*Q$ by $\M:=\kappa^\flat(D)$, where $\kappa^\flat: TQ \to T^*Q$ is defined by $\kappa^\flat(X)(Y)=\kappa(X, Y)$ for $X,Y\in TQ$.  Since $\kappa$ is linear on the fibers, $\M$ is a vector subbundle of $T^*Q$; we denote by $\tau:\M \to Q$ the canonical projection. 

Let $\C$ be the non-integrable and constant rank distribution on $\M$ given, at each $m\in \M$, by 
\begin{equation}\label{Def:C}
\C_m :=\{ v_m \in T_m\M \ : \ T\tau(v_m) \in D_{\tau(m)} \}.
\end{equation} 

The lagrangian $L$ induces the hamiltonian function  $\Ham:T^*Q \to \R$.  Let us denote by $\Ham_\subM : \M \to \R$ the restriction of $\Ham$ to the submanifold $\M$, i.e., $\Ham_\subM := \iota^*\Ham$ where $\iota : \M \to T^*Q$ is the natural inclusion.  Let $\Omega_\subM$ be the 2-form on $\M$ given by $\Omega_\subM := \iota^*\Omega_{\subQ}$ where $\Omega_\subQ$ is the canonical 2-form on $T^*Q$.
Following \cite{BS93}, the nonholonomic dynamics is described by the integral curves of the vector field $X_\nh$ on $\M$ defined by 
\begin{equation}\label{Eq:Dyn}
{\bf i}_{X_\nh} \Omega_\subM |_\C = d\Ham_\subM |_\C, 
\end{equation}
where $|_\C$ denotes the point-wise restriction to $\C$. Since the vector field $X_\nh$ takes values on $\C$, we say that it is  a section of the bundle $\C\to\M$, i.e., $X_\nh\in \Gamma(\C)$.  It is important to note that the solution $X_\nh$ satisfying \eqref{Eq:Dyn} is unique since the 2-section $\Omega_\subM |_\C$ is nondegenerate \cite{BS93}.

The {\it nonholonomic bracket} $\{\cdot, \cdot \}_\nh$ on $C^\infty(\M)$ is given, for $f,g \in C^\infty(\M)$, by $\{f,g\}_\nh = -X_f(g)$, where $X_f \in \mathfrak{X}(\M)$ is the unique vector field such that
\begin{equation}\label{Eq:HamVectorField}
{\bf i}_{X_f} \Omega_\subM |_\C = df |_\C.
\end{equation}
The nonholonomic bracket was defined in \cite{CdLMD,Marle,SchaftMaschke1994} and shown to be an almost Poisson bracket: it is $\R$-bilinear, skew-symmetric and satisfies the Leibniz identity.  

We denote by $\pi_\nh$ the bivector field on $\M$ associated to $\{ \cdot, \cdot \}_\nh$, i.e., for $\alpha\in\Omega^1(\M)$ then $\pi_\nh^\sharp(\alpha) = - X$ if and only if ${\bf i}_{X} \Omega_\subM |_\C = \alpha |_\C,$. In other words, $\pi_\nh(df,dg) = \{f,g\}_\nh$ for $f,g\in C^\infty(\M)$. In these terms, the dynamics is described by the vector field $X_\nh$ given by $X_\nh = - \pi_\nh^\sharp(d\Ham_\subM)$.

\begin{remark}
 It is straightforward to see that $\{\cdot,\cdot\}_\nh$ fails to satisfy the Jacobi identity since its {\it characteristic distribution} --the distribution generated by the ``hamiltonian'' vector fields $\pi_\nh^\sharp(df)$ for $f\in C^\infty(\M)$-- is the distribution $\C$ defined in \eqref{Def:C}, which is not integrable. 
\end{remark}

Throughout this article we will denote by $(\M,\pi_\nh, \Ham_\subM)$ a nonholonomic system on the manifold $Q$ given by a lagrangian $L$ and a constraint distribution $D$.

\subsection{Symmetries and reduction}\label{Ss:Symmetries}

Let $G$ be a Lie group acting properly on the manifold $Q$.  We say that the $G$-action is a {\it symmetry} of the nonholonomic system (defined on $Q$ by a lagrangian $L:TQ\to \R$ and a distribution $D$) if the tangent lift of the action on $TQ$ leaves $L$ and $D$ invariant.  In this case, the cotangent lift of the action to $T^*Q$ leaves the submanifold $\M$ invariant, so we have a proper $G$-action on $\M$: 
$$
\phi: G \times \M \to \M.
$$
It follows that the hamiltonian $\Ham_\subM$ and the nonholonomic bracket $\{\cdot, \cdot \}_\nh$ are $G$-invariant.   Our next goal is to write the equations of motion in the quotient space $\bar\M:=\M/G$.

Consider the quotient map $\rho:\M\to\bar\M$ and endow $\bar\M$ with the quotient topology. Following \cite[Sec.~3]{Book:BC}, since the action is proper, we will view $\bar\M$ as a {\it differential space}, with ring of smooth functions $C^\infty(\bar\M)$ given by the $G$-invariant smooth functions on $\M$.

The $G$-invariant bracket  $\{\cdot, \cdot \}_\nh$ on $C^\infty(\M)$ induces an almost Poisson bracket on $C^\infty(\bar\M)$ denoted by $\{\cdot, \cdot \}_\red$. That is, for $f,g\in C^\infty(\bar\M)$, 
\begin{equation}\label{Eq:ReducedBracket}
\{f,g\}_\red \circ \rho = \{\rho^*f,\rho^*g\}_\nh.
\end{equation}

Since $X_\nh \in \mathfrak{X}(\M)$ satisfies that $T\phi_g X_\nh - X_\nh$ is tangent to the orbits of the $G$-action on $\M$, then $X_\nh$ descends to a vector field
$X_\red$ on $\bar\M$ (see \cite{BS2016}).  The reduced dynamics is described by the integral curves of the vector field $X_\red$ on $\bar\M$ given by
$$
\{ \cdot, \Ham_\red\}_\red = X_\red,
$$
where $\Ham_\red\in  C^\infty(\bar\M)$ such that $\rho^*\Ham_\red = \Ham_\subM$.

\bigskip

\subsection{Splitting adapted to the constraints and the vertical-symmetry case}

In order to study the failure of the Jacobi identity of $\{\cdot, \cdot \}_\red$ on $\bar\M$ (and afterwards to find a Poisson bracket in $\bar\M$),  in what follows, we will define a complement $W$ of the constraints $D$ in $TQ$ taking into account the symmetries. A complement $W$ was already defined and studied in \cite{Jac} for a free and proper action.  In our current situation (where we have a proper action) it takes a little more work to guarantee the existence of a smooth and constant rank complement.  

Consider a nonholonomic system given by the lagrangian $L$ of mechanical type and a (nonintegrable) distribution $D$ with a $G$-symmetry induced by a proper action of the Lie group $G$ on $Q$. 
Let us denote by $V$ the (generalized) distribution on $Q$ whose fibers $V_q$ are the tangent spaces to the orbits of $G$ in $Q$, that is $V_q = T_q(Orb_{G}(q))$ (as a consequence of the non-freeness of the $G$-action on $Q$, the distribution $V$ may vary its rank). 
Let $\mathfrak{g}$ be the Lie algebra associated to the Lie group $G$ and denote by 
\begin{equation}\label{Eq:g-action}
\Psi: \mathfrak{g} \to \mathfrak{X}(Q)
\end{equation}
the Lie algebra homomorphism such that $\Psi_q : \mathfrak{g} \to V_q\subset T_qQ$ is  $\Psi_q (\eta) = \eta_\subQ(q)$, where $\eta_\subQ(q)$ is the infinitesimal generator associated to $\eta\in\mathfrak{g}$ at $q\in Q$.  Observe that there might be a $q\in Q$ for which $\Psi_q: \mathfrak{g} \to V_q$ has a non trivial kernel, that is, the rank of $V$ may vary  as a consequence of the non-freeness of the action. 

We say that the $G$-symmetry verifies the {\it dimension assumption} (\cite{BKMM}) if 
\begin{equation} \label{Eq:DimAssumptionTQ}
 T_qQ = D_q +V_q \qquad \mbox{for each } q \in Q.
\end{equation}
Let $S$ be the (generalized) distribution on $Q$ given, at each $q\in Q$, by 
\begin{equation}\label{Def:S}
S_q := D_q \cap V_q. 
\end{equation}

\begin{proposition}
 The dimension assumption guarantees the existence of a constant rank smooth distribution $W$ on $Q$ such that $W_q \subset V_q$  for all $q\in Q$ and 
\begin{equation}\label{Eq:SplittingS}
V_q = S_q \oplus W_q.
\end{equation}
\end{proposition}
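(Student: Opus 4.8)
The plan is to build $W$ explicitly as the image of a smooth bundle map manufactured from a $G$-invariant metric, the decisive point being that the dimension assumption forces the ``defect'' $\dim V_q-\dim S_q$ to be constant even though $V$ and $S$ individually change rank. First I would record this constancy: since $D$ has constant rank and $T_qQ=D_q+V_q$, the Grassmann formula gives
$$
\dim V_q-\dim S_q=\dim V_q-\dim(D_q\cap V_q)=\dim Q-\dim D_q=:r,
$$
independent of $q$. Hence any fibrewise complement of $S$ in $V$ must have the constant rank $r=\operatorname{codim}D$, and the entire difficulty is to produce one that is \emph{smooth}, despite the non-freeness of the action.

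Next I would fix the $G$-invariant kinetic metric $\kappa$ and let $D^{\perp}$ be its $\kappa$-orthogonal complement of $D$; this is a smooth, constant rank ($=r$) distribution, with associated smooth orthogonal projection $P\colon TQ\to TQ$ satisfying $\operatorname{Im}P=D^{\perp}$ and $\ker P=D$. The crucial consequence of the dimension assumption is that $P|_{V_q}\colon V_q\to D^{\perp}_q$ is \emph{surjective} with kernel exactly $S_q$: given $u\in D^{\perp}_q\subset T_qQ=D_q+V_q$, write $u=d+v$ with $d\in D_q$, $v\in V_q$; then $P(v)=P(u)-P(d)=u$, while $\ker(P|_{V_q})=V_q\cap\ker P=V_q\cap D_q=S_q$. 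Thus $P|_V$ is a fibrewise surjection onto the honest constant rank bundle $D^{\perp}$, of constant corank, and it only remains to choose a smooth right inverse landing inside $V$.

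To construct such a right inverse smoothly I would route everything through the trivial bundle $Q\times\g$, where varying ranks cause no trouble. Fix any inner product on $\g$ and set $F_q:=P\circ\Psi_q\colon\g\to D^{\perp}_q$; this is a smooth family of linear maps, fibrewise surjective because $\Psi_q\colon\g\to V_q$ is onto and $P|_{V_q}$ is onto. As $F_q$ has constant rank $r$, the endomorphism $F_qF_q^{*}$ of $D^{\perp}$ is smooth, symmetric and positive definite, hence smoothly invertible, and I would set
$$
\sigma_q:=\Psi_q\circ F_q^{*}\circ(F_qF_q^{*})^{-1}\colon D^{\perp}_q\longrightarrow V_q .
$$
By construction $P\circ\sigma_q=F_qF_q^{*}(F_qF_q^{*})^{-1}=\operatorname{Id}_{D^{\perp}_q}$, so $\sigma_q$ is injective with image in $V_q$, and $W_q:=\sigma_q(D^{\perp}_q)$ is a smooth subbundle of $V$ of constant rank $r$. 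Finally $P|_{W_q}$ is injective whereas $S_q\subset\ker P$, giving $W_q\cap S_q=\{0\}$; together with $\dim W_q+\dim S_q=r+\dim S_q=\dim V_q$ this yields $V_q=S_q\oplus W_q$.

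The one genuinely delicate point — and the reason the free-action construction of \cite{Jac} must be modified — is precisely the smoothness of $W$: the naive choice $W_q=V_q\cap S_q^{\perp}$ is an honest complement of constant rank $r$, but its smoothness is not obvious since $V$ and $S$ are only generalized distributions, and the device above sidesteps this by realizing $W$ as the image of a bundle map between the honest vector bundles $Q\times\g$ and $D^{\perp}$. I would also note that, with $\kappa$ taken $G$-invariant, $D^{\perp}$ and $P$ are $G$-equivariant, so the $G$-invariant complement $W_q=V_q\cap S_q^{\perp}$ (of the same constant rank $r$) is the one natural for the reduction carried out in the sequel, its smoothness resting on the same constant-rank surjectivity of $P|_V$.
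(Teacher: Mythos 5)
Your proof is correct, and it rests on the same essential insight as the paper's: because $V$ and $S$ are only generalized distributions of varying rank, the complement must be manufactured as the image of a bundle map whose source has \emph{constant} rank, and the trivial bundle $\g\times Q$ is the place to do this. The implementations, however, differ. The paper is metric-free: from the dimension assumption it deduces that the bundle map $\psi(\xi,v)=\Psi(\xi)-v$ on $(\g\times Q)\oplus D$ is surjective, hence $\operatorname{Ker}\psi$, and with it $\g_\subSS$, is a subbundle of constant rank; it then takes an \emph{arbitrary} complementary subbundle $\g_\subWW$ of $\g_\subSS$ in $\g\times Q$ and sets $W:=\Psi(\g_\subWW)$, the inclusion $\operatorname{Ker}\Psi\subseteq\g_\subSS$ giving both the smoothness of $W$ and $V=S\oplus W$. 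You instead fix metrics and write down a pseudo-inverse. Note that $\operatorname{Ker}F_q=\g_\subSS|_q$ and $\operatorname{Im}F_q^{*}=(\operatorname{Ker}F_q)^{\perp}$, so your $W=\operatorname{Im}\sigma=\Psi\bigl(\g_\subSS^{\perp}\bigr)$ is exactly the paper's $W$ for the particular choice $\g_\subWW=\g_\subSS^{\perp}$. What your route buys is explicitness (a closed formula, no appeal to the existence of complementary subbundles) and the clean bookkeeping $\operatorname{rank}W=\operatorname{codim}D$ via the Grassmann formula. What the paper's formulation buys is the freedom in the choice of $\g_\subWW$, and this freedom is not cosmetic: the vertical-symmetry condition (Def.~\ref{Def:VertSym}), used throughout the rest of the paper, requires $\g_\subWW$ to be of the special form $\mathfrak{w}\times Q$ with $\mathfrak{w}\subseteq\g$ an $\mathrm{Ad}$-invariant subspace, and the metric-canonical choice $\g_\subSS^{\perp}$ need not have this form.

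One caveat on your closing remark: your argument does not establish smoothness of $V\cap S^{\perp}$, and in general $\operatorname{Im}\sigma\neq V\cap S^{\perp}$, since $\Psi_q$ does not carry the chosen inner product on $\g$ to $\kappa$. The pseudo-inverse device works for $F=P\circ\Psi$ precisely because its source $\g\times Q$ is an honest vector bundle; it cannot be applied verbatim to $P|_{V}$, whose source has varying rank, so the assertion that the smoothness of $V\cap S^{\perp}$ ``rests on the same constant-rank surjectivity'' is unjustified (indeed it contradicts your own earlier sentence that this smoothness is not obvious). Since the proposition asks neither for a $G$-invariant complement nor for this specific one, this does not affect your proof of the statement, but the sentence should be deleted or proved separately.
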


\begin{proof} 
Consider the vector bundle $(\g\times Q)\oplus D \to Q$ and let us define the map $\psi: (\g\times Q ) \oplus D \to TQ$  given by $\psi(\xi,v) = \Psi(\xi) -v$, where $\Psi:\g\times Q \to TQ$ is the vector bundle map defined by \eqref{Eq:g-action}. Observe that the dimension assumption ensures that $\textup{Im}\psi= TQ$, so  $\textup{Im}\psi$ has constant rank.  As a consequence, 
$$
\textup{Ker}\psi =\{(\xi,v) \ : \ \Psi(\xi) = v\}\subset(\g\times Q)\oplus D
$$ 
is a subbundle.  By projecting on the first factor, we see that $ \mathfrak{g}_\subSS:= \{ \xi\in \mathfrak{g}\times Q \ : \ \Psi(\xi) \in S\}$
 is a subbundle of $\g\times Q\to Q$.

Let us now choose any subbundle $\g_\subWW \to Q$ of  $\g\times Q \to Q$ such that 
$$
\g \times Q = \g_\subSS \oplus \g_\subWW,
$$
and define $W:=\Psi(\g_\subWW)$. 
Since $\textup{Ker}\, \Psi \subseteq \g_\subSS$, we see that $\Psi|_{\g_\subWW}$ is an isomorphism onto $W$, so $W$ is a subbundle of $TQ$. Moreover, $V = S\oplus W$.
%
%
%
%
%

\end{proof}

In summary, the splitting $V=S\oplus W$ is induced from a vector-bundle
splitting 
\begin{equation} \label{Eq:gS+gW}
(\g \times Q) = \mathfrak{g}_\subSS  \oplus \mathfrak{g}_\subWW,
\end{equation}
where $\mathfrak{g}_\subSS \to Q$ is the vector bundle 
\begin{equation} \label{Def:gS}
 {\mathfrak{g}_\subSS} = \{ (\xi,q)  \in \mathfrak{g}\times Q \ : \ \xi_\subQ(q) \in S_q\},
 \end{equation}
and $W:=\Psi(\g_\subWW)$.
As we saw, even if $S$ varies its rank, $\mathfrak{g}_\subSS\to Q$ has constant rank.

Note that there might be many subbundles $W\subset TQ$ satisfying $V= S\oplus W$.
Following \cite{Jac}, we call such $W$ a {\it vertical complement of the constraints} $D$, since
\begin{equation}\label{Eq:VertComplementTQ} 
TQ = D \oplus W, \qquad \mbox{and} \qquad W \subseteq V.
\end{equation}

Throughout this paper, {\bf we will assume that the nonholonomic system has a $G$-symmetry satisfying the dimension assumption}, so that it is possible to choose a vertical complement of the constraints.

\begin{definition}\label{Def:VertSym}
We say that a vertical complement of the constraints $W$ has the {\it vertical-symmetry condition} if there is a subspace  $\mathfrak{w}$ of $\mathfrak{g}$ such that $\mathrm{Ad}_g(\mathfrak{w})\subseteq \mathfrak{w}$ for all $g\in G$, and $\Psi:\g\times Q \to TQ$ restricts to an isomorphism $(\mathfrak{w}\times Q)\cong W$.

\end{definition}

In this case, the trivial bundle $\g_\subWW = \mathfrak{w} \times Q \to Q$ is a complement of $\g_\subSS\to Q$ in $\g \times Q \to Q$ (note that $\g_\subSS\to Q$ is also a trivial bundle).
Moreover, if $W=\Psi( \mathfrak{w} \times Q)$ for a subspace  $\mathfrak{w}\subseteq \g$, then the $\mathrm{Ad}$-invariance of $\mathfrak{w}$ is equivalent to the $G$-invariance of $W$.


\begin{remark}
 \begin{enumerate}
 
 \item[$(i)$] The vertical-symmetry condition is equivalent to asking for the existence of a normal subgroup $G_\subWW$ of $G$ such that the action of $G_\subWW$ on $Q$ is locally free and that the associated vertical space is $W$ (i.e., $W_q=T_q(Orb_{G_\W}(q))$\,).

 \item[$(ii)$] Choosing a vertical complement $W$ with the vertical-symmetry condition often simplifies the theory, as we showed in \cite{Jac,PO2015}.  In this article, we study examples that admit such a complement that lead to a simpler interpretation of their geometry. 
  
 \end{enumerate}
\end{remark}
%

\bigskip

We will now lift the splitting \eqref{Eq:VertComplementTQ} to $T\M$.
First, let $\V$ be the vertical space associated to the $G$-action on $\M$, that is, at each $m\in\M$, $\V_m = T_m(Orb_G(m))$.  
The subalgebra $\mathfrak{w}$ induces the ($G$-invariant and constant rank) distribution $\W$ on $\M$ given, at each $m\in \M$, by 
\begin{equation}\label{Eq:WonTM}
\W_m := \textup{span}\{ \eta_\subM(m)\ : \ \eta \in \mathfrak{w} \},
\end{equation}
where $\eta_\subM$ is the infinitesimal generator of the element $\eta\in \mathfrak{w}$.
It is clear that $\W$ is a {\it vertical complement of the constraint distribution} $\C$ satisfying the vertical-symmetry condition: 
\begin{equation}\label{Eq:VertComplementTM}
T_m\M = \C_m \oplus \W_m \qquad \mbox{for} \quad \W_m \subset \V_m,  
\end{equation}
for each $m\in \M$. Therefore, if we denote by $\S$ the distribution on $\M$ given by $\S_m := \C_m \cap \W_m$, then 
\begin{equation}\label{Eq:SplittingSS}
\V_m = \S_m \oplus \W_m.
\end{equation}
Observe that $\W_m$ and $W_{\tau(m)}$ are isomorphic through the canonical projection $\tau:\M\to Q$. However $\S_m$ and $S_{\tau(m)}$ (or $\V_m$ and $V_{\tau(m)}$) are not necessarily isomorphic since there might be a $\bar{q}\in Q$ and $\xi\in\Gamma(\g_{\subSS})$ such that $\xi_\subQ(\bar{q}) = 0$ but $\xi_\subM(\bar{m}) \neq 0$ for $\bar{m}\in \M$ such that $\tau(\bar{m})=\bar{q}$.

\subsection{The 2-form $\langle {\mathcal J}, {\mathcal K}_\subW\rangle$ and the failure of the Jacobi identity}

Following \cite{Jac}, the vertical complement $\W$ given in \eqref{Eq:WonTM} induces a 2-form $\langle{\mathcal J}, {\mathcal K}_\subW\rangle$ on $\M$ (see Def.~\ref{Def:JK} below) that will characterize the failure of the almost Poisson brackets $\{\cdot, \cdot \}_\nh$ and $\{\cdot, \cdot\}_\red$.

Consider a nonholonomic system $(\M,\pi_\nh, \Ham_\subM)$ with a $G$-symmetry. Suppose that the nonholonomic system verifies the dimension assumption and pick a vertical complement $W$ of the constraints $D$. From now on, we assume that the complement $W$ satisfies the vertical-symmetry condition (Def.~\ref{Def:VertSym}).  

Let $\W$ be the $G$-invariant vertical complement on $\M$ defined in \eqref{Eq:WonTM} and we denote by $P_\subC:T\M \to \C$ and $P_\subW:T\M \to \W$ the projections associated to decomposition \eqref{Eq:VertComplementTM}. Then we consider the map ${\mathcal A}_\subW: T\M \to \mathfrak{g}$ defined by 
$$
{\mathcal A}_\subW (v_m) = \xi(m) \quad \mbox{if and only if} \quad P_\subW (v_m) = \xi_\subM(m),
$$
for each $m\in\M$.
The $\W${\it-curvature} is the $\mathfrak{g}$-valued 2-form on $\M$ given by
\begin{equation} \label{Def:KinCurv-2form}
\mathcal{K}_\subW(X,Y):=  d {\mathcal A}_\subW(P_\subC (X),P_\subC (Y)) \qquad \mbox{ for } X,Y \in T\M.
\end{equation}
On the other hand, for $\iota:\M\to T^*Q$ the inclusion, we denote by
\begin{equation} \label{Def:Momentum}
\mathcal{J} := \iota^*J_\subQ : \M\to \mathfrak{g}^*
\end{equation}
the restriction to $\M$ of the canonical momentum map $J_\subQ:T^*Q\to \mathfrak{g}^*$.

\begin{definition}[\cite{Jac}] \label{Def:JK}
The 2-form $\langle \mathcal{J}, \mathcal{K}_\subW \rangle$ on $\M$ is defined by the natural pairing between the function $\mathcal{J}:\M\to \mathfrak{g}^*$ and the $\mathfrak{g}$-valued 2-form $\mathcal{K}_\subW$ defined in \eqref{Def:KinCurv-2form} and \eqref{Def:Momentum} respectively. 
\end{definition}

Recall that the $G$-action on $Q$ (and on $\M$) is proper but not necessarily free.  However, the vertical-symmetry condition enables the definition of a $G$-invariant vertical complement $\W$ that is isomorphic to $W$ and thus the definition and properties of the $\W$-curvature $\mathcal{K}_\subW$ follow from \cite{Jac}. In particular, we obtain that the 2-form $\langle \mathcal{J}, \mathcal{K}_\subW \rangle$ is $G$-invariant.

As it has been studied in \cite{Jac}, the 2-form  $\langle \mathcal{J}, \mathcal{K}_\subW \rangle$ encodes the failure of the Jacobi identity of the nonholonomic bracket $\pi_\nh$: 
\begin{equation}\label{Eq:Jac-pi_nh} 
\tfrac{1}{2}[\pi_\nh , \pi_\nh]  = - \pi_\nh^\sharp ( d\langle {\mathcal J}, \mathcal{K}_\subW \rangle ) - \Psi_\nh,
\end{equation}
where $[ \cdot , \cdot ]$ is the Schouten bracket\footnote{If $\pi$ is a bivector field on $\M$, the Schouten bracket $[\pi,\pi]$ is a 3-vector field such that, for $f,g,h\in C^\infty(\M)$,  $\tfrac{1}{2}[\pi,\pi](df,dg,dh) = cyclic [\, \{f,\{g,h\}\}\, ] $, where $\{\cdot,\cdot\}$ is the bracket associated to $\pi$ (and where ``$cyclic [\, \cdot \, ]$'' denotes the cyclic sum), see \cite{Book:MR}.} and $\Psi_\nh$ is a 3-vector field given, for 1-forms $\alpha, \beta,\gamma$ on $\M$, by 
$$
\Psi_\nh(\alpha, \beta, \gamma) = cyclic \left[ \gamma\left(\left(\mathcal{K}_\subW( \pi_\nh^\sharp(\alpha), \pi_\nh^\sharp(\beta) ) \right)_\subM \right) \right].
$$
(Recall also that if $\Phi$ is a 3-form on $\M$, then for $\alpha,\beta,\gamma$ 1-forms on $\M$, we have $\pi_\nh^\sharp(\Phi)(\alpha,\beta,\gamma) = - \Phi (\pi_\nh^\sharp(\alpha), \pi_\nh^\sharp(\beta), \pi_\nh^\sharp(\gamma))$).

From Section \ref{Ss:Symmetries}, the reduced bracket $\{\cdot, \cdot\}_\red$ on  $C^\infty(\bar\M)$ defined in \eqref{Eq:ReducedBracket} is an almost Poisson bracket.
However, we can also use \cite{Jac} to characterize the failure of the Jacobi identity of $\{\cdot, \cdot\}_\red$ on $\bar\M$ even when $\bar\M$ is a differential space.

\begin{proposition}\label{P:Red-Jac} If the proper $G$-action on $\M$ satisfies the dimension assumption and the vertical complement $W$ satisfies the vertical-symmetry condition then, for $f,g,h$ smooth functions on $\bar\M$, we have  
\begin{equation}\label{Eq:Jac-pi_red}
cyclic \left[ \, \{f,\{g,h\}_{\emph\red}\}_{\emph\red} \circ \rho \, \right]   =  d\langle {\mathcal J}, \mathcal{K}_\subW \rangle ( \pi_{\emph\nh}^\sharp(d\rho^*f), \pi_{\emph\nh}^\sharp(d\rho^*g),\pi_{\emph\nh}^\sharp(d\rho^*h) ).
\end{equation}
\end{proposition}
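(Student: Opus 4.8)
The plan is to pull the whole computation back from the differential space $\bar\M$ to $\M$ through the quotient map $\rho$, where every object is honestly smooth, and then to feed in the identity \eqref{Eq:Jac-pi_nh} for $\pi_\nh$. First I would apply the defining relation \eqref{Eq:ReducedBracket} of the reduced bracket twice: since $\rho^*(\{g,h\}_\red)=\{\rho^*g,\rho^*h\}_\nh$, iterating gives $\{f,\{g,h\}_\red\}_\red\circ\rho=\{\rho^*f,\{\rho^*g,\rho^*h\}_\nh\}_\nh$, so the left-hand side of \eqref{Eq:Jac-pi_red} is the cyclic sum $cyclic[\{\rho^*f,\{\rho^*g,\rho^*h\}_\nh\}_\nh]$. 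By the characterization of the Schouten bracket recalled in the footnote, this cyclic sum equals $\tfrac{1}{2}[\pi_\nh,\pi_\nh](d\rho^*f,d\rho^*g,d\rho^*h)$.

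Next I would substitute \eqref{Eq:Jac-pi_nh} and evaluate the two resulting terms on $(d\rho^*f,d\rho^*g,d\rho^*h)$. For the term $-\pi_\nh^\sharp(d\langle{\mathcal J},\mathcal{K}_\subW\rangle)$ I would use the identity $\pi_\nh^\sharp(\Phi)(\alpha,\beta,\gamma)=-\Phi(\pi_\nh^\sharp(\alpha),\pi_\nh^\sharp(\beta),\pi_\nh^\sharp(\gamma))$ recorded just after the definition of $\Psi_\nh$; the two minus signs cancel and this term is exactly $d\langle{\mathcal J},\mathcal{K}_\subW\rangle(\pi_\nh^\sharp(d\rho^*f),\pi_\nh^\sharp(d\rho^*g),\pi_\nh^\sharp(d\rho^*h))$, i.e.\ the right-hand side of \eqref{Eq:Jac-pi_red}.

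What remains, and what I regard as the heart of the matter, is to show that the correction term vanishes on the pullbacks of invariant functions, namely $\Psi_\nh(d\rho^*f,d\rho^*g,d\rho^*h)=0$. By definition this is a cyclic sum of terms $d\rho^*h\big((\mathcal{K}_\subW(\pi_\nh^\sharp(d\rho^*f),\pi_\nh^\sharp(d\rho^*g)))_\subM\big)$. The key observation is that for any $\zeta\in\mathfrak{g}$ the vector $(\zeta)_\subM$ is an infinitesimal generator of the $G$-action, hence tangent to the $G$-orbits in $\M$; thus $(\mathcal{K}_\subW(\cdots))_\subM$ is a vertical vector field. Since $\rho^*h$ is $G$-invariant, its differential annihilates every vertical vector, so each term of the cyclic sum is zero. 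Assembling the three steps yields \eqref{Eq:Jac-pi_red}.

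I do not anticipate a real obstacle: the only delicate point is that $f,g,h$ live on the differential space $\bar\M$ rather than on a manifold, but this is harmless because the entire argument takes place on $\M$ using the genuinely smooth $G$-invariant functions $\rho^*f,\rho^*g,\rho^*h$, and both \eqref{Eq:Jac-pi_nh} and the verticality of the values of $\mathcal{K}_\subW$ hold verbatim on $\M$ by the discussion following Def.~\ref{Def:JK}. One should only keep careful track of sign conventions when combining the footnote formula for the Schouten bracket with the two $\sharp$-identities.
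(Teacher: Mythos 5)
Your proposal is correct, and each step checks out against the conventions fixed in the paper. The iteration of \eqref{Eq:ReducedBracket} is legitimate because the nonholonomic bracket is $G$-invariant, so $\{\rho^*g,\rho^*h\}_\nh$ is again the pullback of a function on $\bar\M$; the footnote's characterization of the Schouten bracket then converts the cyclic sum into $\tfrac{1}{2}[\pi_\nh,\pi_\nh](d\rho^*f,d\rho^*g,d\rho^*h)$; the sign bookkeeping in combining \eqref{Eq:Jac-pi_nh} with the identity $\pi_\nh^\sharp(\Phi)(\alpha,\beta,\gamma)=-\Phi(\pi_\nh^\sharp(\alpha),\pi_\nh^\sharp(\beta),\pi_\nh^\sharp(\gamma))$ is right; and the vanishing of $\Psi_\nh$ on pullbacks is exactly the observation that values of $\mathcal{K}_\subW$ generate vertical vectors, which differentials of $G$-invariant functions annihilate. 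Note that the paper itself gives no proof of Proposition \ref{P:Red-Jac}: it is stated as a consequence of the machinery of \cite{Jac}, with the nonstandard point being only that $\bar\M$ is a differential space rather than a manifold. Your argument is precisely the one that fills this gap using only ingredients displayed in the paper, and it makes transparent where each hypothesis enters: the dimension assumption and vertical-symmetry condition are what allow $\W$, $\mathcal{K}_\subW$, and identity \eqref{Eq:Jac-pi_nh} to be defined at all (per the discussion following Def.~\ref{Def:JK}), while the singular nature of the quotient is indeed harmless because the whole computation happens upstairs on $\M$ with the smooth invariant functions $\rho^*f,\rho^*g,\rho^*h$. The one point worth stating explicitly, which you use silently, is that verticality for a non-free action still means $(\zeta)_\subM(m)\in\V_m=T_m(Orb_G(m))$ for every $\zeta\in\mathfrak{g}$, so invariant functions kill these vectors pointwise even where the rank of $\V$ jumps.
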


Therefore, the 3-form $d\langle {\mathcal J}, \mathcal{K}_\subW \rangle$ tells us if $\{\cdot, \cdot\}_\red$ is Poisson, or not:  If the right hand side of \eqref{Eq:Jac-pi_red} is zero for all $f,g,h\in C^\infty(\bar\M)$, then $\{\cdot, \cdot\}_\red$ is Poisson.

\subsection{Symmetries and conserved quantities}\label{Ss:Symm-Conservation}

In this section we study conserved quantities of the nonholonomic system $(\M,\pi_\nh, \Ham_\subM)$ that appear as a consequence of the presence of symmetries.

The {\it nonholonomic momentum map} $J^{\nh} : \M \to \mathfrak{g}_\subSS^*$ is given, for $m \in \M$ and $\xi \in \Gamma(\mathfrak{g}_\subSS)$, by
$$
\langle J^{\nh}(m) , \xi_{\tau(m)} \rangle = {\bf i}_{\xi_\M} \Theta_\subM (m),
$$
where $\Theta_\subM$ is the pullback to $\M$ of the canonical 1-form $\Theta_\subQ$ on $T^*Q$: $\Theta_\subM := \iota^*\Theta_\subQ$ (see \cite{BKMM}) and where $\xi_\subM (m) := (\xi_{\tau(m)})_\subM(m)$.

            

\begin{definition}[\cite{FSG2008}]\label{Def:Gauge-Momentum}
A {\it horizontal gauge momentum} of the nonholonomic vector field $X_\nh$ is a function $J \in C^\infty(\M)$ for which there exists $\chi\in\Gamma(\g_\subSS)$ such that 
$$
J= \langle J^\nh , \xi \rangle = {\bf i}_{\chi_\subM}\Theta_\subM, 
$$
and in such a way that $X_{\nh}(J) = 0$. The section $\chi\in \Gamma(\g_\subSS)$ is called the {\it horizontal gauge symmetry}. 
\end{definition}

As it was studied in many references \cite{BGMConservation, BKMM, Crampin-Mestdag, FSG2009, FSG2008, FS2015,  Jovanovic, Marle95, sniatycki, Zenkov}, the problem in finding a horizontal gauge momentum $J$ of $X_\nh$ resides in finding the associated horizontal gauge symmetry $\xi$.  In order to study properties of a gauge momentum and its associated gauge symmetry we recall the definition of the 2-form $\Omega_{\mbox{\tiny{$\mathcal{J}\!\mathcal{K}$}}}$ on $\M$ and its relation with the nonholonomic momentum map given in \cite{Jac}.

Given a ($G$-invariant) vertical complement of the constraints $W$ as in \eqref{Eq:VertComplementTQ}, we define the ($G$-invariant) 2-form $\Omega_{\mbox{\tiny{$\mathcal{J}\!\mathcal{K}$}}}$ on $\M$ by 
\begin{equation} \label{Eq:OmegaJK}
 \Omega_{\mbox{\tiny{$\mathcal{J}\!\mathcal{K}$}}}:= \Omega_\subM + \langle {\mathcal J}, \mathcal{K}_\subW \rangle.
\end{equation}

\begin{remark} \label{R:SemiBasic}
We say that a 2-form $B$ is {\it semi-basic with respect to the projection} $\tau:\M\to Q$ if ${\bf i}_X B = 0$ for all $X\in \mathfrak{X}(\M)$ such that $T\tau(X) = 0$. It was observed in \cite{PL2011,Naranjo2008} that if $B$ is semi-basic, then  $(\Omega_\subM - B)|_\C$ is still nondegenerate.  In our case,   since $\langle {\mathcal J}, \mathcal{K}_\subW \rangle$ is semi-basic, the 2-form $\Omega_{\mbox{\tiny{$\mathcal{J}\!\mathcal{K}$}}}$ is nondegenerate on $\C$ (see \cite[Sec.5.1]{Jac}).

\end{remark}

\begin{proposition}[\cite{Jac}] \label{Prop:OmegaJK-Moment}
 For $\eta\in \mathfrak{g}$, let us denote by $\xi= P_{\mathfrak{g}_\subSS}(\eta)\in \Gamma(\g_\subSS)$ where $P_{\mathfrak{g}_\subSS}: \mathfrak{g} \times Q \to \mathfrak{g}_\subSS$ is the projection associated to decomposition \eqref{Eq:gS+gW}.  If the vertical complement $W$ satisfies the vertical-symmetry condition then 
 $$
 {\bf i}_{\xi_\subM} \Omega_{\mbox{\tiny{$\mathcal{J}\!\mathcal{K}$}}} |_\C = d \langle J^{\emph\nh}, \xi\rangle|_\C.
 $$
\end{proposition}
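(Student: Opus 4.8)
The plan is to unwind the definitions of all three objects appearing in the identity and check the equality by pairing both sides with an arbitrary $v_m \in \C_m$. Fix $\eta \in \g$ and write $\xi = P_{\g_\subSS}(\eta) \in \Gamma(\g_\subSS)$, so that by the splitting \eqref{Eq:gS+gW} we have $\eta = \xi(q) + \zeta(q)$ at $q=\tau(m)$, where $\zeta = P_{\g_\subWW}(\eta) \in \Gamma(\g_\subWW)$ takes values in $\mathfrak{w}$ (here the vertical-symmetry condition is essential: it guarantees $\g_\subWW = \mathfrak{w}\times Q$ and that the complement is well behaved). Since $\Omega_{\mbox{\tiny{$\mathcal{J}\!\mathcal{K}$}}} = \Omega_\subM + \langle \mathcal J, \mathcal K_\subW\rangle$ by \eqref{Eq:OmegaJK}, I would split the left-hand side into the canonical piece ${\bf i}_{\xi_\subM}\Omega_\subM|_\C$ and the curvature correction ${\bf i}_{\xi_\subM}\langle \mathcal J, \mathcal K_\subW\rangle|_\C$, and treat each separately.

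For the canonical piece, the standard cotangent-bundle identity on $T^*Q$ reads ${\bf i}_{\eta_\subQ^*}\Omega_\subQ = d\langle J_\subQ, \eta\rangle$ for the cotangent lift of $\eta$, where $J_\subQ$ is the canonical momentum map. Pulling back along $\iota:\M \to T^*Q$ and using $\mathcal J = \iota^* J_\subQ$ from \eqref{Def:Momentum}, I would obtain ${\bf i}_{\eta_\subM}\Omega_\subM = d\langle \mathcal J, \eta\rangle$ on $\M$, where $\eta_\subM$ is the infinitesimal generator on $\M$. Because $\xi$ is now a $Q$-dependent section rather than a constant algebra element, I must account for the difference between ${\bf i}_{\xi_\subM}\Omega_\subM$ and $d\langle \mathcal J, \xi\rangle = d\langle J^\nh, \xi\rangle$; the correction involves the derivative of $\xi$ along $Q$ paired against $\mathcal J$. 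The key observation is that this correction term is precisely the expression that the $\W$-curvature $\mathcal K_\subW$ is designed to absorb after restriction to $\C$, which is what the $\langle \mathcal J, \mathcal K_\subW\rangle$ summand supplies.

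For the curvature correction, I would evaluate ${\bf i}_{\xi_\subM}\langle \mathcal J, \mathcal K_\subW\rangle(v_m)= \langle \mathcal J(m), \mathcal K_\subW(\xi_\subM(m), v_m)\rangle$ using Definition \eqref{Def:KinCurv-2form}. Since $\xi_\subM(m)\in\S_m\subset\C_m$ (because $\xi\in\Gamma(\g_\subSS)$ means $\xi_\subQ\in S = D\cap V$, whence $P_\subC(\xi_\subM)=\xi_\subM$) and $v_m\in\C_m$, both arguments survive the projection $P_\subC$ in \eqref{Def:KinCurv-2form}, so $\mathcal K_\subW(\xi_\subM, v_m) = d\mathcal A_\subW(\xi_\subM, v_m)$. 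I would then expand this exterior derivative via the Cartan formula, noting $\mathcal A_\subW(\xi_\subM)=0$ since $P_\subW(\xi_\subM)=0$, reducing $d\mathcal A_\subW(\xi_\subM,v_m)$ to a Lie-derivative type term $-v_m(\mathcal A_\subW(\tilde\xi_\subM)) - \mathcal A_\subW([\xi_\subM,\tilde v])$ for suitable extensions; the part involving the variation of the $\g_\subWW$-component of $\eta$ along the flow is exactly what reconstitutes the missing piece from the canonical computation.

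The hard part will be bookkeeping the $Q$-dependence of the section $\xi$ correctly: the decomposition $\eta = \xi + \zeta$ has $\zeta$ varying over $Q$, and when I push everything to $\M$ the nonconstancy of $\zeta$ feeds into both the canonical term (through $d$ of the pairing against the moving section) and the curvature term (through $\mathcal A_\subW$ tracking the $\W$-component). Matching these two contributions term by term, and in particular verifying that the spurious pieces cancel so that only $d\langle J^\nh,\xi\rangle|_\C$ remains, is the crux. I expect this to follow cleanly from the construction of $\mathcal K_\subW$ in \cite{Jac} together with the vertical-symmetry condition, which ensures $\mathfrak{w}$ is $\mathrm{Ad}$-invariant and hence that the $\g_\subWW$-component transforms tensorially; the cleanest route is likely to cite the corresponding identity of \cite{Jac} proved in the free case and verify it survives verbatim under the vertical-symmetry condition, since Remark after Definition \ref{Def:VertSym} already reduces the present (non-free) setting to the free action of a normal subgroup $G_\subWW$ whose vertical space is exactly $\W$.
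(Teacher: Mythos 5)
You have the right skeleton, but note first that there is no proof in the paper to compare against: the proposition is imported from \cite{Jac}, the only justification offered being the remark after Definition \ref{Def:JK} that the vertical-symmetry condition makes $\W$, ${\mathcal A}_\subW$ and $\mathcal{K}_\subW$ well defined for a proper, non-free action. Your direct computation is therefore more self-contained than the paper's treatment, and its intermediate claims all check out: writing $\zeta := \eta - \xi \in \Gamma(\g_\subWW)$, i.e. $\zeta = \zeta^a e_a$ for a basis $\{e_a\}$ of $\mathfrak{w}$ and $\zeta^a\in C^\infty(Q)$, the identity ${\bf i}_{\eta_\subM}\Omega_\subM = d\langle \mathcal{J},\eta\rangle$ for the constant element $\eta$ yields
$$
{\bf i}_{\xi_\subM}\Omega_\subM \;=\; d\langle J^{\nh},\xi\rangle + \mathcal{J}_a\, d\zeta^a, \qquad \mathcal{J}_a := \langle \mathcal{J}, e_a\rangle;
$$
moreover $\xi_\subM$ takes values in $\S\subset\C$, so both projections $P_\subC$ in \eqref{Def:KinCurv-2form} drop out, and Cartan's formula with any $\C$-valued extension $\tilde v$ of $v_m\in\C_m$ gives $\mathcal{K}_\subW(\xi_\subM,\tilde v) = -{\mathcal A}_\subW([\xi_\subM,\tilde v])$, because $\C = \ker {\mathcal A}_\subW$.

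The gap is that the step you yourself call the crux --- that $\langle\mathcal{J},{\mathcal A}_\subW([\xi_\subM,\tilde v])\rangle$ reproduces exactly $\mathcal{J}_a\, d\zeta^a(\tilde v)$, cancelling the correction above --- is asserted rather than proved, and your fallback would not prove it. The missing argument is short: from $\xi_\subM = \eta_\subM - \zeta^a (e_a)_\subM$ one gets
$$
[\xi_\subM,\tilde v] \;=\; [\eta_\subM,\tilde v] \;-\; \zeta^a\,[(e_a)_\subM,\tilde v] \;+\; \tilde v(\zeta^a)\,(e_a)_\subM,
$$
and since $\C$ is $G$-invariant while $\eta$ and the $e_a$ are \emph{constant} elements of $\g$, the brackets $[\eta_\subM,\tilde v]$ and $[(e_a)_\subM,\tilde v]$ of fundamental vector fields with the $\C$-valued field $\tilde v$ are again $\C$-valued, hence annihilated by ${\mathcal A}_\subW$; what survives is ${\mathcal A}_\subW([\xi_\subM,\tilde v]) = \tilde v(\zeta^a)\, e_a$, whose pairing with $\mathcal{J}$ is exactly the cancelling term. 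Nothing in this argument uses freeness, which is the real reason the proposition survives in the proper setting. By contrast, your suggested shortcut --- invoking the free-case identity of \cite{Jac} because the remark after Def.~\ref{Def:VertSym} produces a normal subgroup $G_\subWW$ acting (only locally) freely with vertical space $\W$ --- cannot work as stated: the proposition concerns $\g_\subSS$, $P_{\g_\subSS}$ and $J^{\nh}$, objects attached to the full group $G$, which are trivial for the $G_\subWW$-action alone (there the analogue of $S$ is $D\cap W = 0$), and the freeness hypothesis of \cite{Jac} is on $G$ itself, not on a subgroup, so the identity cannot be transported ``verbatim'' along that route.
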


In other words, Proposition \ref{Prop:OmegaJK-Moment} says that each $\eta\in \g$ induces a horizontal gauge momentum $j = \langle J^\nh, P_{\mathfrak{g}_\subSS}(\eta) \rangle$  of any vector field $Y_f$ given by ${\bf i}_{Y_f} \Omega_{\mbox{\tiny{$\mathcal{J}\!\mathcal{K}$}}} |_\C = d f|_\C$ for $f\in C^\infty(\M)^G$.

On the other hand, it is important to note that the function $j= \langle J^\nh, P_{\mathfrak{g}_\subSS}(\eta) \rangle$ is not necessarily a first integral of the nonholonomic dynamics.

\begin{proposition}\label{P:Obstruction} If $W$ satisfies the vertical-symmetry condition, then for $\xi= P_{\mathfrak{g}_\subSS}(\eta)\in \Gamma(\g_\subSS)$, $\eta\in \mathfrak{g}$ and $j=\langle J^{\emph\nh}, \xi \rangle \in C^\infty(\M)$ we have that
$$
X_{\emph\nh}(j)  =  \langle \mathcal{J}, \mathcal{K}_\subW \rangle (\xi_\subM, X_{\emph\nh}).
$$
\end{proposition}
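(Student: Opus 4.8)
The plan is to compute $X_\nh(j)$ by recognizing $dj$ as the interior product of $\xi_\subM$ with the 2-form $\Omega_{\mbox{\tiny{$\mathcal{J}\!\mathcal{K}$}}}$ along the constraint distribution $\C$ --- which is precisely the content of Proposition \ref{Prop:OmegaJK-Moment} --- and then to split off the $\Omega_\subM$ contribution using the dynamics equation \eqref{Eq:Dyn}. Since $X_\nh$ is a section of $\C$, I would first write $X_\nh(j) = dj(X_\nh) = (dj|_\C)(X_\nh)$, where the last equality uses that the point-wise restriction to $\C$ leaves the value on the $\C$-valued field $X_\nh$ unchanged. Applying Proposition \ref{Prop:OmegaJK-Moment} to $\xi = P_{\mathfrak{g}_\subSS}(\eta)$ then gives $dj|_\C = {\bf i}_{\xi_\subM}\Omega_{\mbox{\tiny{$\mathcal{J}\!\mathcal{K}$}}}|_\C$, so that
\begin{equation*}
X_\nh(j) = \Omega_{\mbox{\tiny{$\mathcal{J}\!\mathcal{K}$}}}(\xi_\subM, X_\nh).
\end{equation*}

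Using the definition \eqref{Eq:OmegaJK}, namely $\Omega_{\mbox{\tiny{$\mathcal{J}\!\mathcal{K}$}}} = \Omega_\subM + \langle \mathcal{J}, \mathcal{K}_\subW\rangle$, the problem reduces to showing that the $\Omega_\subM$-term vanishes, i.e. $\Omega_\subM(\xi_\subM, X_\nh) = 0$. To bring the dynamics equation \eqref{Eq:Dyn} into play I would first verify that $\xi_\subM$ is itself a section of $\C$: by the definition \eqref{Def:C} this amounts to $T\tau(\xi_\subM) \in D$, and since $T\tau(\xi_\subM(m)) = \xi_\subQ(\tau(m))$ while $\xi \in \Gamma(\mathfrak{g}_\subSS)$ forces $\xi_\subQ(\tau(m)) \in S_{\tau(m)} \subset D_{\tau(m)}$ by \eqref{Def:gS} and \eqref{Def:S}, this holds. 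With both $\xi_\subM$ and $X_\nh$ valued in $\C$, equation \eqref{Eq:Dyn} yields $\Omega_\subM(X_\nh, \xi_\subM) = d\Ham_\subM(\xi_\subM) = \xi_\subM(\Ham_\subM)$.

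The step I expect to be the main obstacle is the argument that $\xi_\subM(\Ham_\subM) = 0$, which needs care because $\xi$ is a section of $\mathfrak{g}_\subSS$ (a pointwise-varying Lie algebra element) rather than a fixed element of $\mathfrak{g}$, so one cannot directly invoke invariance of $\Ham_\subM$ under a single one-parameter subgroup. I would resolve this pointwise: at each $m\in\M$ one has, by definition, $\xi_\subM(m) = (\xi_{\tau(m)})_\subM(m)$, the value at $m$ of the infinitesimal generator of the fixed element $\xi_{\tau(m)} \in \mathfrak{g}$. Since $\Ham_\subM$ is $G$-invariant, the infinitesimal generator of any fixed element of $\mathfrak{g}$ annihilates $\Ham_\subM$ everywhere on $\M$; evaluating this at $m$ gives $\xi_\subM(m)(\Ham_\subM) = (\xi_{\tau(m)})_\subM(m)(\Ham_\subM) = 0$. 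Hence $\Omega_\subM(\xi_\subM, X_\nh) = 0$, and combining with the first display leaves $X_\nh(j) = \langle \mathcal{J}, \mathcal{K}_\subW\rangle(\xi_\subM, X_\nh)$, as claimed.
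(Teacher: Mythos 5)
Your proposal is correct and follows essentially the same route as the paper's proof: apply Proposition \ref{Prop:OmegaJK-Moment} to get $X_\nh(j)=\Omega_{\mbox{\tiny{$\mathcal{J}\!\mathcal{K}$}}}(\xi_\subM,X_\nh)$, split off $\Omega_\subM$ via \eqref{Eq:OmegaJK}, use the dynamics equation \eqref{Eq:Dyn} to identify $\Omega_\subM(\xi_\subM,X_\nh)$ with $-d\Ham_\subM(\xi_\subM)$, and kill this term by $G$-invariance of $\Ham_\subM$. The only difference is that you spell out two points the paper leaves implicit --- that $\xi_\subM$ is a section of $\C$ (so \eqref{Eq:Dyn} applies) and the pointwise argument that the generator of a varying section of $\g_\subSS$ still annihilates the invariant hamiltonian --- both of which are correct and welcome clarifications.
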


\begin{proof} From Proposition \ref{Prop:OmegaJK-Moment},
 \begin{equation*}
  \begin{split}
 X_\nh( \langle J^{\nh}, \xi\rangle) = & \ \Omega_{\mbox{\tiny{${\mathcal J}\!{\mathcal K}$}}}(\xi_\subM, X_\nh) = \Omega_\subM (\xi_\subM, X_\nh) + \langle{\mathcal J}, \mathcal{K}_\subW \rangle (\xi_\subM, X_\nh) \\
   = & \ -d\Ham_\subM (\xi_\subM)  + \langle{\mathcal J}, \mathcal{K}_\subW \rangle (\xi_\subM, X_\nh) =  \langle{\mathcal J}, \mathcal{K}_\subW \rangle (\xi_\subM, X_\nh),
  \end{split}
 \end{equation*}
where we used the $G$-invariance of the hamiltonian $\Ham_\subM$. 

\end{proof}

Let us now consider $\{\eta_{k+1}, ... , \eta_N\}$ a basis of the Lie algebra $\mathfrak{w}$ and complete the basis so that $\{\eta_1, ... , \eta_k, \eta_{k+1} ,..., \eta_N\}$ is a basis of $\mathfrak{g}$. We denote by 
\begin{equation}\label{Eq:basis-xi-i}
\xi_i := P_{\mathfrak{g}_\subSS} (\eta_i) \in \Gamma(\g_\subSS)
\end{equation}
for $i=1,...,k$ (observe that $P_{\mathfrak{g}_\subSS} (\eta_j) = 0$ for $j \geq k+1$). Then $\{\xi_1,...,\xi_k\}$ is a (global) basis of sections of $\g_\subSS$.
Let us denote $j_i\in C^\infty(\M)$, $i=1,...,k$, the functions given by 
\begin{equation}\label{Def:j}
j_i :=  \langle J^{\nh}, \xi_i \rangle = \langle J^{\nh}, P_{\g_\S}(\eta_i) \rangle.
\end{equation}
Observe that each $j_i\in C^\infty(\M)$ is linear on the fibers (linear in momenta variables).  Then, for a nonholonomic system $(\M, \pi_\nh,\Ham_\subM)$ with a $G$-symmetry and a complement of the constraints satisfying the vertical-symmetry condition, we obtain

\begin{proposition} \label{P:gaugeMomentum}
The function $J$ on $\M$ is a horizontal gauge momentum of $X_{\emph\nh}$ if and only if  $J = \Sigma_{i=1}f_i j_i$ for $j_i \in C^\infty(\M)$ defined in \eqref{Def:j} and  $f_i\in C^\infty(Q)$ satisfying the following first-order linear partial differential equation 
\begin{equation}\label{Eq:DifEq}
 \Sigma_{i=1} \left(\, j_i df_i (X_{\emph\nh}) + f_i \langle {\mathcal J}, \mathcal{K}_\subW\rangle ((\xi_i)_\subM, X_{\emph\nh}) \, \right)= 0,
 \end{equation}
 where $\xi_i\in \Gamma(\g_\subSS)$ are  given in \eqref{Eq:basis-xi-i}. The associated horizontal gauge symmetry is $\chi=\Sigma_{i=1} f_i \xi_i$.
\end{proposition}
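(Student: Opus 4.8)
The plan is to characterize horizontal gauge momenta by expanding the candidate section in the global basis $\{\xi_1,\dots,\xi_k\}$ of $\Gamma(\g_\subSS)$ and then imposing the conservation condition $X_\nh(J)=0$ via Proposition~\ref{P:Obstruction}. First I would observe that, by Definition~\ref{Def:Gauge-Momentum}, a horizontal gauge momentum arises from some horizontal gauge symmetry $\chi\in\Gamma(\g_\subSS)$, and since $\{\xi_1,\dots,\xi_k\}$ is a global basis of sections of $\g_\subSS$, any such $\chi$ can be written uniquely as $\chi=\Sigma_{i=1} f_i\xi_i$ for functions $f_i\in C^\infty(Q)$ (note the $f_i$ are pulled back to $\M$ along $\tau$, but since the $j_i$ are linear in momenta and the pairing is $C^\infty(Q)$-linear in the section slot, this is consistent). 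By $C^\infty(Q)$-linearity of the nonholonomic momentum map in its section argument, the associated function is then $J=\langle J^\nh,\chi\rangle=\Sigma_{i=1} f_i\,\langle J^\nh,\xi_i\rangle=\Sigma_{i=1} f_i j_i$, which establishes the claimed form of $J$.

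The core of the argument is to compute $X_\nh(J)$ and show it equals the left-hand side of \eqref{Eq:DifEq}. Using the Leibniz rule for the derivation $X_\nh$,
\begin{equation*}
X_\nh(J)=\Sigma_{i=1}\left(\,X_\nh(f_i)\,j_i + f_i\,X_\nh(j_i)\,\right).
\end{equation*}
For the first term I would rewrite $X_\nh(f_i)=df_i(X_\nh)$; here one must be careful that $f_i$ is really $\tau^*f_i$ on $\M$, so $df_i(X_\nh)=d(\tau^*f_i)(X_\nh)$, but I will keep the paper's notation. For the second term I apply Proposition~\ref{P:Obstruction} directly, which gives $X_\nh(j_i)=\langle\mathcal{J},\mathcal{K}_\subW\rangle((\xi_i)_\subM,X_\nh)$, since each $j_i=\langle J^\nh,\xi_i\rangle$ with $\xi_i=P_{\mathfrak{g}_\subSS}(\eta_i)$ is exactly of the form treated there. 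Substituting yields
\begin{equation*}
X_\nh(J)=\Sigma_{i=1}\left(\,j_i\,df_i(X_\nh)+f_i\,\langle\mathcal{J},\mathcal{K}_\subW\rangle((\xi_i)_\subM,X_\nh)\,\right),
\end{equation*}
so the condition $X_\nh(J)=0$ is precisely equation \eqref{Eq:DifEq}. The equivalence then follows in both directions: if $J$ is a horizontal gauge momentum, expanding its gauge symmetry gives the $f_i$ satisfying \eqref{Eq:DifEq}; conversely, if $J=\Sigma_{i=1} f_i j_i$ with the $f_i$ solving \eqref{Eq:DifEq}, then $\chi=\Sigma_{i=1} f_i\xi_i\in\Gamma(\g_\subSS)$ witnesses $J=\langle J^\nh,\chi\rangle$ and the computation above shows $X_\nh(J)=0$, so $J$ is a horizontal gauge momentum with horizontal gauge symmetry $\chi$.

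The main obstacle I anticipate is justifying the uniqueness and well-definedness of the expansion $\chi=\Sigma_{i=1} f_i\xi_i$ with $f_i\in C^\infty(Q)$ rather than merely $f_i\in C^\infty(\M)$. This requires that the horizontal gauge symmetry, being a section of $\g_\subSS\to Q$, genuinely lives over $Q$, so that its coefficients in the global basis are functions on $Q$; this is where I would lean on the fact (established in the earlier Proposition) that $\g_\subSS\to Q$ is a constant-rank subbundle even though $S$ itself may vary in rank, guaranteeing that $\{\xi_1,\dots,\xi_k\}$ is a genuine global frame. A secondary subtlety is the interplay between $f_i$ as a function on $Q$ and its pullback to $\M$ when applying $X_\nh$ and the Leibniz rule, but this is bookkeeping rather than a genuine difficulty. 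Everything else reduces to the linearity of $J^\nh$ in its section argument and a direct application of Proposition~\ref{P:Obstruction}.
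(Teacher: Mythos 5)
Your proof is correct and follows essentially the same route as the paper's: expand the horizontal gauge symmetry $\chi$ in the global basis $\{\xi_1,\dots,\xi_k\}$ of $\Gamma(\g_\subSS)$, use linearity of $J^{\nh}$ to get $J=\Sigma_{i} f_i j_i$, apply the Leibniz rule together with Proposition~\ref{P:Obstruction} to identify $X_{\nh}(J)=0$ with equation \eqref{Eq:DifEq}, and run the same computation backwards for the converse. The extra care you take about pulling back the $f_i$ along $\tau$ and about $\g_\subSS\to Q$ being a constant-rank bundle (so that the global frame exists) is sound bookkeeping that the paper leaves implicit.
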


\begin{proof}
 If $J$ is a horizontal gauge momentum, then there is a section $\chi\in \Gamma(\g_\subSS)$ such that $J={\bf i}_{\chi_\subM} \Theta_\subM$.   Since $\{\xi_1,...,\xi_k\}$ (see \eqref{Eq:basis-xi-i}) is a basis of $\Gamma(\g_\subSS)$, then $\chi = \Sigma_{i=1} f_i \xi_i$ for $f_i \in C^\infty(Q)$ and thus $J= \Sigma_{i=1} f_i {\bf i}_{(\xi_i)_\subM}\Theta = \Sigma_{i=1} f_i j_i$.  Moreover, since $X_\nh(J)=0$ then $\Sigma_{i=1}\left(\, f_i dj_i(X_\nh)+ j_idf_i(X_\nh)\, \right)=0$. Finally, from Prop.~\ref{P:Obstruction}, we obtain that $(f_1,...,f_k)$ satisfy \eqref{Eq:DifEq}.  The converse is straightforward since if $J = \Sigma_{i=1} f_i j_i$, then $J={\bf i}_{\chi_\subM}\Theta_\subM$ for  $\chi=\Sigma_{i=1} f_i \xi_i$.  The differential equation \eqref{Eq:DifEq} guarantees that $X_\nh(J)=0$. 
\end{proof}

As opposed to what happens in hamiltonian systems, for nonholonomic systems the nonholonomic vector field associated to the conserved quantity $J$, i.e., $\pi_\nh^\sharp(dJ)$, is not necessarily a vertical vector field with respect to the orbit projection $\rho:\M\to\M/G$.  

\begin{proposition}\label{P:Lambda}  Let $j_i= \langle J^{\emph\nh}, \xi_i \rangle \in C^\infty(\M)$ be the components of the nonholonomic momentum map defined in \eqref{Def:j}.   If $J= \Sigma_{i=1} f_i\, j_i \in C^\infty(\M)$ and  $\chi = \Sigma_{i=1} f_i\, \xi_i\in\Gamma(\g_\subSS)$ for $f_i\in C^\infty(Q)$,  then
 $$\pi_{\emph\nh}^\sharp(\Lambda - dJ)= \chi_\subM $$ where $\Lambda$ is the 1-form on $\M$ given by 
\begin{equation} \label{Eq:Lambda}
 \Lambda =\Sigma_{i=1} \left(\,  j_i df_i + f_i {\bf i}_{(\xi_i)_\subM} \langle \mathcal{J}, \mathcal{K}_\subW \rangle \, \right).
 \end{equation}
\end{proposition}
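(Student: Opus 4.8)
The plan is to unwind the definition of $\pi_\nh^\sharp$ and reduce the claimed identity to a statement about the restriction of $\Omega_\subM$ to $\C$, which can then be attacked directly using Proposition~\ref{Prop:OmegaJK-Moment}. Recall that $\pi_\nh^\sharp(\alpha) = -X$ precisely when $\mathbf{i}_X\Omega_\subM|_\C = \alpha|_\C$. Hence the equality $\pi_\nh^\sharp(\Lambda - dJ) = \chi_\subM$ is equivalent to
$$\mathbf{i}_{\chi_\subM}\Omega_\subM|_\C = (dJ - \Lambda)|_\C,$$
and I would adopt this reformulation as the goal, computing both sides independently.

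First I would compute the left-hand side. Writing $\chi = \Sigma_i f_i\xi_i$ with $f_i\in C^\infty(Q)$ (pulled back to $\M$ via $\tau$), linearity of the infinitesimal-generator map gives $\chi_\subM = \Sigma_i f_i (\xi_i)_\subM$, so that $\mathbf{i}_{\chi_\subM}\Omega_\subM|_\C = \Sigma_i f_i \, \mathbf{i}_{(\xi_i)_\subM}\Omega_\subM|_\C$. For each $i$ I apply Proposition~\ref{Prop:OmegaJK-Moment} to $\eta_i$, with $\xi_i = P_{\g_\subSS}(\eta_i)$: using $\Omega_{\mathcal{J}\mathcal{K}} = \Omega_\subM + \langle\mathcal{J},\mathcal{K}_\subW\rangle$ from \eqref{Eq:OmegaJK} and $\langle J^\nh,\xi_i\rangle = j_i$ from \eqref{Def:j}, this yields
$$\mathbf{i}_{(\xi_i)_\subM}\Omega_\subM|_\C = dj_i|_\C - \mathbf{i}_{(\xi_i)_\subM}\langle\mathcal{J},\mathcal{K}_\subW\rangle|_\C.$$
Substituting back gives $\mathbf{i}_{\chi_\subM}\Omega_\subM|_\C = \Sigma_i f_i\bigl(dj_i - \mathbf{i}_{(\xi_i)_\subM}\langle\mathcal{J},\mathcal{K}_\subW\rangle\bigr)|_\C$.

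Next I would expand the right-hand side. Since $J = \Sigma_i f_i j_i$, the Leibniz rule gives $dJ = \Sigma_i(j_i\,df_i + f_i\,dj_i)$; subtracting $\Lambda$ as defined in \eqref{Eq:Lambda}, the two $j_i\,df_i$ contributions cancel and one is left with $dJ - \Lambda = \Sigma_i f_i\bigl(dj_i - \mathbf{i}_{(\xi_i)_\subM}\langle\mathcal{J},\mathcal{K}_\subW\rangle\bigr)$. Restricting to $\C$ reproduces exactly the expression obtained for $\mathbf{i}_{\chi_\subM}\Omega_\subM|_\C$, which closes the argument. The computation is short, so the points demanding care are bookkeeping rather than conceptual: tracking the sign convention in the definition of $\pi_\nh^\sharp$ when passing to the reformulated goal, and being precise that the $f_i$ are functions on $Q$ entering $\chi_\subM$ through $\tau$, so they pass through the contraction $\mathbf{i}_{\chi_\subM}$ and match the coefficients produced by the Leibniz expansion of $dJ$. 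The one structural input doing the real work is Proposition~\ref{Prop:OmegaJK-Moment}, which converts the contraction of $\Omega_\subM$ with each generator into $dj_i$ modulo the curvature term $\langle\mathcal{J},\mathcal{K}_\subW\rangle$; the very definition of $\Lambda$ is what is engineered to absorb that curvature term so the two sides coincide.
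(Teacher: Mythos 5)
Your proposal is correct and follows essentially the same route as the paper: both reduce the claim via the sign convention in the definition of $\pi_\nh^\sharp$ to the identity ${\bf i}_{\chi_\subM}\Omega_\subM|_\C = (dJ-\Lambda)|_\C$, and both establish it by writing $\Omega_\subM = \Omega_{\mbox{\tiny{$\mathcal{J}\!\mathcal{K}$}}} - \langle \mathcal{J},\mathcal{K}_\subW\rangle$ and invoking Proposition~\ref{Prop:OmegaJK-Moment} to convert ${\bf i}_{(\xi_i)_\subM}\Omega_{\mbox{\tiny{$\mathcal{J}\!\mathcal{K}$}}}|_\C$ into $dj_i|_\C$, with $\Lambda$ absorbing the curvature terms and the $j_i\,df_i$ terms from the Leibniz expansion of $dJ$. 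The only difference is presentational: you compute the two sides independently and match them, while the paper transforms the left-hand side directly into $(dJ-\Lambda)|_\C$.
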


\begin{proof} From the definition of $\Omega_{\mbox{\tiny{${\mathcal J}\!{\mathcal K}$}}}$ (see  \eqref{Eq:OmegaJK}) and Proposition \ref{Prop:OmegaJK-Moment}, we have that 
\begin{equation*}
 \begin{split}
{\bf i}_{\chi_\subM} \Omega_\subM |_\C = & \, \Sigma_{i=1} f_i \left( {\bf i}_{(\xi_i)_\subM} \Omega_{\mbox{\tiny{${\mathcal J}\!{\mathcal K}$}}} -{\bf i}_{(\xi_i)_\subM} \langle \mathcal{J}, \mathcal{K}_\subW \rangle \right)|_\C = \Sigma_{i=1} f_i \left( dj_i -{\bf i}_{(\xi_i)_\subM} \langle \mathcal{J}, \mathcal{K}_\subW \rangle \right)|_\C \\ 
= & \, (dJ- \Lambda)|_\C.  
 \end{split}
\end{equation*}
Therefore, from \eqref{Eq:HamVectorField}, we obtain that  $\pi_{\nh}^\sharp(\Lambda - dJ)= \chi_\subM $.
\end{proof}

If $J\in C^\infty(\M)$ is $G$-invariant then we denote by $\bar J \in C^\infty(\bar\M)$ the function such that $\rho^*\bar J=J$.  If, moreover,  $\Lambda =0$ then $\bar J\in C^\infty(\bar\M)$ is a {\it Casimir} of the reduced bracket $\{ \cdot , \cdot \}_\red$, that is, for all $f\in C^\infty(\bar\M)$, 
$$
\{f, \bar J\}_\red \circ \rho =  \{ \rho^*f, J\}_\nh =\chi_\subM (\rho^*f) = 0.
$$ 
As we will see later, the amount of Casimir functions of the reduced bracket will be fundamental to conclude that the bracket is Poisson (or twisted Poisson).

Next, we show with a simple example (the nonholonomic particle) how to proceed to find the horizontal gauge momentum using Prop. \ref{P:gaugeMomentum}. Moreover, we will see that in this case, not only the horizontal gauge momentum $J$ is $G$-invariant but also  $\Lambda =0$. Later in Section \ref{S:Body-of-revolution} we study the case of bodies of revolution rolling on a plane, including the Routh sphere and the axisymmetrical ellipsoid.

\begin{example}[{\bf The nonholonomic particle}] \label{Ex:NHParticle}  Consider a particle in $Q=\R^3$ with coordinates $(x,y,z)$
 restricted to the nonholonomic constraints $\dot z = y \dot x$ and where the 
 lagrangian is the canonical kinetic energy metric on $\R^3$. The distribution $D$ has fibers $D_q = \textup{span}\{ X_x := \partial_x+y\partial z, \, \partial_y  \}$ and  we consider the (free and proper) $\R^2$-action so that $V= \textup{span}\{ \partial_x,\partial_z \}$.  If we choose $W=\textup{span}\{ \partial_z\}$ then  $\{ X_x , \partial_y, \partial_z\}$ is a basis of $TQ$ adapted to $D\oplus W$. On $T^*Q$ we have the dual basis $\{dx,dy,\epsilon = dz-ydx\}$ and we  denote by $(p_x,p_y,p_y)$ its associated coordinates on $T_q^*Q$. Then
 the constraint submanifold is $\M = \{(x,y,z; p_x ,p_y , p_z) \ : \  p_z = \frac{y}{1+y^2}p_x \}, $
 and the restricted hamiltonian is $\Ham_\subM = \frac{1}{2} ( \frac{p_x^2}{1+y^2}+p_y^2 )$.   The nonholonomic vector field is 
 \[
 X_\nh = \frac{p_x}{1+y^2} X_x + p_y\frac{\partial}{\partial y} + \frac{yp_xp_y}{1+y^2} \frac{\partial}{\partial p_x}.
 \] 
The Lie algebra $\g = \R^2$ is split into $\mathfrak{g}_\subSS|_{(x,y,z)} = \textup{span}\{(1,y)\}$ (note that $(1,y)_\subQ = X_x$) and $\mathfrak{w} = \textup{span}\{(0,1)\}$ and hence $W$ satisfies the vertical-symmetry condition. 
Now, using Prop.~\ref{P:gaugeMomentum}, we find the horizontal gauge momentum of $X_\nh$.  The coordinate expression of $\langle \mathcal{J},\mathcal{K}_\subW\rangle$ is $\langle \mathcal{J},\mathcal{K}_\subW\rangle = \iota^*(p_z) d\epsilon$, a straightforward computation shows us that 
$$
\langle \mathcal{J},\mathcal{K}_\subW\rangle =  \frac{y}{1+y^2}p_x dx\wedge dy.
$$ 
Then, observe that the constant section $(1,0)\in \g\times Q$ induces the section  $P_{\mathfrak{g}_\subSS} ((1,0) ) =  (1,y)$ in $\g_\subSS$ and  
the function $j = \langle J^\nh, (1,y)\rangle = p_x$.
Therefore the horizontal gauge momentum is $J= f\,\langle J^\nh, (1,y)\rangle = f p_x$, where $f\in \C^\infty(\R^3)$ satisfies that 
\begin{equation}\label{Eq:NH-MomentEq} 
p_x df(X_\nh) + f \frac{y}{1+y^2} p_x p_y = 0, 
\end{equation}
since $\langle \mathcal{J},\mathcal{K}_\subW\rangle (( P_{\mathfrak{g}_\subSS}((1,0)))_\subM, X_{\nh})= \frac{y}{1+y^2} p_x p_y.$ 
As a first attempt, let us solve \eqref{Eq:NH-MomentEq} for $f$ a $G$-invariant function. Then $f=f(y)$ satisfies that $df(X_\nh) = f' p_y$, hence \eqref{Eq:NH-MomentEq} is written as $f' + f\frac{y}{1+y^2} = 0$ and thus $f(y) = \frac{1}{\sqrt{1+y^2}}$. 

Now, using that $j=p_x$, $df = - \tfrac{y}{1+ y^2} f dy$ and ${\bf i}_{X_x} \langle \mathcal{J}, \mathcal{K}_\subW \rangle = \tfrac{y}{1+ y^2} p_x dy$ we check that $\Lambda = 0$ (see \eqref{Eq:Lambda}). By Prop.~\ref{P:Lambda}, $\pi_\nh^\sharp(dJ) = f(y) X_x \in \Gamma(\S)$. 
Since $J= f(y) p_x$ is $G$-invariant, we conclude that $J$ is a Casimir of the reduced bracket $\{\cdot, \cdot\}_\red$ on $\M/G$.

{\hfill$\diamond$}

\end{example}

However, $\Lambda \neq 0$ for most of the examples such as the Chaplygin ball, bodies of revolution rolling on a plane and even in the case of the homogeneous ball rolling on a convex surface (see Example \ref{Ex:ChapBall}, Section \ref{S:Body-of-revolution} below and \cite{Yapu} respectively) .  

\subsection{Gauge transformations and their consequences}\label{Ss:Gauge}

In this section, we will introduced {\it gauge transformations by 2-forms $B$} \cite{SW} of $\pi_\nh$ in order to produce new brackets $\pi_\B$ describing the dynamics (i.e., $\pi_\B^\sharp(d\Ham_\subM)=-X_\nh$) but with different properties than the nonholonomic bracket. In fact, we will see a way to produce a bracket in such a way that the hamiltonian vector field of the first integral $J$ (with respect to this new bracket) becomes vertical.

\begin{definition}[\cite{PL2011}] \label{Def:Gauge}
 A 2-form $B$ on $\M$ defines a {\it dynamical gauge transformation} of the nonholonomic system $(\M, \pi_\nh, \Ham_\subM)$ if
 \begin{enumerate}\item[$(i)$] $(\Omega_\subM + B)|_\C$ is nondegenerate,
  \item[$(ii)$]  ${\bf i}_{X_\nh} B =0$.
 \end{enumerate}
\end{definition}

If a 2-form $B$ satisfies condition $(i)$ of the above definition, then there is a new bivector field $\pi_\B$ on $\M$ defined by the relation 
\begin{equation}\label{Eq:B-NHbracket}
\pi_\B^\sharp(\alpha) = - X \quad \mbox{if and only if}  \quad {\bf i}_{X} (\Omega_\subM + B)|_\C = \alpha |_\C,
\end{equation}
for $\alpha$ a 1-form on $\M$. 
In this case, we say that $\pi_\B$ is the induced bracket by the gauge transformation by $B$ of $\pi_\nh$. We also say that $\pi_\B$ and $\pi_\nh$  are {\it gauge related}. 

Condition $(ii)$ of Def.~\ref{Def:Gauge}  guarantees that this new bivector field $\pi_\B$ describes the dynamics, i.e., the nonholonomic vector field $X_\nh$ is also given by 
\begin{equation}\label{Eq:B-dynamics}
X_\nh = - \pi_\B^\sharp(d\Ham_\subM),
\end{equation}
and in this case, $\pi_\B$ and $\pi_\nh$  are {\it dynamically gauge related}.

The first time that the idea of modifying the nonholonomic bracket by a 2-form appeared was in \cite{Naranjo2008}.

Using Remark \ref{R:SemiBasic}, we see that any semi-basic 2-form $B$ (with respect to the bundle $\tau: \M\to Q$) will satisfy automatically condition $(i)$ of Def.~\ref{Def:Gauge}.

In a similar way as it was done in \eqref{Eq:ReducedBracket}, in the presence of symmetries, we can define a reduced bracket $\{\cdot, \cdot \}_\red^\B$ on $C^\infty(\bar\M)$  induced by $\{\cdot, \cdot \}_\B$. That is, if the 2-form $B$ is $G$-invariant, then $\pi_\B$ is $G$-invariant and thus there is an almost Poisson structure $\{\cdot, \cdot \}_\red^\B$ on the differential space $\bar\M$, defined for $f,g\in C^\infty(\bar\M)$, by 
\begin{equation}\label{Eq:GaugeReducedBracket}
\{f,g\}_\red^\B \circ \rho = \{\rho^*f, \rho^*g\}_\B. 
\end{equation}

Since the gauge related bivector field $\pi_\B$ depends only on the values of the 2-form $B$ on $\C$, we can assume that $B$ satisfies that $B|_\W = 0$. Analogous to the formula \eqref{Eq:Jac-pi_red}, it was proven in \cite{Jac} that the failure of the Jacobi identity of $\{\cdot, \cdot\}_\red^\B$ on $\bar\M$ is encoded in the 3-form $d\langle {\mathcal J}, \mathcal{K}_\subW \rangle - dB$ defined on $\M$. That is, for $f,g,h \in C^\infty(\bar\M)$,
\begin{equation}\label{Eq:Jac-pi_Bred}
cyclic \left[ \, \{f,\{g,h\}_{\red}^{\B}\}_{\red}^{\B}  \circ \rho \, \right]   =  [d\langle {\mathcal J}, \mathcal{K}_\subW \rangle - dB ] ( \pi_\B^\sharp(d\rho^*f), \pi_\B^\sharp(d\rho^*g),\pi_\B^\sharp(d\rho^*h) ).
\end{equation}
Hence, $\{\cdot, \cdot \}_\red^\B$ is a Poisson bracket on $\bar\M$ if 
$$
[d\langle {\mathcal J}, \mathcal{K}_\subW \rangle - dB ]|_{U_\B} = 0
$$
where $U_\B$ is the distribution on $\M$ given by 
\begin{equation}\label{Eq:U_B}
 U_\B=\textup{span}\{\pi_\B^\sharp(dF), \mbox{ where } F\in C^\infty(\M)^G\}
\end{equation}

\begin{remark}
 The distribution $U_0=\textup{span}\{\pi_\nh^\sharp(dF), \mbox{ where } F\in C^\infty(\M)^G\}$ already appeared in \cite{BS93} in order to study the reduced dynamics (for free and proper actions).
\end{remark}

Now, going back to the ideas of Section \ref{Ss:Symm-Conservation}, we observe from Proposition~\ref{P:Lambda} that given a horizontal gauge momentum $J$, the vector field $\pi_\nh^\sharp(dJ)$ is not necessarily vertical (with respect to the orbit projection $\rho:\M \to \bar\M$).
The idea of applying a dynamical gauge transformation on $\pi_\nh$ is to obtain a bivector field $\pi_\B$ so that $\pi^\sharp_\B(dJ)$ is vertical.
 In this case, if the conserved quantity $J$ is also $G$-invariant, then $\bar J\in C^\infty(\bar\M)$ (defined by $\rho^*\bar J = J$) becomes a Casimir of the reduced bracket $\{ \cdot, \cdot \}_\red^\B$ on $\bar\M$.

\begin{theorem}\label{T:Casimirs} 
Consider $(\M, \pi_{\emph\nh},\Ham_\subM)$ a nonholonomic system with a $G$-symmetry (where $G$ acts properly on $Q$) and let $J$ be a horizontal gauge momentum of $X_{\emph\nh}$ with $\chi\in\Gamma(\g_\subSS)$ the corresponding horizontal gauge symmetry.   If $B$ defines a dynamical gauge transformation such that ${\bf i}_{\chi_\subM}B|_\C= \Lambda|_\C$, then $\pi_\B$ satisfies \eqref{Eq:B-dynamics} and verifies that 
 $$
 \pi_\B^\sharp(dJ)= - \chi_\subM\in\Gamma(\V).
 $$
 Moreover if $B$ and $J$ are $G$-invariant then $\bar J\in C^\infty(\bar\M)$ is a Casimir of the reduced bracket $\{\cdot, \cdot\}_{\emph\red}^\B$.
\end{theorem}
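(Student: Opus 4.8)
The plan is to verify the three assertions in turn, leaning almost entirely on Proposition~\ref{P:Lambda}, which already relates $\chi_\subM$ to the $1$-form $\Lambda - dJ$ through $\pi_\nh$. The first assertion, that $\pi_\B$ satisfies \eqref{Eq:B-dynamics}, is immediate from the hypotheses: by assumption $B$ defines a dynamical gauge transformation, so condition $(ii)$ of Definition~\ref{Def:Gauge} gives ${\bf i}_{X_\nh} B = 0$, and as observed right after that definition this is exactly what guarantees $X_\nh = -\pi_\B^\sharp(d\Ham_\subM)$. Nothing further is needed here.

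The central point is the second assertion. I would unwind the defining relation \eqref{Eq:B-NHbracket} of $\pi_\B$: it suffices to check that ${\bf i}_{\chi_\subM}(\Omega_\subM + B)|_\C = dJ|_\C$, since then \eqref{Eq:B-NHbracket} (applied with $\alpha = dJ$) yields $\pi_\B^\sharp(dJ) = -\chi_\subM$ directly. The proof of Proposition~\ref{P:Lambda} supplies the identity ${\bf i}_{\chi_\subM}\Omega_\subM|_\C = (dJ - \Lambda)|_\C$, where $\Lambda$ is the $1$-form of \eqref{Eq:Lambda}. Adding to this the standing hypothesis ${\bf i}_{\chi_\subM}B|_\C = \Lambda|_\C$ gives ${\bf i}_{\chi_\subM}(\Omega_\subM + B)|_\C = (dJ - \Lambda)|_\C + \Lambda|_\C = dJ|_\C$, as required. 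Finally, since $\chi\in\Gamma(\g_\subSS)\subset\Gamma(\g\times Q)$, its infinitesimal generator $\chi_\subM$ is tangent to the $G$-orbits on $\M$, so $\chi_\subM\in\Gamma(\V)$.

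For the last assertion I would argue exactly as in the Casimir computation of the remark following Proposition~\ref{P:Lambda}, now with $\pi_\B$ in place of $\pi_\nh$. The $G$-invariance of $B$ makes $\pi_\B$ invariant, so the reduced bracket $\{\cdot,\cdot\}_\red^\B$ is well defined by \eqref{Eq:GaugeReducedBracket}, and the $G$-invariance of $J$ produces $\bar J\in C^\infty(\bar\M)$ with $\rho^*\bar J = J$. To see that $\bar J$ is a Casimir, I would compute, for arbitrary $f\in C^\infty(\bar\M)$, that $\{f,\bar J\}_\red^\B\circ\rho = \{\rho^*f, J\}_\B$, and then use skew-symmetry together with $\pi_\B^\sharp(dJ) = -\chi_\subM$ to rewrite the right-hand side as $\chi_\subM(\rho^*f)$. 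Since $\rho^*f$ is constant along $G$-orbits and $\chi_\subM$ is vertical, this vanishes, so $\{f,\bar J\}_\red^\B = 0$ for all $f$. Here $\pi_\B^\sharp(dJ)=-\chi_\subM$ plays precisely the role that the condition $\Lambda=0$ played in the earlier remark.

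I do not expect a serious obstacle: once Proposition~\ref{P:Lambda} is available the argument is essentially bookkeeping. The only point requiring care is to keep every identity restricted to $\C$, where $(\Omega_\subM + B)|_\C$ is nondegenerate by condition $(i)$ of Definition~\ref{Def:Gauge}, since $\pi_\B^\sharp$ is characterized only through its values on $\C$; mismatching an equality on all of $T\M$ with one on $\C$ would be the easiest way to go wrong.
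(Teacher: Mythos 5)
Your proposal is correct and follows essentially the same route as the paper's own proof: invoke the identity ${\bf i}_{\chi_\subM}\Omega_\subM|_\C = (dJ-\Lambda)|_\C$ from Proposition~\ref{P:Lambda}, add the hypothesis ${\bf i}_{\chi_\subM}B|_\C = \Lambda|_\C$ to get ${\bf i}_{\chi_\subM}(\Omega_\subM+B)|_\C = dJ|_\C$, conclude $\pi_\B^\sharp(dJ)=-\chi_\subM$ via \eqref{Eq:B-NHbracket}, and then run the verticality argument $\{\rho^*f,J\}_\B = \chi_\subM(\rho^*f)=0$ for the Casimir claim. The extra care you flag (keeping all identities restricted to $\C$, where nondegeneracy holds) is exactly the right point of caution and is consistent with the paper.
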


\begin{proof}
From Prop.~\ref{P:Lambda} we have that ${\bf i}_{\chi_\subM} \Omega_\subM|_\C = (dJ-\Lambda)|_\C$. So, if ${\bf i}_{\chi_\subM} B|_\C = \Lambda|_\C$ then ${\bf i}_{\chi_\subM} (\Omega_\subM +B)|_\C = dJ|_\C$.  Using \eqref{Eq:B-NHbracket} we obtain that $\pi_\B^\sharp(dJ)= - \chi_\subM$. Now, if $B$ is $G$-invariant, then $\{\cdot, \cdot\}_\B$ is $G$-invariant and thus for any $f\in C^\infty(\bar\M)$ we have
$$
\{f, \bar J\}^\B_\red \circ \rho =  \{\rho^*f, J\}_\B =\chi_\subM (\rho^*f) = 0.
$$  
Hence, $\bar{J}\in C^\infty(\bar\M)$ is a Casimir of $\{\cdot, \cdot\}_\red^\B$. 
\end{proof}

\begin{remark}(Involution of first integrals) \label{R:Involution}
 Suppose that the nonholonomic system has two $G$-invariant horizontal gauge momenta $J_1$ and $J_2$; in general, $\{J_1, J_2\}_\nh \neq 0$. However, if we choose a dynamically gauge related bracket $\{\cdot, \cdot\}_\B$ for which ${\bf i}_{(\chi_1)_\subM} B= \Lambda_1$ and ${\bf i}_{(\chi_2)_\subM} B= \Lambda_2$, then this new bracket puts the functions $J_1$ and $J_2$ in involution, i.e., $\{J_1, J_2\}_\B=0$. 
\end{remark}

\begin{remark} If the nonholonomic system $(\M, \pi_\nh,\Ham_\subM)$ with a $G$-symmetry has an horizontal gauge momentum $J$, then by the definition, $J$ is a first integral of the nonholonomic vector field $X_\nh$. However, it may happen that $J$ is not a first integral of all the vector fields in $U_0$ (as it is the case of a hamiltonian system). The choice of a 2-form $B$ such that ${\bf i}_{\chi_\subM}B|_\C=\Lambda|_\C$ produces a new bivector $\pi_\B$ for which $J$ is a first integral for all vector fields in $U_\B$.
\end{remark}

\subsection{The case of a proper and free action} \label{Ss:Regular}

In this section we study the case of a nonholonomic system on the manifold $Q$ described by a lagrangian $L$ and a nonintegrable distribution $D$ with a $G$-symmetry given by the free and proper action of a Lie group $G$. As a consequence, we will see that if $\S$ is generated by vector fields of the type $\pi_\B^\sharp(dJ_i)$ for $J_i$ $G$-invariant functions on $\M$, then we can conclude that the reduced bracket has almost symplectic leaves, i.e., it is {\it twisted Poisson}  \cite{SW}. 

Suppose that the action of the Lie group $G$ on $Q$ is free and proper. Then
\begin{enumerate}
 \item[$(i)$] the quotient space $\M/G$ is a manifold.
 
 \item[$(ii)$]  The vertical distributions $\V_m\subset T_m\M$ and $V_{\tau(m)}\subset T_{\tau(m)}Q$ are of constant rank and isomorphic under the projection $T\tau : T\M \to TQ$ (respectively, the distributions  $\S_m = \C_m \cap \V_m\subset T_m\M$ and $S_{\tau(m)} = D_{\tau(m)} \cap V_{\tau(m)} \subset T_{\tau(m)}Q$ are isomorphic and constant rank). The rank of $S$ (or $\S$) is exactly the rank of $\g_\subSS$ (defined in Def.~\ref{Def:gS}). 
 
 \item[$(iii)$] The failure of the Jacobi identity \eqref{Eq:Jac-pi_red} for $\{\cdot, \cdot\}_\red^\B$ on $\M/G$ can be written in terms of the reduced bivector field $\pi_\red^\B$,  
\begin{equation}\label{Eq:Regular-Jac}
\tfrac{1}{2}[\pi_\red^\B, \pi_\red^\B] = - \rho_*[\pi_\B^\sharp(d( \langle \mathcal{J}, \mathcal{K}_\subW \rangle - B))], 
\end{equation}
where, for $\alpha,\beta,\gamma\in \Omega^1(\M/G)$,  $\rho_*[\pi_\B^\sharp(d( \langle \mathcal{J}, \mathcal{K}_\subW \rangle - B))](\a,\beta,\gamma)(\rho(m)) := \pi_\B^\sharp(d( \langle \mathcal{J}, \mathcal{K}_\subW \rangle - B))(\rho^*\a,\rho^*\beta,\rho^*\gamma) (m)$.
We conclude also that if $d\langle {\mathcal J}, \mathcal{K}_\subW \rangle$ (respectively $\langle {\mathcal J}, \mathcal{K}_\subW \rangle$) is a well defined 3-form (resp. 2-form) on $\bar\M$ then $\pi^\B_\red$ is a twisted Poisson bracket on $\bar\M$ and formula \eqref{Eq:Jac-pi_red} reduces to 
 $
 \tfrac{1}{2}[\pi_\red , \pi_\red]  = - \pi_\red^\sharp ( \Phi)
 $
 where $\Phi$ is the 3-form on $\bar\M$ such that $\rho^*\Phi = d(\langle {\mathcal J}, \mathcal{K}_\subW \rangle-B)$.  
 
 \end{enumerate}


\begin{proposition} \label{P:Involutivity}
Let $\pi_\B$ be the gauge related bivector to $\pi_{\emph\nh}$ by a $G$-invariant 2-form $B$ and suppose that $G$ acts freely and properly on $Q$.  If there are $G$-invariant functions  $J_1,..., J_k$  on $\M$ such that $\{\pi^\sharp_\B(dJ_i)\}_{i,...,k}$ generate $\S$, then the characteristic distribution of the reduced bivector field $\pi_{\emph\red}^\B$ is involutive. 
\end{proposition}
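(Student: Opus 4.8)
The plan is to identify the characteristic distribution of $\pi_\red^\B$ explicitly and to recognize it as the tangent distribution to the foliation of $\bar\M$ cut out by the reduced horizontal gauge momenta; such a distribution is automatically involutive. Since $G$ acts freely and properly, $\rho:\M\to\bar\M$ is a surjective submersion with $\ker T\rho=\V$, and the differentials of $G$-invariant functions span the annihilator $\V^\circ=\textup{Ann}(\V)\subset T^*\M$. Writing $\bar C$ for the characteristic distribution of $\pi_\red^\B$, the identity $(\pi_\red^\B)^\sharp(d\bar f)=T\rho(\pi_\B^\sharp(d\rho^*\bar f))$ gives $\bar C=T\rho(U_\B)$, with $U_\B=\pi_\B^\sharp(\V^\circ)$ pointwise. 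I would then use that (i) $\C+\V=T\M$, because $T\M=\C\oplus\W$ with $\W\subseteq\V$ by \eqref{Eq:VertComplementTM}, and (ii) $\ker\pi_\B^\sharp=\C^\circ$ and $\textup{Im}\,\pi_\B^\sharp=\C$, by nondegeneracy of $(\Omega_\subM+B)|_\C$ (Def.~\ref{Def:Gauge}$(i)$ and \eqref{Eq:B-NHbracket}). From (i) and (ii) one gets $\V^\circ\cap\ker\pi_\B^\sharp=\V^\circ\cap\C^\circ=(\V+\C)^\circ=0$, so $\pi_\B^\sharp$ is injective on $\V^\circ$ and $\textup{rank}\,U_\B=\dim\V^\circ=\dim\bar\M$. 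Since the $J_i$ are $G$-invariant, $\pi_\B^\sharp(dJ_i)\in U_\B$, so the hypothesis yields $\S\subseteq U_\B$; combined with $U_\B\subseteq\C$ this gives $U_\B\cap\V=U_\B\cap\C\cap\V=\S$, whence $\textup{rank}\,\bar C=\textup{rank}\,U_\B-\textup{rank}\,\S=\dim\bar\M-k$, where $k=\textup{rank}\,\S=\textup{rank}\,\g_\subSS$.

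Next I would show that the reduced functions $\bar J_i$ (defined by $\rho^*\bar J_i=J_i$) are Casimirs of $\pi_\red^\B$ with functionally independent differentials. Because the $\pi_\B^\sharp(dJ_i)$ generate $\S\subseteq\V=\ker T\rho$, pushing forward gives $(\pi_\red^\B)^\sharp(d\bar J_i)=T\rho(\pi_\B^\sharp(dJ_i))=0$, i.e. $\{\bar J_i,\cdot\}_\red^\B\equiv 0$; in particular $\bar C\subseteq\bigcap_i\ker d\bar J_i$, since $d\bar J_i((\pi_\red^\B)^\sharp(d\bar f))=-\{\bar J_i,\bar f\}_\red^\B=0$ for every $\bar f\in C^\infty(\bar\M)$. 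For independence, the $k$ vector fields $\pi_\B^\sharp(dJ_i)$ span the rank-$k$ distribution $\S$ and are therefore pointwise linearly independent; since $\ker\pi_\B^\sharp=\C^\circ$ and $\rho^*d\bar J_i=dJ_i$, any relation $\sum_i c_i\,d\bar J_i=0$ would force $\sum_i c_i\,\pi_\B^\sharp(dJ_i)=0$ and hence all $c_i=0$, so the $d\bar J_i$ are independent at every point.

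Finally, $\mathcal{F}:=\bigcap_i\ker d\bar J_i$ is the tangent distribution to the regular foliation of $\bar\M$ by the common level sets of the submersion $(\bar J_1,\dots,\bar J_k):\bar\M\to\R^k$; it is involutive and of constant rank $\dim\bar\M-k$. Since $\bar C\subseteq\mathcal{F}$ and both distributions have rank $\dim\bar\M-k$, I would conclude $\bar C=\mathcal{F}$, so $\bar C$ is involutive.

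The crux of the argument is the rank bookkeeping that upgrades the inclusion $\bar C\subseteq\mathcal{F}$ to the equality $\bar C=\mathcal{F}$: one must check that $U_\B$ attains full rank $\dim\bar\M$ (equivalently that $\pi_\B^\sharp$ is injective on $\V^\circ$, which is exactly where $\C+\V=T\M$ and the nondegeneracy of $(\Omega_\subM+B)|_\C$ enter) and that the $k$ Casimirs are functionally independent. Without the matching ranks the inclusion alone would not yield involutivity. An alternative, more computational route would use the reduced Jacobiator \eqref{Eq:Regular-Jac} to show directly that $\tfrac12[\pi_\red^\B,\pi_\red^\B](d\bar f,d\bar g,\cdot)$ lies in $\bar C$; but this reduces to proving that the contraction $\iota_{\pi_\B^\sharp(dF)}\iota_{\pi_\B^\sharp(dG)}\,d(\langle\mathcal{J},\mathcal{K}_\subW\rangle-B)$ agrees, modulo $\ker\pi_\B^\sharp$, with the differential of a $G$-invariant function, which is precisely the content that the Casimir foliation argument above packages cleanly, so I would favour the foliation approach.
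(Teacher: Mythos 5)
Your proof is correct modulo one small point (addressed below), and it takes a genuinely different route from the paper's. The paper never identifies the characteristic distribution explicitly: it checks involutivity directly on the generators, writing $[(\pi_\red^\B)^\sharp(df),(\pi_\red^\B)^\sharp(dg)]$ as $(\pi_\red^\B)^\sharp(d\{f,g\}_\red^\B)$ plus a correction controlled by the Jacobiator $[\pi_\red^\B,\pi_\red^\B]$, and then uses the reduced Jacobiator formula \eqref{Eq:Regular-Jac} to encode that correction in a $1$-form $\Theta$ on $\M$, namely the contraction of $d(\langle \mathcal{J},\mathcal{K}_\subW\rangle - B)$ with the two hamiltonian lifts, extended by zero on $\W$. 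The hypothesis that the $\pi_\B^\sharp(dJ_i)$ generate $\S$ enters there exactly once, to show $\Theta|_\S=0$; since also $\Theta|_\W=0$ and $\Theta$ is $G$-invariant, $\Theta$ is basic, say $\Theta=\rho^*\bar\Theta$, and the bracket of the two generators equals $(\pi_\red^\B)^\sharp(d\{f,g\}_\red^\B-\bar\Theta)$, manifestly a section of the characteristic distribution. You bypass the Jacobiator and the $3$-form entirely: you prove that the characteristic distribution $T\rho(U_\B)$ has constant rank $\dim\bar\M-\textup{rank}\,\S$ (injectivity of $\pi_\B^\sharp$ on $\V^\circ$ plus $U_\B\cap\V=\S$), show the $\bar J_i$ are Casimirs so this distribution is contained in $\bigcap_i\ker d\bar J_i$, and match ranks to identify the two, the latter distribution being involutive for free. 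Each route has its payoff: the paper's yields an explicit formula for the bracket of hamiltonian vector fields and involves no constant-rank bookkeeping, which is why the same computational pattern survives in the singular (non-free) setting used elsewhere in the paper; yours is more elementary, exhibits the foliation concretely as the common level sets of the Casimirs, and shows as a byproduct that $\pi_\red^\B$ is automatically regular here, so the extra regularity hypothesis of Corollary~\ref{C:regular-twisted} is in fact automatic under the hypotheses of this proposition.

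The small point: you assert that the $k$ vector fields $\pi_\B^\sharp(dJ_i)$ are pointwise linearly independent because they span $\S$, and deduce that the $d\bar J_i$ are everywhere independent. The statement only assumes these vector fields generate $\S$, so a priori $k\geq\textup{rank}\,\S$ and pointwise independence may fail. This is harmless: by the injectivity of $\pi_\B^\sharp$ on $\V^\circ$ (which you established) and the injectivity of $\rho^*$ on covectors, $\textup{span}\{d\bar J_i\}$ has pointwise rank exactly $\textup{rank}\,\S$, so $\bigcap_i\ker d\bar J_i$ still has constant rank $\dim\bar\M-\textup{rank}\,\S$; moreover this distribution is involutive without any independence assumption, since for vector fields $X,Y$ annihilating every $d\bar J_i$ one has $d\bar J_i([X,Y])=X(Y(\bar J_i))-Y(X(\bar J_i))=0$. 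With this rewording, your rank-matching argument closes exactly as you wrote it.
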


\begin{proof}
The characteristic distribution of $\pi_\red^\B$ on $\M/G$ is generated by the vector fields $(\pi_\red^\B)^\sharp(df)$ for $f\in C^\infty(\M/G)$.  For $f,g\in C^\infty(\M/G)$, we have the formula 
 \begin{equation}\label{Eq:Proof-Invol}
 [(\pi_\red^\B)^\sharp(df), (\pi_\red^\B)^\sharp(dg)] = (\pi_\red^\B)^\sharp(d\{f,g\}_\red^\B) + \tfrac{1}{2}{\bf i}_{df\wedge dg} [\pi_\red^\B, \pi_\red^\B].
 \end{equation}
 Using \eqref{Eq:Regular-Jac}, we see that for all $h\in C^\infty(\M/G)$,
 $$
 \tfrac{1}{2}{\bf i}_{df\wedge dg} [\pi_\red^\B, \pi_\red^\B] (dh) \circ \rho = - \pi_\B^\sharp \left( {\bf i}_{\pi_\B^\sharp (d\rho^*f)\wedge \pi_\B^\sharp (d\rho^*g)} d( \langle \mathcal{J}, \mathcal{K}_\subW \rangle - B)\right) (d\rho^*h).
 $$
 Define $\Theta$ the 1-form on $\M$ such that 
 $$
 \Theta |_\C = {\bf i}_{\pi_\B^\sharp (d\rho^*f)\wedge \pi_\B^\sharp (d\rho^*g)} d( \langle \mathcal{J}, \mathcal{K}_\subW \rangle - B) |_\C \quad \mbox{and} \quad \Theta|_\W = 0.
 $$  
 Next we see that $\Theta |_\S = 0$.  In fact, since $\{\pi^\sharp_\B(dJ_i)\}_{1,...,k}$ generate $\S$, using \eqref{Eq:Regular-Jac}, we see that
 $
 \Theta(\pi^\sharp_\B(dJ_i)) = cyclic \left[ \, \{f,\{g,\bar{J}_i\}_\red^\B\}_\red^\B \circ \rho \, \right] = 0,
 $ for $\bar{J}_i\in C^\infty(\M/G)$ such that $\rho^*\bar{J}_i = J_i$. 
 Since also $\Theta|_\W = 0$ we obtain that $\Theta|_\V = 0$. But we know that $\Theta$ is $G$-invariant and therefore we conclude that $\Theta$ is a basic 1-form; we denote by $\bar \Theta$ the 1-form on $\M/G$ such that $\rho^*\bar\Theta = \Theta$.  Then \eqref{Eq:Proof-Invol} reads
 $$
 [(\pi_\red^\B)^\sharp(df), (\pi_\red^\B)^\sharp(dg)] = (\pi_\red^\B)^\sharp(d\{f,g\}_\red^\B) - (\pi_\red^\B)^\sharp (\bar\Theta),
 $$
 and we conclude that  $ [(\pi_\red^\B)^\sharp(df), (\pi_\red^\B)^\sharp(dg)]$ is a section of the characteristic distribution of $\pi_\red^\B$ and so it is involutive.  
 
\end{proof}

In other words, Proposition \ref{P:Involutivity} is proving that the distribution  $T\!\rho (U_\B)$ on $\M/G$ is involutive, where $U_\B$ is the distribution defined in \eqref{Eq:U_B}. 


It was proven in \cite[Corollary~3]{PL2011} that any regular bivector field on a manifold with an integrable characteristic distribution, is twisted Poisson. Therefore,  

\begin{corollary} \label{C:regular-twisted}
Under the hypothesis of Prop.~\ref{P:Involutivity} and if moreover the reduced bivector $\pi_{\emph\red}^\B$ on $\M/G$ is regular, then $\pi_{\emph\red}^\B$ is twisted Poisson.
\end{corollary}

\begin{proof}
Since  Prop.~\ref{P:Involutivity} asserts that the characteristic distribution of $\pi_\red^\B$ defined on the manifold $\M/G$ is involutive, then this Proposition is just a consequence of \cite[Corollary~3]{PL2011}.
\end{proof}

In our context, the candidates to be the functions $\{J_i\}_{i=1,..,k}$ such that $\{\pi_\B^\sharp(dJ_i)\}_{i=1,..,k}$ generate $\S$, are the horizontal gauge momenta.  

More precisely, as we did in Sec.~\ref{Ss:Symm-Conservation}, if $k=\textup{rank}\,\g_\subSS$ let us consider $\{\eta_1, ... , \eta_k, \eta_{k+1} ,..., \eta_N\}$ a basis of the Lie algebra $\mathfrak{g}$ so that $\{\eta_{k+1}, ... , \eta_N\}$  is a basis of $\mathfrak{w}$.
Then by \cite[Prop.~4]{BalSan} we have that the components of the nonholonomic momentum map 
$$
\{j_1, ..., j_k\} \quad \mbox{with } j_i=\langle J^\nh, P_{\g_\subSS}(\eta_i)\rangle \in C^\infty(\M), 
$$  
(see \eqref{Def:j}),  are (functionally) independent functions. 
Moreover, by Prop.~\ref{P:gaugeMomentum} we observe that if the differential equation \eqref{Eq:DifEq} has $k$ (functionally independent) solutions, then we have $k$ independent horizontal gauge momenta $J_1,..., J_k$. 

Therefore, if the system has $k=\textup{rank}\, \S$ $G$-invariant horizontal gauge momenta $\{J_1,...,J_k\}$ such that $\pi_\B^\sharp(dJ_i)\in\Gamma(\V)$ then the reduced bivector field $\pi_\red^\B$ on $\M/G$ has an integrable characteristic distribution.

\begin{example}[{\bf Nonholonomic particle}] \label{Ex:NH-Part2}  Following Example~\ref{Ex:NHParticle}, we see that the $\R^2$-action is free and proper and moreover,  the reduced bivector field $\pi_\red$ is regular on the manifold $\M/G\simeq \R^3$. We computed the horizontal gauge momentum $J= f(y)p_x$ and we saw that $\pi_\nh^\sharp(dJ) = - f(y)X_x \in \Gamma(\S)$. Now, since $S$ has constant rank 1, by Corollary~\ref{C:regular-twisted}, $\pi_\red$ is Poisson (in fact, the Corollary says that  $\pi_\red$ is twisted Poisson but since $\textup{dim}\M/G = 3$ then it is Poisson).  

{\hfill$\diamond$}
\end{example}

\begin{example}[{\bf Chaplygin ball}] \label{Ex:ChapBall}
The Chaplygin ball is the classical example of an inhomogeneous ball rolling without slipping on a plane \cite{BorisovMamaev,Duistermaat,Naranjo2008}. First we observe that the system has a symmetry induced by the free and proper action of the Lie group $G=SE(2)$ and that the dimension assumption is satisfied. Following \cite{Jac}, the $G$-invariant vertical complement of the constraints is related with the $\R^2$-translational symmetry of the system (observe that $G=SO(1)\times \R^2$). The complement $W$ satisfies the vertical-symmetry condition and $\mathfrak{w}$ is identified with $\R^2$.

Since the Lie algebra $\g$ is isomorphic to $\R^3$ we consider any basis $\{\eta, \eta_1, \eta_2\}$ of $\g$, where $\{\eta_1, \eta_2\}$ is a basis of the subalgebra $\mathfrak{w} = \R^2$.  
It is computed in \cite{Jac} that the section  $\xi = P_{\g_\subSS}(\eta)$  satisfies that $\langle \mathcal{J}, \mathcal{K}_\subW \rangle(\xi_\subM, X_\nh)= 0$ and thus from Prop.~\ref{P:Obstruction} we have that $j= \langle J^\nh, \xi\rangle$ is a horizontal gauge momentum (alternatively we can see that $f=cte$ is a solution of \eqref{Eq:DifEq}). 

 In order to find a 2-form $B$ such that $\pi_\B^\sharp(dj) = - \xi_\subM$, we follow Theorem \ref{T:Casimirs} and we see that the $B$ has to satisfy  
 \begin{equation}\label{Ex:Chap:Lambda} 
{\bf i}_{\xi_\subM} B = \Lambda.
 \end{equation}
 In this case the 1-form $\Lambda$ (see Prop.~\ref{P:Lambda}) is given by $\Lambda = {\bf i}_{\xi_\subM} \langle \mathcal{J}, \mathcal{K}_\subW \rangle$, hence \eqref{Ex:Chap:Lambda}  is equivalent to asking that $\langle \mathcal{J}, \mathcal{K}_\subW \rangle - B$ has to be semi-basic with respect to the orbit projection $\M\to \M/G$ (since it is $G$-invariant by construction, we may ask that it is basic).  We observe that the 2-form $B$ found in \cite{Naranjo2008} is a 2-form for which  $\langle \mathcal{J}, \mathcal{K}_\subW \rangle - B$ is basic, see \cite{Jac}.   Observe also that $\textup{rank}\g_\subSS =1$.

Therefore, for that 2-form $B$ we have that $\xi_\subM = - \pi^\sharp_\B(dj)$ and thus $\{\pi^\sharp_\B(dj)\}$ is a (global) generator of $\Gamma(\S)$. Since $j\in C^\infty(\M)$ is $G$-invariant, then again by Thm.~\ref{T:Casimirs}, $j\in C^\infty (\M/G)$ is a Casimir function of the reduced bivector field  $\pi_\red^\B$ on $\M/G$.  Noting that the reduced bivector field $\pi_\red^\B$ is regular (see \cite{PL2011}) then following Corollary~\ref{C:regular-twisted}, it is a twisted Poisson bracket recovering the result in \cite{PL2011,Jac}).

{\hfill$\diamond$}

\end{example}

For a non necessarily free action, we will see in the following examples that the property that $\pi^\sharp_\B(dJ)= - \chi_\subM$ is also what will tell us that the reduced bracket is Poisson.

\section{Body of revolution} \label{S:Body-of-revolution}

In this section we discuss the motion of a strongly convex body of revolution which rolls without slipping on a horizontal plane under the influence of a constant vertical gravitational force of strength g.  We follow the notation and ideas from \cite{Cushman1998,Book:CDS}.

\subsection{Preliminaries}

Consider a strongly convex body $\mathfrak{B}$ with mass $m$ and with (non zero) principal moments of inertia $\mathbb{I}_1, \mathbb{I}_2$ and $\mathbb{I}_3$.  
Following \cite{Book:CDS}, the body $\mathfrak{B}$ is a body of revolution if it is geometrically and dynamically symmetric under rotations about a given axis, which in our case is chosen to be $e_3$. That is, the surface $\mathbf{S}$ of $\mathfrak{B}$ is invariant under rotations around $e_3$ and $\mathbb{I}_1 = \mathbb{I}_2$. 

The position of the center of mass of the body is represented by the coordinates $\veca\in \R^3$ and the relative position of the body is given by $g\in SO(3)$. 
The lagrangian $L:T(SO(3) \times \R^3)\to \R$ is of mechanical type and given by
\begin{equation}\label{Eq:Lagrangian}
 L((g,\veca, \vecOm, \dot\veca)) = \frac{1}{2} \langle \mathbb{I}\vecOm, \vecOm\rangle + \frac{1}{2}m \langle \dot\veca, \dot\veca\rangle - m\mbox{g}\langle \veca, {\bf e}_3 \rangle, 
\end{equation}
for $\vecOm = (\Omega_1, \Omega_2, \Omega_3)$ the angular velocity of the body in body coordinates and where $\langle \cdot , \cdot \rangle$ denotes the standard inner product in $\R^3$.

Let $s$ be the vector from the center of mass to a point on the surface $\mathbf{S}$ and we denote by $n(s)$ the inward unit normal to $\mathbf{S}$ at $s$. The Gauss map $n:\mathbf{S} \to S^2$, $s \mapsto n(s)$, is a diffeomorphism since $\mathbf{S}$ is smooth, compact and strongly convex. 
We denote by $\vecgamma= (\gamma_1,\gamma_2, \gamma_3)\in S^2$ the third row of the matrix $g\in SO(3)$ and consider the inverse of the Gauss map $s:S^2 \to \mathbf{S}$, $\vecgamma \mapsto s(\vecgamma)$ given by
\begin{equation}\label{Eq:s}
s(\vecgamma) = (\varrho(\gamma_3)\gamma_1, \varrho(\gamma_3)\gamma_2, \zeta(\gamma_3)), 
 \end{equation}
where $\varrho:(-1,1) \to \R$ and $\zeta:(-1,1) \to \R$ are smooth functions whose definition depend on the shape of the body of revolution, see \cite[Sec.~6.7.1]{Book:CDS}.
During the treatment of the examples, we will use the notation   $\varrho=\varrho(\gamma_3)$, $\zeta = \zeta(\gamma_3)$ and 
$$
s =s(\vecgamma) = \varrho. \vecgamma - L {\bf e}_3, \qquad \mbox{for} \ L=L(\gamma_3) =  \varrho. \gamma_3 - \zeta.
$$

The configuration space $Q$ is the submanifold of $SO(3)\times \R^3$ defined by 
$$
Q=\{(g,\veca)\in SO(3)\times \R^3 \ : \ a_3 = - \langle \vecgamma, s \rangle \},
$$ 
which is diffeomorphic to $SO(3)\times \R^2$ with coordinates $(g,\veca) = (g,a_1, a_2)$. The constraint $a_3 = \langle \vecgamma, s\rangle$ is the holonomic constraint representing the fact that the body is on a plane.  Let us consider the (local) basis of $TQ$ given by 
$\{ X_1^L, X_2^L, X_3^L, \partial_{a_1}, \partial_{a_2} \}$, where $X_i^L$ are the left invariant vector fields on $SO(3)$ and we denote the corresponding coordinates on $TQ$ by $(\vecOm, \dot a_1, \dot a_2)$.

The nonholonomic constraints are 
\begin{equation}   \label{Eq:Constraints}
 \vecOm \times s + {\bf b} = 0, 
\end{equation}
where ${\bf b} = g^T \dot\veca$ (i.e., the constraint can be written as $\dot\veca = - g . ( \vecOm\times s)$). 
Hence, using that $a_3 = - \langle \vecgamma, s \rangle$  we obtain the following relation: $-\varrho' + L' \gamma_3 = 0$. 

The constraint distribution $D$ on $Q$ is given by $D= \textup{span}\{ X_1, X_2, X_3\}$, where 
\begin{equation}\label{Eq:D}
\begin{split} 
X_i & := X_i^L+(\vecalpha\times s)_i \partial_{a_1} + (\vecbeta \times s)_i \partial_{a_2} + (\vecgamma \times s)_i \partial_{a_3} \\
& = X_i^L-(s\times Y)_i,  
\end{split}
\end{equation}
for $Y =  (Y_1,Y_2, Y_3)$, $Y_i = \alpha_i \frac{\partial}{\partial a_1} + \beta_i \frac{\partial}{\partial a_2} + \gamma_i \frac{\partial}{\partial a_3}$ and $\vecalpha,\vecbeta, \vecgamma$ being the three rows of the matrix $g\in SO(3)$. We denote by $ {\bf X}= (X_1, X_2, X_3) $ and by $\vecL = (\lambda_1, \lambda_2, \lambda_3)$ the (Maurer-Cartan) 1-forms on $SO(3)$ dual to the left invariant vector fields $\{X^L_1,X^L_2, X^L_3\}$. 
The constraint 1-forms are 
\begin{equation}\label{Eq:Constraints-1forms}
\epsilon^1 = da_1 -\langle \vecalpha, s\times \vecL \rangle \quad \mbox{and}\quad \epsilon^2= da_2 - \langle \vecbeta, s \times \vecL\rangle. 
\end{equation}

\subsection{The Lie group symmetry and the vertical complement of the constraints}

Following \cite{Book:CDS}, consider the 2-dimensional euclidean Lie group $E(2)$ given by 
$$
E(2)= \left\{ (h_\varphi, (x,y)) \in SO(3)\times \R^2 \ : \ h_\varphi =  \left(\begin{array}{ccc} \bar{h}_\varphi & & 0 \\ & & 0 \\ 0 & 0 & 1 \end{array}  \right) \mbox{ with } \bar{h}_\varphi =  \left(\begin{array}{cc} \cos\varphi & -\sin\varphi \\ \sin\varphi & \cos\varphi \end{array}  \right)\right\}.                                                    
$$ 
The system is invariant by the left $E(2)$-action on $Q$ defined by $\left( (h_\varphi,(x,y)), (g,\veca) \right) \mapsto (h_\varphi g, \\ \bar{h}_\varphi\veca + (x,y))$ (that is, this action leaves the lagrangian and the constraints invariant). Moreover, by the symmetry of the body, the system is also invariant by the right $S^1$-action on $Q$ given by $\left( h_\theta, (g,\veca) \right) \mapsto (gh_\theta^{-1}, \bar{h}_\theta \veca )$. Since both actions commute, we consider the Lie group $G= S^1\times E(2)$ as the symmetry group of the nonholonomic system.  

The associated Lie algebra is $\mathfrak{g} \simeq \R \times \R \times \R^2$ and the infinitesimal generator relative to the $S^1$-action is $(1; 0, {\bf 0})_\subQ = - X_3^L - a_2 \frac{\partial}{\partial a_1} + a_1 \frac{\partial}{\partial a_2}$, while relative to the $E(2)$-action we have
\begin{equation}\label{Eq:General-InfGen}
(0;1,{\bf 0})_\subQ = \langle \vecgamma, {\bf X}^L \rangle  - a_2 \frac{\partial}{\partial a_1} + a_1\frac{\partial}{\partial a_2}, \quad (0;0,(1,0))_\subQ = \frac{\partial}{\partial a_1}, \quad (0;0,(0,1))_\subQ = \frac{\partial}{\partial a_2}. 
\end{equation}

\begin{remark}
For $q=(g,\veca)$, where $g$ has the third row equal to $(0,0,\pm 1)$, $(1; 0, {\bf 0})_\subQ (q)= \mp (0;1,{\bf 0})_\subQ(q)$, so we see that the $G$-action is not free. 
\end{remark}

Note that $S=D \cap V = \textup{span}\{X_3, \langle \vecgamma, {\bf X}\rangle\}$ does not have constant rank. However, the dimension assumption \eqref{Eq:DimAssumptionTQ} is satisfied: from \eqref{Eq:D} and \eqref{Eq:General-InfGen} we see that $TQ=D+V$.
Therefore, we can choose a (smooth, rank 2) vertical complement $W$ of the constraints:  for each $(g,\veca)\in Q$,
\begin{equation}\label{Eq:Solids-D+W}
D_{(g,\veca)} = \textup{span}\{ {\bf X}= X_1, X_2, X_3 \} \quad \mbox{and}\quad W_{(g,\veca)} =\textup{span} \left\{ \frac{\partial}{\partial a_1}, \frac{\partial}{\partial a_2} \right\}.
\end{equation}

The vertical complement $W$ satisfies the vertical-symmetry condition, since 
$$
\mathfrak{w} = \textup{span}\{ (0; 0 ,(1,0)), (0;0,(0,1)) \}
$$ 
is a $Ad$-invariant subalgebra of $\mathfrak{g}$ and, at each $q\in Q$,  $\Psi_q|_{\mathfrak{w}} : \mathfrak{w} \to W_q$ is an isomorphism (see Def.~\ref{Def:VertSym}). On the other hand, from \eqref{Def:gS} we see that 
\begin{equation}\label{Eq:gs}
\mathfrak{g}_\subSS = \textup{span}\{  \xi_1 := ((1 ; 0 , (h_1, h_2)), \,  \xi_2 := ((0; 1, (g_1, g_2))\} 
\end{equation}
where $h_1 = a_2 +  \varrho \beta_3$, $h_2 = - a_1 -  \varrho \alpha_3$ and  $g_1 =  a_2 + \langle s, \vecbeta\rangle $, $g_2 = - a_1 - \langle s, \vecalpha \rangle$. 
The infinitesimal generators associated to $\xi_1$ and $\xi_2$ are vector fields with values in $S$ given by
\begin{equation} \label{Eq:InfinitesGenerators-Q}
{(\xi_1)}_\subQ = -X_3  \qquad \mbox{and}  \qquad {(\xi_2)}_\subQ = \langle \vecgamma, {\bf X} \rangle,
\end{equation}

Finally we set an adapted basis to the splitting $TQ= D\oplus W$ given by $\{{\bf X},\partial_{a_1}, \partial_{a_2} \}$ and its dual defined by $\{ \vecL, \epsilon^1, \epsilon^2\}$.

\subsection{The constraint submanifold $\M \subset T^*Q$ and the (reduced) differential space $\bar{\M}$}\label{Ss:Solids-Reduction}
 
 Given $(g,\veca)\in Q$, we denote by $({\bf M}, {\bf p})= (M_1, M_2, M_3, p_1, p_2)$ the coordinates on $T^*_{(g,\veca)} Q$ associated to the basis $\{ \vecL, \epsilon^1, \epsilon^2\}$.
 
Using the kinetic energy metric from \eqref{Eq:Lagrangian} we compute the 8 dimensional submanifold $\M= \kappa^\flat(D) \subset T^*Q$, which in this case is
\begin{equation}\label{Eq:Solids-M}
\M = \{(g,\veca; {\bf M}, {\bf p})\in T^*Q \ : \ p_1 =  m \langle \vecalpha, s\times \vecOm \rangle, \quad p_2 = m \langle \vecbeta , s \times \vecOm\rangle \},
\end{equation}
where ${\bf M} = \mathbb{I} \vecOm + m s\times (\vecOm \times s)$. We consider $(g,\veca,{\bf M})$ coordinates on $\M$ and the projection $\tau: \M\to Q$ is given by $\tau(g,\veca,{\bf M})=(g,\veca)$.

The lifted action of the Lie group $G=S^1\times E(2)$ to $T^*Q$ leaves the manifold $\M$ invariant. The induced $S^1$-action on $\M$ is given by 
$$
\left( h_\theta, (g,\veca,{\bf M}) \right) \mapsto (gh_\theta^{-1}, \bar{h}_\theta \veca , h_\theta {\bf M}),
$$ 
whereas the $E(2)$-action is given by 
$$
\left( (h_\varphi,(x,y)), (g,\veca,{\bf M} ) \right) \mapsto (h_\varphi g, \bar{h}_\varphi\veca + (x,y), {\bf M}).
$$

Since the action is proper, following Section \ref{Ss:Symmetries} and \cite{Book:BC}, the reduced space $\bar\M := \M/G$ is a differential space and, as usual, we denote $\rho:\M\to \bar\M$ the orbit projection.  To describe $\bar\M$ we will quotient the manifold $\M$ in two steps. First we consider the (free and proper) action of the normal subgroup $E(2)$ of $G$ on $\M$ and we see that $\M/E(2) \simeq S^2 \times \R^3$ with coordinates given by $(\vecgamma, {\bf M})$.  Then, we compute the quotient of the manifold $\M/E(2)$ by the remaining group $G/E(2) = S^1$. Following \cite{Book:CDS}, in order to describe the differential space $\bar\M$  we use {\it invariant theory}. The algebra of $S^1$-invariant polynomials on $S^2\times \R^3$ is generated by 
\begin{equation} \label{Eq:Solids-ReducedVar}
 \begin{split}
\tau_1 & = \gamma_3 \qquad \tau_2 = \gamma_1M_2 - \gamma_2M_1 \qquad \tau_3= \gamma_1M_1 + \gamma_2M_2, \\
\tau_4 & =M_3 \qquad \tau_5 = M^2_1 + M^2_2
 \end{split}
\end{equation}
with the relation $\tau_2^2 +\tau_3^2 = (1-\tau_1^2)\tau_5$ and $|\tau _3|\leq1$, $\tau_5 \geq 0$.  
Therefore, $\bar\M=\M/G$ is the semialgebraic variety in $\R^5$ defined by 
$$
\bar\M := \{(\tau_1, \tau_2,\tau_3, \tau_4, \tau_5) \ : \ \tau_2^2 +\tau_3^2 = (1-\tau_1^2)\tau_5, \ |\tau _3|\leq1, \ \tau_5 \geq 0\}.
$$
The singular points are $(\pm1,0,0,\tau_4,\tau_5)\in \bar\M$. Away from the singular points, $\bar\M$ is a 4-dimensional manifold. 

Recall that on the manifold $\M$ we have the nonholonomic bracket $\{\cdot, \cdot\}_\nh$ and its associated nonholonomic bivector $\pi_\nh$ as defined in \eqref{Eq:HamVectorField}.  Hence, there is a (singular) almost Poisson bracket $\{\cdot, \cdot \}_\red$ on the differential space $\bar\M$, as it was given in \eqref{Eq:ReducedBracket}.

Next, we verify the failure of the Jacobi identity for the reduced bracket $\{\cdot, \cdot \}_\red$ describing the (reduced) dynamics of the solids of revolution.

\subsection{The failure of the Jacobi identity of $\{\cdot, \cdot\}_\red$}
 
Following Prop.~\ref{P:Red-Jac} we start the section computing the 2-form $\langle \mathcal{J}, \mathcal{K}_\subW\rangle$ on $\M$ given in \eqref{Def:JK} in our chosen basis.

Recall that $(g, \veca, {\bf M})$ are our coordinates on the constraint submanifold $\M$. Then, we have the (local) basis of 1-forms on $\M$ given by  
\begin{equation}\label{Eq:basisM} 
\{\tau^*\vecL, \tau^*\epsilon^1, \tau^*\epsilon^2, d{\bf M} \},
\end{equation}
where $\tau:\M\to Q$ is the canonical projection and where $\epsilon^1$ and $\epsilon^2$ are the constraint 1-forms given in \eqref{Eq:basisM}.  The associated dual basis of vector fields on $\M$ is $\{ {\bf X}, \partial_{a_1}, \partial_{a_2}, \partial_{\bf M} \}$\footnote{We are denoting by  ${\bf X}, \partial_{a_1}, \partial_{a_2}$ vector fields on $Q$ and also vector fields on $\M$. When it is clear from the context, we will also denote by $\vecL$ and $\epsilon^1, \epsilon^2$ the 1-forms on $\M$ that are the pull backs of the corresponding 1-forms on $Q$} for $\partial_{\bf M} = (\partial_{M_1}, \partial_{M_2}, \partial_{M_3})$.

First we see that $\mathcal{A}_\subW: \M \to \g^*$ is given by $\mathcal{A}_\subM = \epsilon^1 \otimes (0;0, (1,0)) + \epsilon ^2 \otimes (0; 0 , (0,1))$ (where $\epsilon^1= \tau^*\epsilon^1$ and $\epsilon^2= \tau^*\epsilon^2$).
So,  the $\W$-curvature is
\begin{equation*}
 \begin{split}
  \mathcal{K}_\subW = & \ - d\langle \vecalpha, s \times \vecL\rangle \otimes (0;0, (1,0)) - d\langle \vecbeta, s\times \vecL \rangle \otimes (0; 0 , (0,1))\\
  =  & \ -  \left( \varrho\, \langle \vecbeta, d\vecL \rangle + \varrho' \beta_3 \langle \vecgamma, d\vecL\rangle - L' \beta_3 \, d\lambda_3\right)\otimes (0;0, (1,0)) \\ 
  & \ + \left( \varrho\langle \vecalpha, d\vecL \rangle + \varrho' \alpha_3 \langle \vecgamma, d\vecL\rangle - L' \alpha_3\,  d\lambda_3\right)\otimes (0; 0 , (0,1)), 
 \end{split}
\end{equation*}
where $d\vecL =  (d\lambda_1, d\lambda_2,d\lambda_3) = (\lambda_3\wedge \lambda_2, \lambda_1\wedge \lambda_3, \lambda_2\wedge \lambda_1)$ and $(\cdot) ' = \tfrac{d}{d\gamma_3}$.
On the other hand, from \eqref{Eq:Solids-M} we write the Liouville 1-form as 
$$
\Theta_\subM = \langle M, \vecL\rangle + m \langle \vecalpha, s\times \vecOm \rangle \epsilon^1 +  m \langle \vecbeta , s \times \vecOm\rangle \epsilon^2.
$$
Hence, the components of the canonical momentum map in $(0;0, (1,0))$ and $(0; 0 , (0,1)) \in \mathfrak{g}$ are 
$$
{\mathcal J}_1= {\bf i}_{\frac{\partial}{\partial a_1}} \Theta_\subM = m \langle \vecalpha, s\times \vecOm \rangle \quad \mbox{and} \quad {\mathcal J}_2 = {\bf i}_{\frac{\partial}{\partial a_2}} \Theta_\subM =  m \langle \vecbeta , s \times \vecOm\rangle.
$$
Following \eqref{Def:JK}, we obtain
\begin{lemma}\label{L:JK}
 For the body of revolution on a plane we have that  
\begin{equation}\label{Eq:Solids-JK}
 \begin{split}
  \langle \mathcal{J},\mathcal{K}_\subW\rangle  = & \  m\varrho \langle \vecgamma, s\rangle \langle \vecOm, d\vecL\rangle - m\varrho \langle \vecOm, \vecgamma\rangle \langle s, d\vecL\rangle - m \varrho' (\vecgamma \times (\vecOm \times s ) )_3 \langle \vecgamma, d\vecL\rangle \\
  & + m L' (\vecgamma \times (\vecOm \times s ) )_3 d\lambda_3,
 \end{split}
\end{equation}
where 
$(a \times b)_3$ denotes the third coordinate of the vector $a\times b$ for $a,b\in\R^3$, i.e., $(a \times b)_3 = a_1b_2 - a_2b_1$. 
\end{lemma}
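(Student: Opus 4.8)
The plan is to compute $\langle\mathcal{J},\mathcal{K}_\subW\rangle$ directly from Definition~\ref{Def:JK} as the pointwise pairing of the $\g^*$-valued function $\mathcal{J}$ with the $\g$-valued $2$-form $\mathcal{K}_\subW$, both already written out in the adapted basis \eqref{Eq:basisM}. The essential simplification is that $\mathcal{K}_\subW$ takes values only in $\mathfrak{w}=\textup{span}\{(0;0,(1,0)),(0;0,(0,1))\}$, so that among all components of $\mathcal{J}$ only the two translational ones, $\mathcal{J}_1={\bf i}_{\partial_{a_1}}\Theta_\subM=m\langle\vecalpha,s\times\vecOm\rangle$ and $\mathcal{J}_2={\bf i}_{\partial_{a_2}}\Theta_\subM=m\langle\vecbeta,s\times\vecOm\rangle$, enter the pairing. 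Writing $\mathcal{K}_\subW=K_1\otimes(0;0,(1,0))+K_2\otimes(0;0,(0,1))$ with $K_1,K_2$ the two displayed $2$-forms, the entire content of the lemma is the evaluation of $\langle\mathcal{J},\mathcal{K}_\subW\rangle=\mathcal{J}_1K_1+\mathcal{J}_2K_2$ followed by an algebraic simplification.

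First I would substitute and group the result according to the three scalar coefficients $\varrho$, $\varrho'$ and $L'$ appearing in $K_1$ and $K_2$. In each group the same antisymmetric combination of $\vecalpha$ and $\vecbeta$ against $p:=s\times\vecOm$ occurs, and the key step is the vector identity
\[
\langle\vecbeta,p\rangle\,\vecalpha-\langle\vecalpha,p\rangle\,\vecbeta=-\,\vecgamma\times p,
\]
which follows from the triple-product (BAC-CAB) identity together with the fact that the rows $\vecalpha,\vecbeta,\vecgamma$ of $g\in SO(3)$ form a positively oriented orthonormal frame, so $\vecalpha\times\vecbeta=\vecgamma$. Expanding once more gives $-\vecgamma\times(s\times\vecOm)=\vecgamma\times(\vecOm\times s)=\langle\vecgamma,s\rangle\,\vecOm-\langle\vecgamma,\vecOm\rangle\,s$, which is precisely the vector whose contraction against the Maurer-Cartan forms produces the right-hand side of \eqref{Eq:Solids-JK}.

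Then I would assemble the three groups. In the $\varrho$ group the full vector $\vecgamma\times(\vecOm\times s)$ is paired with $d\vecL$, yielding the first two terms $m\varrho\langle\vecgamma,s\rangle\langle\vecOm,d\vecL\rangle-m\varrho\langle\vecOm,\vecgamma\rangle\langle s,d\vecL\rangle$. In the $\varrho'$ and $L'$ groups the factors $\alpha_3,\beta_3$ single out the third component of this same vector, so these contributions collapse to multiples of $(\vecgamma\times(\vecOm\times s))_3$ contracted against $\langle\vecgamma,d\vecL\rangle$ and against $d\lambda_3$ respectively, producing the last two terms of \eqref{Eq:Solids-JK}. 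The only genuine obstacle is bookkeeping: one must carry the signs correctly through the two structure-equation expansions of $K_1,K_2$ and through the repeated use of the cross-product identities, there being no conceptual difficulty. As a consistency check, and as a way to combine the last two groups, one may use the relation $\varrho'=L'\gamma_3$ recorded after \eqref{Eq:Constraints}.
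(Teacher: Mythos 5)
Your strategy coincides with the paper's own derivation: Lemma~\ref{L:JK} is obtained there precisely by the pairing $\langle\mathcal{J},\mathcal{K}_\subW\rangle=\mathcal{J}_1K_1+\mathcal{J}_2K_2$, where $K_1,K_2$ are the two $\mathfrak{w}$-components of the displayed expansion of $\mathcal{K}_\subW$ and $\mathcal{J}_1,\mathcal{J}_2$ the two translational momentum components, and your key identity $\langle\vecbeta,p\rangle\vecalpha-\langle\vecalpha,p\rangle\vecbeta=-\vecgamma\times p$ (for $p=s\times\vecOm$, using that $\vecalpha,\vecbeta,\vecgamma$ is a positively oriented orthonormal frame) is correct. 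In particular your $\varrho$-group does reproduce the first two terms of \eqref{Eq:Solids-JK}.

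The difficulty sits exactly at the step you dismiss as bookkeeping. Carrying the signs through the paper's displayed $K_1,K_2$, the $\varrho'$-terms of $\mathcal{J}_1K_1+\mathcal{J}_2K_2$ are
\begin{equation*}
m\varrho'\bigl(\langle\vecbeta,p\rangle\,\alpha_3-\langle\vecalpha,p\rangle\,\beta_3\bigr)\langle\vecgamma,d\vecL\rangle
=+\,m\varrho'\,\bigl(\vecgamma\times(\vecOm\times s)\bigr)_3\,\langle\vecgamma,d\vecL\rangle ,
\end{equation*}
because $\langle\vecbeta,p\rangle\vecalpha-\langle\vecalpha,p\rangle\vecbeta=-\vecgamma\times p=\vecgamma\times(\vecOm\times s)$; likewise the $L'$-terms give $-\,mL'\bigl(\vecgamma\times(\vecOm\times s)\bigr)_3\,d\lambda_3$. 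These are the \emph{opposite} signs to the ones printed in \eqref{Eq:Solids-JK}: executed carefully, your computation yields the lemma with $(\vecgamma\times(\vecOm\times s))_3$ replaced by $(\vecgamma\times(s\times\vecOm))_3$ in the last two terms. No compensating sign is available on your side: the paper's stated conventions, $d\vecL=(\lambda_3\wedge\lambda_2,\lambda_1\wedge\lambda_3,\lambda_2\wedge\lambda_1)$ and $d\gamma_3=\gamma_1\lambda_2-\gamma_2\lambda_1$, do force the displayed $K_1,K_2$, and $\mathcal{J}_1,\mathcal{J}_2$ are as you quote them, so the pairing is determined. The mismatch is therefore internal to the paper, between its displayed $\mathcal{K}_\subW$ and the statement of the lemma (note also that the formula ${\bf K}=-m\varrho\langle\vecgamma,s\rangle\vecOm+\mathcal{L}$ given right after the lemma contradicts, already in its first term, both \eqref{Eq:Solids-JK} and \eqref{Eq:Solids-JK-B}, so this section demonstrably carries sign slips). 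What your write-up lacks is exactly this: it asserts that the signs "produce the last two terms," which is the one claim that fails as printed. A complete proof must carry out the contraction explicitly, record the signs it actually produces, and either locate a genuine compensating sign (there is none under the stated conventions) or flag that the lemma as printed differs from the outcome of its own construction by the sign of $c_3$ in the last two terms.
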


Throughout this work, we will often use that $\langle \mathcal{J},\mathcal{K}_\subW\rangle  = \langle {\bf K}, d\vecL \rangle$ where ${\bf K} = (K_1, K_2, K_3)$ is given by ${\bf K} = - m\varrho \langle \vecgamma,s\rangle \vecOm + \mathcal{L}$ with 
\begin{equation}\label{Eq:L} 
{\mathcal L} = -m(\varrho^2 \langle \vecOm, \vecgamma \rangle  + \rho' c_3)\vecgamma +m(L\varrho \langle \vecOm, \vecgamma\rangle + L' c_3) {\bf e}_3,
\end{equation}
for $c_3 = (\vecgamma \times (\vecOm \times s ) )_3$ and ${\bf e}_3= (0,0,1)$.  Also, we observe that  
\begin{equation} \label{Eq:Q-P}
 {\mathcal L} = {\mathcal Q}\, \vecgamma + {\mathcal P}\, {\bf e}_3
\end{equation}
for ${\mathcal  Q}$ and ${\mathcal P} \in C^\infty(\M)^G$ given by ${\mathcal Q} = -m(\varrho^2 \langle \vecOm, \vecgamma \rangle  + \rho' c_3 )$ and $\mathcal{P} = m(L\varrho \langle \vecOm, \vecgamma\rangle + L' c_3)$.

\medskip

\begin{proposition}\label{P:notPoisson}
 The nonholonomic reduced bracket $\{\cdot, \cdot \}_{\emph\red}$ on $\M/G$ is not Poisson.
\end{proposition}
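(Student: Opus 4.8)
The plan is to invoke the criterion of Proposition~\ref{P:Red-Jac}: the reduced bracket $\{\cdot,\cdot\}_\red$ is Poisson if and only if the right-hand side of \eqref{Eq:Jac-pi_red} vanishes for every $f,g,h\in C^\infty(\bar\M)$. Consequently, to prove that $\{\cdot,\cdot\}_\red$ is \emph{not} Poisson it suffices to exhibit a single triple of $G$-invariant functions for which that expression is nonzero at some point of the smooth locus of $\bar\M$ (that is, away from the poles $\gamma_3=\pm1$, where $\bar\M$ is a genuine $4$-manifold).

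First I would make the obstruction $3$-form $d\langle\mathcal{J},\mathcal{K}_\subW\rangle$ explicit. Writing $\langle\mathcal{J},\mathcal{K}_\subW\rangle=\sum_{i=1}^{3}K_i\,d\lambda_i$ as in Lemma~\ref{L:JK}, with $\mathbf{K}=(K_1,K_2,K_3)$ given by \eqref{Eq:L}--\eqref{Eq:Q-P}, and using $d(d\lambda_i)=0$, one obtains
\begin{equation*}
d\langle\mathcal{J},\mathcal{K}_\subW\rangle=\sum_{i=1}^{3}dK_i\wedge d\lambda_i .
\end{equation*}
Expanding each $dK_i$ in the basis $\{\vecL,\epsilon^1,\epsilon^2,d\mathbf{M}\}$ of \eqref{Eq:basisM} renders this $3$-form completely explicit; the essential terms are those in which a derivative of the nonconstant coefficients $\mathcal{Q},\mathcal{P}$ of \eqref{Eq:Q-P} meets the nonvanishing $SO(3)$ curvature $d\lambda_i$.

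Next I would choose three generators of $C^\infty(\bar\M)$ from $\tau_1,\dots,\tau_5$ in \eqref{Eq:Solids-ReducedVar}, for instance the invariants $\tau_1=\gamma_3$, $\tau_4=M_3$ and $\tau_5=M_1^2+M_2^2$, compute the nonholonomic Hamiltonian vector fields $\pi_\nh^\sharp(d\rho^*\tau_i)$ from \eqref{Eq:HamVectorField} (these automatically take values in $\C$), and substitute the resulting triple into $d\langle\mathcal{J},\mathcal{K}_\subW\rangle$. Evaluating at one explicit generic configuration then gives a nonzero number, which by Proposition~\ref{P:Red-Jac} certifies the failure of the Jacobi identity. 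As an independent check, one can instead compute the reduced structure functions $\{\tau_i,\tau_j\}_\red$ directly and verify that the cyclic sum $\{\tau_1,\{\tau_4,\tau_5\}_\red\}_\red+\text{cyclic}$ does not vanish, recovering the brackets of \cite{BorisovMamaev2002,Ramos2004}.

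The main obstacle is twofold. Computationally, the expansion of $dK_i$ and of the vectors $\pi_\nh^\sharp(d\rho^*\tau_i)$ is lengthy, and one must ensure the chosen triple does not accidentally annihilate $d\langle\mathcal{J},\mathcal{K}_\subW\rangle$ because of the defining relation $\tau_2^2+\tau_3^2=(1-\tau_1^2)\tau_5$; this is why I would verify nonvanishing at a single concrete point rather than attempt to simplify symbolically. Conceptually, the crux is that the coefficients $K_i$ are genuinely nonconstant along the horizontal ($\mathbf{X}$-)directions, so $d\langle\mathcal{J},\mathcal{K}_\subW\rangle$ retains a component that survives the pairing with the invariant directions spanned by the $\pi_\nh^\sharp(d\rho^*\tau_i)$---precisely the obstruction that a Poisson bracket could not possess.
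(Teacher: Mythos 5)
Your proposal is correct and takes essentially the same route as the paper: the paper's proof likewise invokes Proposition~\ref{P:Red-Jac}, computes $\pi_\nh^\sharp(d\rho^*\tau_i)$ from the relation $\Omega_\subM|_\C = (\lambda_i\wedge dM_i - M_i\,d\lambda_i - \langle\mathcal{J},\mathcal{K}_\subW\rangle)|_\C$, and evaluates $d\langle\mathcal{J},\mathcal{K}_\subW\rangle$ on the triple coming from $\tau_1$, $\tau_3+\tau_1\tau_4$ and $\tau_4$, obtaining the value $m\varrho\langle\vecgamma,s\rangle(\gamma_1^2+\gamma_2^2)/(\mathbb{I}_1+m\langle s,s\rangle)$, which is nonzero at every point away from the poles. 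Your alternative triple $(\tau_1,\tau_4,\tau_5)$ also works — carrying out your plan gives an obstruction proportional to $m\varrho\langle\vecgamma,s\rangle\,\tau_3/(\mathbb{I}_1+m\langle s,s\rangle)$ — but since this vanishes on the locus $\tau_3=\gamma_1M_1+\gamma_2M_2=0$, your precaution of evaluating at a single \emph{generic} configuration is genuinely necessary, whereas the paper's choice of invariants avoids that issue.
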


\begin{proof}
Using Prop.~\ref{P:Red-Jac}, we will prove that the right hand side of \eqref{Eq:Jac-pi_red} is different from zero for three functions on $\bar\M$. 
That is, we check that  $d \langle \mathcal{J},\mathcal{K}_\subW\rangle (Z_1,Z_2,Z_3)) \neq 0$ for $Z_1=\pi^\sharp_\nh(d\rho^*\tau_1)$,  $Z_3=\pi^\sharp_\nh(d\rho^*(\tau_3+ \tau_1 \tau_4))$ and $Z_4=\pi^\sharp_\nh(d\rho^*\tau_4)$.

Using that $\Omega_\C = (\lambda_i \wedge dM_i - M_i d\lambda_i - \langle \mathcal{J},\mathcal{K}_\subW\rangle ) |_\C,$
we obtain
$$
\pi_\nh^\sharp(\lambda_i) = \frac{\partial}{\partial M_i}, \quad \pi_\nh^\sharp(d{\bf M}) = -{\bf X} + ({\bf M} + {\bf K}) \times \frac{\partial}{\partial {\bf M}} \quad \mbox{and} \quad \pi_\nh^\sharp(\epsilon^1)= \pi_\nh^\sharp(\epsilon^2)=0.
$$ 
From \eqref{Eq:Solids-ReducedVar} $\rho^*\tau_1 = \gamma_3$,  $\rho^*(\tau_3 + \tau_1\tau_4)= \langle {\bf M}, \vecgamma \rangle$ and $\rho^*\tau_4 = M_3$, and then we compute
\begin{equation} 
 \begin{split}
Z_1 & =\pi^\sharp_\nh(d\gamma_3) = (\vecgamma\times \partial_{\bf M})_3,\\
Z_3 & =\pi^\sharp_\nh(d\langle {\bf M}, \vecgamma \rangle)= -\langle \vecgamma, {\bf X}\rangle + \langle \vecgamma \times {\bf K}, \partial_{\bf M}\rangle,\\
Z_4 & =\pi^\sharp_\nh(dM_3)= -X_3 + ({\bf M} \times \partial_{\bf M})_3 +  ({\bf K} \times \partial_{\bf M})_3.
 \end{split}
\end{equation}

Since $\langle \mathcal{J},\mathcal{K}_\subW\rangle = \langle {\bf K}, d\vecL\rangle$ then $d\langle \mathcal{J},\mathcal{K}_\subW\rangle =  \partial_{\gamma_j}{\bf K}_i \, d\gamma_j \wedge d\lambda_i + \partial_{M_j} {\bf K}_i \, dM_j \wedge d\lambda_i$.
We observe also that $\partial_{M_j} {\bf K}_i = m\varrho \langle \vecgamma, s\rangle \partial_{M_j}\Omega_i  + \partial_{M_j} {\mathcal L}_i$. Using \eqref{Eq:L} we see that 
\begin{equation}\label{Eq:CoordinatesLij} 
\partial_{M_j} {\mathcal L}_i = G_j \gamma_i + H_j\delta_{j3}, 
\end{equation}
 for $G_j = -m(\varrho^2 \partial_{M_j} \langle \vecOm, \vecgamma \rangle \vecgamma + \rho' \partial_{M_j} c_3)$ and $H_j = m(L\varrho \partial_{M_j}\langle \vecOm, \vecgamma\rangle - L'\partial_{M_j} c_3) $ and $\delta_{j3} = 1$ when $j=3$ and $0$ when $j=1,2$.
Finally, we see that
\begin{equation*}
 \begin{split}
d\langle \mathcal{J},\mathcal{K}_\subW\rangle (Z_1,Z_3,Z_4)  = &  -\left(\gamma_2\partial_{M_j}K_1 - \gamma_1 \partial_{M_j}K_2 \right) dM_j(Z_1) \\  
=  & - m \varrho\langle \vecgamma,s\rangle \left[ \gamma_2( -\gamma_2\partial_{ M_1}\Omega_1 + \gamma_1 \partial_{M_1}\Omega_2 ) - \gamma_1( -\gamma_2\partial_{M_2}\Omega_1 + \gamma_1 \partial_{M_2} \Omega_2) \right] \\ 
= & \   m \varrho\langle \vecgamma,s\rangle(\frac{\gamma^2_2+ \gamma_1^2}{\mathbb{I}_1+ m\langle s, s\rangle} )  \neq 0.
\end{split}
\end{equation*}
\end{proof}

\subsection{The dynamical gauge transformation}\label{Ss:Solids-Gauge}

Following Section \ref{Ss:Gauge} we propose a 2-form $B$ defining a {\it dynamical gauge transformation} in order to obtain a new bivector field describing the nonholonomic dynamics in the sense that 
\begin{equation}\label{Eq:NHdynamics}
- \pi_\B^\sharp(d\Ham_\subM) = X_\nh = \langle \vecOm, {\bf X}\rangle + \langle \dot {\bf M }, \frac{\partial}{\partial {\bf M}}\rangle,
\end{equation}
where $\dot {\bf M} = {\bf M} \times \vecOm + m (\vecOm \times \dot s )\times s - m\mbox{g} s\times \vecgamma$ (see \cite[Ch.6]{Book:CDS} and \cite{BorisovMamaev2002}). 

From the computation of the 2-form $\langle \mathcal{J}, \mathcal{K}_\subW\rangle$ in Lemma \ref{L:JK}, let us consider its first term, given by
\begin{equation}\label{Eq:DynGauge}
 B =  m\varrho \langle \vecgamma, s\rangle \langle \vecOm, d\vecL\rangle.
\end{equation}

\begin{proposition}
 The 2-form $B$ is a dynamical gauge transformation of $\pi_{\emph\nh}$ on $\M$.
\end{proposition}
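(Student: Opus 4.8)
A 2-form $B$ defines a dynamical gauge transformation of $\pi_\nh$ precisely when (i) $(\Omega_\subM + B)|_\C$ is nondegenerate, and (ii) ${\bf i}_{X_\nh} B = 0$. For condition (i), the cleanest route is to invoke Remark~\ref{R:SemiBasic}: any 2-form that is semi-basic with respect to the projection $\tau:\M\to Q$ automatically satisfies the nondegeneracy of $(\Omega_\subM + B)|_\C$. So the first step is to observe that the proposed $B = m\varrho\langle\vecgamma,s\rangle\langle\vecOm,d\vecL\rangle$ is semi-basic. This follows because $B$ is built from the 1-forms $d\vecL = (d\lambda_1,d\lambda_2,d\lambda_3)$, and each $d\lambda_i = \lambda_j\wedge\lambda_k$ is a wedge of (pullbacks of) the Maurer--Cartan forms, hence annihilates any vector tangent to the fibers of $\tau$ (i.e. any vector of the form $\partial_{\bf M}$). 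Since $B$ is a sum of $2$-forms each containing a factor $d\lambda_i$, contracting with a $\tau$-vertical vector gives zero, so $B$ is semi-basic and condition (i) holds.

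\textbf{For condition (ii), the work is the contraction ${\bf i}_{X_\nh}B = 0$.} Here I would use the explicit expression $X_\nh = \langle\vecOm,{\bf X}\rangle + \langle\dot{\bf M},\partial_{\bf M}\rangle$ from \eqref{Eq:NHdynamics}. Writing $B = \langle {\bf K}_0, d\vecL\rangle$ with ${\bf K}_0 = m\varrho\langle\vecgamma,s\rangle\vecOm$ (the first term of ${\bf K}$ in Lemma~\ref{L:JK}), the contraction splits according to the two pieces of $X_\nh$. The $\partial_{\bf M}$-part contributes nothing because $B$ is semi-basic, as just established. So only the horizontal part $\langle\vecOm,{\bf X}\rangle$ survives, and I must show
\[
{\bf i}_{\langle\vecOm,{\bf X}\rangle} \big( m\varrho\langle\vecgamma,s\rangle\,\langle\vecOm,d\vecL\rangle \big) = 0.
\]
Using $d\lambda_i = \lambda_j\wedge\lambda_k$ (cyclic) and $\lambda_i(X_j)=\delta_{ij}$, the contraction ${\bf i}_{\langle\vecOm,{\bf X}\rangle}\langle\vecOm,d\vecL\rangle$ reduces to a pointwise algebraic expression in the components of $\vecOm$. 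The key structural fact is that $\langle\vecOm,d\vecL\rangle = \sum_i \Omega_i\,d\lambda_i$ is, up to the scalar factor, the ``angular-velocity curvature'' contracted with itself, and contracting the vector $\langle\vecOm,{\bf X}\rangle$ into this expression produces a term proportional to $\vecOm\times\vecOm = 0$.

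\textbf{The step I expect to be the main obstacle is the explicit antisymmetry cancellation in condition (ii).} Concretely, after contraction one obtains a $1$-form whose components are quadratic in $\vecOm$ and antisymmetric under the relevant index exchange, so they vanish identically; making this precise requires writing out ${\bf i}_{\langle\vecOm,{\bf X}\rangle}(\Omega_k\,\lambda_i\wedge\lambda_j)$ over the cyclic sum and checking that the surviving terms pair up with opposite signs. This is a finite, routine but slightly delicate bookkeeping computation in the frame $\{\vecL\}$. Once this cancellation is verified, both conditions of Definition~\ref{Def:Gauge} hold, so $B$ defines a dynamical gauge transformation of $\pi_\nh$, and in particular the gauge-transformed bivector $\pi_\B$ still describes the dynamics via $X_\nh = -\pi_\B^\sharp(d\Ham_\subM)$, as required by \eqref{Eq:NHdynamics}.
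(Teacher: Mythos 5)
Your proposal is correct and follows essentially the same route as the paper: condition $(i)$ via semi-basicness of $B$ and Remark~\ref{R:SemiBasic}, and condition $(ii)$ by contracting $X_\nh$ with $B$, where the paper records the same cancellation as ${\bf i}_{X_\nh} B = m\varrho\langle\vecgamma,s\rangle\langle\vecOm,\vecOm\times\vecL\rangle = 0$, i.e.\ precisely the $\vecOm\times\vecOm$ antisymmetry you identify. The bookkeeping you flag as the main obstacle is exactly this one-line triple-product identity, so there is no gap.
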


\begin{proof} Following Definition \ref{Def:Gauge}, we check that conditions $(i)$ and $(ii)$ are satisfied.   
 First observe $B$ is semi-basic with respect to the bundle $\tau:\M\to Q$ and therefore $(\Omega_\subM +B)|_\C$ is nondegenerate (see Remark~\ref{R:SemiBasic}). Second, using \eqref{Eq:NHdynamics}, we see that $${\bf i}_{X_\nh} B  = m\varrho \langle \vecgamma, s \rangle \langle \vecOm, \vecOm \times \vecL\rangle  = 0.$$ So, $B$ given in \eqref{Eq:DynGauge} defines a dynamical gauge transformation. 
\end{proof}

It follows that the gauge transformation of $\pi_\nh$ by the 2-form $B$ produces a new bivector field $\pi_\B$ in the sense of \eqref{Eq:B-NHbracket} that describes the dynamics as in \eqref{Eq:NHdynamics}.  The new bivector field $\pi_\B$ is determined by the 2-section $(\Omega_\subM +B)|_\C= (\lambda_i \wedge dM_i - M_i d\lambda_i - \langle \mathcal{L},d\vecL \rangle ) |_\C$ for $\mathcal{L}$ given in \eqref{Eq:L} and where we used that 
\begin{equation}\label{Eq:Solids-JK-B}
\langle \mathcal{J},\mathcal{K}_\subW\rangle - B = \langle{\mathcal L},d\vecL\rangle = {\mathcal Q}\langle \vecgamma, d\vecL\rangle + {\mathcal P} d\lambda_3.
\end{equation} 
For completeness we write the new bivector $\pi_\B$ describing the motion of any solid of revolution rolling without sliding on a plane: 
$$
\pi_\B = {\bf X}_i \wedge \frac{\partial}{\partial M_i} - (M_1+ \mathcal{L}_1)\frac{\partial}{\partial M_2}\wedge \frac{\partial}{\partial M_3} - (M_2+\mathcal{L}_2)\frac{\partial}{\partial M_3}\wedge \frac{\partial}{\partial M_1} - (M_3+\mathcal{L}_3)\frac{\partial}{\partial M_1}\wedge \frac{\partial}{\partial M_2}.
$$

We denote by $\{\cdot, \cdot \}_\B$ the associated almost Poisson bracket.  
Since the 2-form $B$ is $G$-invariant, the bracket $\{\cdot, \cdot \}_\B$  is also $G$-invariant and thus there is an induced (singular) almost Poisson bracket $\{\cdot, \cdot \}_\red^\B$ on $\bar\M$ as in \eqref{Eq:GaugeReducedBracket}. That is, for $f,g \in C^\infty(\bar\M)$ we have 
\begin{equation}\label{Eq:Solids-GaugeReducedBracket}
\{f,g\}_\red^\B \circ \rho = \{\rho^*f,\rho^*g\}_\B.
\end{equation}
where, as usual,  we are identifying the smooth functions on $\bar\M$ with $C^\infty(\M)^G$.

\bigskip

As we will see, this dynamical gauge transformation has the property that the reduced bracket $\{\cdot, \cdot\}_\red^\B$ is Poisson, i.e., it hamiltonizes the system. We will check that in two ways: in the next Theorem we prove it using a direct argument based on the formulation \eqref{Eq:Jac-pi_Bred}.  
In the next section, we will derive this result from the study of horizontal gauge momenta associated to the nonholonomic system. This second approach also clarifies the seemingly ``ad-doc'' choice of $B$.

\begin{theorem}\label{T:Solids-Poisson}
 The reduced bracket $\{\cdot, \cdot\}_{\emph\red}^\B$ on $\bar\M$, describing the dynamics of a solid of revolution on a plane, is Poisson. 
\end{theorem}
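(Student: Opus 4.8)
The plan is to use the characterization of the failure of the Jacobi identity given in \eqref{Eq:Jac-pi_Bred}. Since $\{\cdot,\cdot\}_\red^\B$ is Poisson precisely when its Jacobiator vanishes, and by \eqref{Eq:Jac-pi_Bred} this Jacobiator is controlled by the 3-form $d\langle\mathcal{J},\mathcal{K}_\subW\rangle - dB$ evaluated on the distribution $U_\B$ of \eqref{Eq:U_B}, it suffices to prove that
\[
[d\langle\mathcal{J},\mathcal{K}_\subW\rangle - dB]\big(\pi_\B^\sharp(d\rho^*f),\,\pi_\B^\sharp(d\rho^*g),\,\pi_\B^\sharp(d\rho^*h)\big)=0
\]
for all $f,g,h\in C^\infty(\bar\M)$. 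Because the ring $C^\infty(\bar\M)$ is generated by the invariant functions $\tau_1,\dots,\tau_5$ of \eqref{Eq:Solids-ReducedVar}, bilinearity and the Leibniz rule reduce the verification to the finitely many triples built from the generators $\rho^*\tau_i$.

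First I would compute the Hamiltonian vector fields $\pi_\B^\sharp(d\rho^*\tau_i)$ explicitly from the bivector $\pi_\B$ displayed above. Since $B$ is semi-basic one still has $\pi_\B^\sharp(\epsilon^1)=\pi_\B^\sharp(\epsilon^2)=0$ and $\pi_\B^\sharp(\lambda_i)=\partial_{M_i}$, while $\pi_\B^\sharp(d{\bf M})=-{\bf X}+({\bf M}+\mathcal{L})\times\partial_{\bf M}$; these are exactly the formulas used in Proposition \ref{P:notPoisson} with ${\bf K}$ replaced by $\mathcal{L}$. In particular each $\pi_\B^\sharp(d\rho^*\tau_i)$ is a combination of ${\bf X}$ and $\partial_{\bf M}$ only. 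Next I would use \eqref{Eq:Solids-JK-B} to replace the relevant 2-form by $\langle\mathcal{J},\mathcal{K}_\subW\rangle-B=\langle\mathcal{L},d\vecL\rangle=\mathcal{Q}\langle\vecgamma,d\vecL\rangle+\mathcal{P}\,d\lambda_3$, with $\mathcal{Q},\mathcal{P}\in C^\infty(\M)^G$, and differentiate it using the structure equations $d\vecL=(\lambda_3\wedge\lambda_2,\lambda_1\wedge\lambda_3,\lambda_2\wedge\lambda_1)$ together with the reconstruction formula expressing $d\vecgamma$ through the $\lambda_i$. This presents $d\langle\mathcal{L},d\vecL\rangle$ as a sum of terms of the shapes $dM_j\wedge d\lambda_i$ and $\lambda_k\wedge d\lambda_i$ with $G$-invariant coefficients.

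The decisive observation is that this computation runs exactly parallel to the one in the proof of Proposition \ref{P:notPoisson}, where the only nonvanishing contribution $d\langle\mathcal{J},\mathcal{K}_\subW\rangle(Z_1,Z_3,Z_4)=m\varrho\langle\vecgamma,s\rangle(\gamma_1^2+\gamma_2^2)/(\mathbb{I}_1+m\langle s,s\rangle)$ came entirely from the term $m\varrho\langle\vecgamma,s\rangle\langle\vecOm,d\vecL\rangle$, that is, precisely from $B$. Having subtracted exactly this term, I expect $d\langle\mathcal{L},d\vecL\rangle$ to evaluate to zero on every triple of the $\pi_\B^\sharp(d\rho^*\tau_i)$, which gives the vanishing of the right-hand side of \eqref{Eq:Jac-pi_Bred}.

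The main obstacle is the bookkeeping of this final evaluation: one must check that the cyclic sum vanishes on each independent triple of generators, tracking the $\partial_{M_j}$-derivatives of $\mathcal{Q}$ and $\mathcal{P}$ and the $\lambda$-components of $d\vecgamma$. Conceptually, the cancellation occurs because removing $B$ renders $\langle\mathcal{L},d\vecL\rangle$ sufficiently basic that its differential no longer pairs nontrivially with the invariant Hamiltonian vector fields spanning $U_\B$; equivalently, the reduced bracket now carries enough Casimirs that its symplectic leaves drop in dimension, forcing any 3-form to vanish on the resulting low-rank characteristic distribution. This last interpretation is exactly what the alternative, Casimir-based argument of the following section makes precise, so that the two approaches reinforce one another.
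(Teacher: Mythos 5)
Your overall plan is the same as the paper's first proof: invoke \eqref{Eq:Jac-pi_Bred}, reduce to the generators $\tau_1,\dots,\tau_5$ of $C^\infty(\bar\M)$, compute $Z_i=\pi_\B^\sharp(d\rho^*\tau_i)$, rewrite $\langle\mathcal{J},\mathcal{K}_\subW\rangle-B=\langle\mathcal{L},d\vecL\rangle$ via \eqref{Eq:Solids-JK-B}, and show that its differential vanishes on all triples of the $Z_i$. The genuine gap is that you never establish this vanishing: you infer it (``I expect\dots'') from the fact that in Prop.~\ref{P:notPoisson} the nonzero value of $d\langle\mathcal{J},\mathcal{K}_\subW\rangle(Z_1,Z_3,Z_4)$ is proportional to $m\varrho\langle\vecgamma,s\rangle$, i.e.\ ``comes from $B$''. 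But that computation concerns a \emph{single} triple, and for that triple the $\mathcal{L}$-contribution happens to cancel because $\gamma_2\,\partial_{M_j}\mathcal{L}_1-\gamma_1\,\partial_{M_j}\mathcal{L}_2=0$ by \eqref{Eq:CoordinatesLij}; nothing follows about the remaining triples. In particular, the evaluations on $(Z_2,Z_3,Z_5)$ and $(Z_2,Z_4,Z_5)$ pick up exactly the terms of $d\langle\mathcal{L},d\vecL\rangle$ carrying the $\gamma$-derivatives of $\mathcal{Q}$ and $\mathcal{P}$ (e.g.\ the $\lambda_1\wedge\lambda_2\wedge\lambda_3$ component), which never entered the computation of Prop.~\ref{P:notPoisson} at all, so no ``parallel'' reasoning can dispose of them. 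Killing those terms requires the specific structure of $\mathcal{Q}$ and $\mathcal{P}$ recorded in \eqref{Eq:CoordinatesLij} and in the decomposition $G_j=G(\gamma_3)A_j^{-1}s_j+g(\gamma_3)\delta_{j3}$, $H_j=H(\gamma_3)A_j^{-1}s_j+h(\gamma_3)\delta_{j3}$, which is what yields \eqref{Eq:Solids-ProofZ1} and the subsequent checks; this verification \emph{is} the proof, not bookkeeping that can be waved at.

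Your two conceptual fallbacks do not close this gap either. First, $\langle\mathcal{L},d\vecL\rangle$ is \emph{not} ``sufficiently basic'': its term $\mathcal{P}\,d\lambda_3$ fails to be horizontal, since ${\bf i}_{(\xi_2)_\subM}d\lambda_3=\gamma_2\lambda_1-\gamma_1\lambda_2\neq 0$, a point the paper itself emphasizes (Remark~\ref{R:Bolsinov}) as the obstruction to obtaining even a twisted Poisson bracket on $\M/E(2)$; the vanishing on $U_\B$ holds for reasons specific to that distribution, not because the form descends. Second, the ``enough Casimirs'' argument presupposes that $\bar{J}_1,\bar{J}_2$ are Casimirs of $\{\cdot,\cdot\}_\red^\B$, which is Theorem~\ref{T:Solids-Casimirs}, proved in the following section by an independent route (the momentum equation \eqref{Eq:Solids-Diff} together with Theorem~\ref{T:Casimirs}); citing it here does not complete your argument, it replaces it by the paper's second, Casimir-based proof. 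To repair the proposal you must either carry out the evaluation of $d\langle\mathcal{L},d\vecL\rangle$ on all triples of the $Z_i$, or develop the Casimir route in full rather than by forward reference.
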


\begin{proof}

To prove the theorem, we use formula \eqref{Eq:Jac-pi_Bred}, that is, we show that 
$$
(d\langle \mathcal{J},\mathcal{K}_\subW\rangle -dB)|_{U_\B} = 0.
$$   
We will see that 
$d[\langle \mathcal{J},\mathcal{K}_\subW\rangle -B](Z_k,Z_l,Z_h) = 0$ for $k,j,h =1,...,5,$ where 
\begin{equation*} 
 \begin{split}
 Z_1 = & \ \pi_\B^\sharp(d\rho^*\tau_1) = \pi_\B^\sharp(d\gamma_3) = (\vecgamma\times \partial_{\bf M})_3,\\ 
 Z_2 = & \ \pi_\B^\sharp(d\rho^*\tau_2) = \pi_\B^\sharp(d\langle {\bf M}\times \vecgamma , {\bf e}_3 \rangle)= -\gamma_2X_1 + \gamma_1X_2  + [M_1\gamma_3 -\gamma_1 (M_3 + \mathcal{Q} \gamma_3 + \mathcal{P})] \partial_{M_1} \\
 & \ \ \ \ \ \ \ \ \ \ \ \ \ \ \ \ \ \ \ \ \ \ \ \ \ \ \ \ \ \ \ \ \ \ \ \ \ \ \ \ \ \ \ \ + [M_2\gamma_3 -\gamma_2 (M_3 + \mathcal{Q} \gamma_3 + \mathcal{P})] \partial_{M_2} + \mathcal{Q}(\gamma_1^2+\gamma_2^2) \partial_{M_3},\\
 Z_3  = & \ \pi_\B^\sharp(d\rho^*(\tau_3+\tau_1\tau_4)) = \pi_\B^\sharp(d\langle {\bf M}, \vecgamma \rangle)= -\langle \vecgamma, {\bf X}\rangle  - {\mathcal P}(\vecgamma\times \partial_{\bf M})_3 ,\\
 Z_4 = & \ \pi_\B^\sharp(d\rho^*\tau_4) = \pi_\B^\sharp(dM_3)= -X_3 +(M_1\partial_{M_2}-M_2\partial_{M_1} ) + {\mathcal Q}(\gamma_1\partial_{M_2}-\gamma_2\partial_{M_1}),\\
 Z_5 = & \ \tfrac{1}{2}\pi_\B^\sharp(d\rho^*\tau_5) = \pi_\B^\sharp(M_1dM_1 + M_2dM_2) = -M_1X_1 - M_2X_2 +(M_3+\mathcal{L}_3)(M_2\partial_{M_1} - M_1\partial_{M_2}) \\ 
 & \ \ \ \ \ \ \ \ \ \ \ \ \ \ \ \ \ \ \ \ \ \ \ \ \ \ \ \ \ \ \ \ \ \ \ \ \ \ \ \ \ \ \ \  \ \ \ \ \ \ \ \  + {\mathcal Q} (M_1\gamma_2-M_2\gamma_1)\partial_{M_3},
\end{split}
\end{equation*}
recalling that $\mathcal{L} = \mathcal{Q}\, \vecgamma + {\mathcal P}\, {\bf e}_3$ as in \eqref{Eq:Q-P}.
Using \eqref{Eq:Solids-JK-B} we compute
$$
d\langle \mathcal{J},\mathcal{K}_\subW\rangle +dB = \frac{\partial{\mathcal L}_i}{\partial M_j} \, (dK_i \wedge d\lambda_j) - \left( \gamma_2 \frac{\partial \mathcal{P}}{\partial \gamma_1} - \gamma_1 \frac{\partial \mathcal{P}}{\partial \gamma_2} \right) \lambda_1\wedge \lambda_2 \wedge \lambda_3.
$$ 
Following \eqref{Eq:CoordinatesLij} and the fact the functions $G_j$ and $H_j$ can be written as 
$$
G_j = G(\gamma_3) A_j^{-1}s_j + g(\gamma_3) \delta_{j3} \quad \mbox{and} \quad H_j = H(\gamma_3) A_j^{-1} s_j + h(\gamma_3) \delta_{j3},
$$
for $A_j = \mathbb{I}_j + m \langle s, s \rangle$, it is straightforward to check that 
\begin{equation}\label{Eq:Solids-ProofZ1}
{\bf i}_{Z_1} [d\langle \mathcal{J},\mathcal{K}_\subW\rangle +dB] = 0.
\end{equation}
Then, we also see that ${\bf i}_{Z_3, Z_4} [d\langle \mathcal{J},\mathcal{K}_\subW\rangle -dB]= 0$. With a little more work one checks that $[d\langle \mathcal{J},\mathcal{K}_\subW\rangle -dB](Z_2,Z_3, Z_5)  = 0$, but then it is easy to see that also $ [d\langle \mathcal{J},\mathcal{K}_\subW\rangle -dB](Z_2,Z_4, Z_5) = 0$. 
\end{proof}

From Theorem \ref{T:Solids-Poisson} we conclude that the dynamics of the solids of revolution rolling on a plane without sliding is hamiltonizable through a reduction process. 

Away from the singularities of $\bar\M$, one has a Poisson bracket $\{\cdot , \cdot \}_\red^\B$ of rank 2 (i.e., 2-dimensional symplectic leaves).  Again, for completeness we write the reduced bivector field $\pi_\red^\B$ away the singularities of the space $\bar\M$: 
\begin{equation*}
 \begin{split}
\pi_\red^\B = & \  (1-\tau_1^2) \left(\, \partial_{\tau_1}\wedge \partial_{\tau_2} - (\tau_4 + \mathcal{L}_3) \partial_{\tau_2} \wedge \partial_{\tau_3} + \mathcal{Q} \partial_{\tau_2} \wedge \partial_{\tau_4} \, \right) \\
&  + \tau_2 \left(\,  \partial_{\tau_1} \wedge \partial_{\tau_5} -(\tau_4+\mathcal{L}_3)\partial_{\tau_3}\wedge \partial_{\tau_5} + 2\mathcal{Q}\partial_{\tau_4} \wedge \partial_{\tau_5} \, \right)  + 2(\tau_1\tau_5 -\tau_3(\tau_4+\mathcal{L}_3)) \partial_{\tau_2} \wedge \partial_{\tau_5}.
 \end{split}
\end{equation*}

\bigskip

In the next section we will see a more justified way to choose the dynamical gauge transformation \eqref{Eq:DynGauge} so that $\{\cdot, \cdot\}_\red^\B$ is Poisson.  The choice follows  Theorem~\ref{T:Casimirs} based on the existence of conserved quantities (the horizontal gauge momenta) of the system.  As a consequence, we will have a more direct proof of the fact that $\{\cdot, \cdot\}_\red^\B$ is Poisson. 

\subsection{Conserved quantities} 

Following \cite{Book:CDS}, the nonholonomic system describing the dynamics of a solid of revolution admits two additional constants of motion that are, in fact, horizontal gauge momenta with respect to the $G$-action in the sense of Def.~\eqref{Def:Gauge-Momentum}, \cite{FSG2008}. In what follows, we study the horizontal gauge momenta of $X_\nh$ using Section \ref{Ss:Symm-Conservation}.

More precisely, following  \eqref{Eq:gs} we have that $\{\xi_1, \xi_2\}$ is a basis of sections of $\Gamma(\mathfrak{g}_\subSS)$.  
Recall that $P_{\g_\subSS}:\g\times Q \to \g_\subSS$ is the projection associated to the splitting \eqref{Eq:gS+gW} and observe that 
\begin{equation}\label{Eq:Proj-gs} 
\xi_1 = P_{\mathfrak{g}_\subSS} (\eta_1 )\qquad \mbox{and} \qquad \xi_2 = P_{\mathfrak{g}_\subSS} ( \eta_2 ),
\end{equation}
where $\eta_1 = (1; 0, {\bf 0})$ and $\eta_2 = (0; 1, {\bf 0}) $ are elements of the Lie algebra $\mathfrak{g}$ (i.e., constant sections of the bundle $\mathfrak{g}\times Q \to Q$). 
Their infinitesimal generators with respect to the $G$-action on $\M$  are vector fields with values in $\S$ given by
\begin{equation} \label{Eq:InfinitesGenerators-M}
{(\xi_1)}_\subM = -X_3 - (M_2 \frac{\partial}{\partial M_1} - M_1 \frac{\partial}{\partial M_2}) \qquad \mbox{and}  \qquad {(\xi_2)}_\subM = \langle \vecgamma, {\bf X} \rangle.
\end{equation}

From Prop.~\ref{Prop:OmegaJK-Moment} and \eqref{Def:j}, we define the functions $j_1$ and $j_2$ in $C^\infty(\M)$ by 
$$
j_1 = \langle {\mathcal J}^\nh, \xi_1 \rangle = {\bf i}_{(\xi_1)_\subM} \Theta_\subM =  - M_3 \qquad \mbox{and}\qquad j_2 = \langle {\mathcal J}^\nh, \xi_2 \rangle = {\bf i}_{(\xi_2)_\subM} \Theta_\subM = \langle \vecgamma, {\bf M}\rangle.
$$

The functions $j_1$ and $j_2$ are {\it not} first integrals of the dynamics of the solids of revolutions on a plane.  In fact, using Prop.~\ref{P:Obstruction} and \eqref{Eq:NHdynamics}  we have that
\begin{equation}\label{Eq:JK(xi,Xnh)} 
\begin{split}
X_\nh(j_1) = & \ \langle \mathcal{J}, \mathcal{K}_\subW\rangle ((\xi_1)_\subM, X_\nh)= \langle \mathcal{J}, \mathcal{K}_\subW\rangle (-X_3, \langle\vecOm,{\bf X}\rangle)= -\mathcal{Q}\,A_1^{-1} \tau_2,\\
X_\nh(j_2) = & \ \langle \mathcal{J}, \mathcal{K}_\subW\rangle ((\xi_2)_\subM, X_\nh)= \langle \mathcal{J}, \mathcal{K}_\subW\rangle (\langle \vecgamma,{\bf X}\rangle , \langle\vecOm,{\bf X}\rangle)= - {\mathcal P} \, A_1^{-1} \tau_2.
\end{split}
\end{equation}
where $A_1 = \mathbb{I}_1 + m\langle s,s\rangle.$

\begin{lemma}\label{L:Solids-G-Functions} The functions $j_1$, $j_2$ and   ${\mathcal Q}$, ${\mathcal P} \in C^\infty(\M)$ are $G$-invariant and they can be written in the reduced space $\bar\M$ as linear functions in the variables $\tau_3$ and $\tau_4$:
\begin{enumerate}
 \item[$(i)$] The functions $j_1$ and $j_2 \in C^\infty(\bar\M)$ verify that 
$\left(\!\! \begin{array}{c}  j_1 \\ j_2\end{array}\!\!\right) = \left( \begin{array}{cc}  0 & -1 \\ 1 & \tau_1 \end{array}\right) \left(\!\! \begin{array}{c}  \tau_3 \\ \tau_4\end{array}\!\!\right)$
\item[$(ii)$] The functions ${\mathcal Q}$ and ${\mathcal P} \in C^\infty(\bar\M)$ verify that $\left(\!\!\begin{array}{c} {\mathcal Q} \\  {\mathcal P} \end{array}\!\!\right) =  [{\mathcal{QP}}]   \left(\!\!\begin{array}{c}  \tau_3 \\ \tau_4 \end{array}\!\!\right)$ for 
$$
[{\mathcal{QP}}] = [{\mathcal{QP}}](\tau_1) = \frac{m}{P(\tau_1)} \left( \begin{array}{cc} 
 \mathbb{I}_3(- \varrho^2 + \varrho' \zeta)  - \varrho^3 \sigma  & - \varrho\mathbb{I}_1(\varrho \tau_1 + \varrho' (1-\tau_1^2)) - \varrho^2 \zeta \sigma \\ 
 \mathbb{I}_3(\varrho L - L' \zeta) + \varrho^2L\sigma &  \varrho\mathbb{I}_1( L \tau_1 + L' (1-\tau_1^2))+ \varrho\zeta L \sigma  \end{array} \right)
$$
with $\sigma = m \langle s,\vecgamma\rangle = m( \varrho(1-\tau_1^2)+\zeta \tau_1)$ and $P(\tau_1) = \mathbb{I}_1\mathbb{I}_3 + m\langle \mathbb{I}s,s\rangle$.
\end{enumerate}
\end{lemma}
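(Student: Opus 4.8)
The plan is to push everything back to the single algebraic relation ${\bf M} = \mathbb{I}\vecOm + m\, s\times(\vecOm\times s)$ from \eqref{Eq:Solids-M}, exploiting the axial symmetry $\mathbb{I}_1=\mathbb{I}_2$ and the explicit form $s=\varrho\,\vecgamma - L\,{\bf e}_3$ from \eqref{Eq:s}. The $G$-invariance of $j_1=-M_3$ and $j_2=\langle\vecgamma,{\bf M}\rangle$ is immediate from the action formulas of Section~\ref{Ss:Solids-Reduction}: the third row $\vecgamma$ of $g$ and the component $M_3$ are fixed by $E(2)$, while under the $S^1$-action one has ${\bf M}\mapsto h_\theta{\bf M}$ and $\vecgamma\mapsto h_\theta\vecgamma$, so both $M_3$ and $\langle\vecgamma,{\bf M}\rangle$ are preserved; invariance of ${\mathcal Q},{\mathcal P}$ was already recorded after \eqref{Eq:Q-P}. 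Part~$(i)$ is then a direct substitution of the reduced coordinates \eqref{Eq:Solids-ReducedVar}: $j_1=-M_3=-\tau_4$ and $j_2=\langle\vecgamma,{\bf M}\rangle=(\gamma_1M_1+\gamma_2M_2)+\gamma_3M_3=\tau_3+\tau_1\tau_4$, which is precisely the asserted matrix identity.

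For part~$(ii)$ I would first apply the identity $a\times(b\times c)=b\langle a,c\rangle-c\langle a,b\rangle$ twice. Once with $a=c=s$, $b=\vecOm$ it turns \eqref{Eq:Solids-M} into the componentwise relation $M_i=A_i\Omega_i-m\langle s,\vecOm\rangle\,s_i$ with $A_i=\mathbb{I}_i+m\langle s,s\rangle$; once with $a=\vecgamma$, $b=\vecOm$, $c=s$ it gives $c_3=(\vecgamma\times(\vecOm\times s))_3=\langle\vecgamma,s\rangle\,\Omega_3-\langle\vecgamma,\vecOm\rangle\,s_3$. Thus the two quantities appearing in ${\mathcal Q},{\mathcal P}$ in \eqref{Eq:L}, namely $\langle\vecOm,\vecgamma\rangle$ and $c_3$, are determined once $\Omega_3$ and the single scalar $\mu:=\langle s,\vecOm\rangle$ are known in reduced variables.

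The key step is to solve for $\mu$. Contracting $M_i=A_i\Omega_i-m\mu\,s_i$ with $s_i$ and evaluating the pairings via $s=\varrho\,\vecgamma-L\,{\bf e}_3$ (using $s_1M_1+s_2M_2=\varrho\tau_3$, $s_3M_3=(\varrho\tau_1-L)\tau_4$, $s_1^2+s_2^2=\varrho^2(1-\tau_1^2)$ and $s_3^2=(\varrho\tau_1-L)^2$) yields a single linear equation for $\mu$ whose solution is linear in $(\tau_3,\tau_4)$ with $\tau_1$-dependent coefficients. Substituting $\mu$ back into $\langle\vecOm,\vecgamma\rangle$ and $c_3$, and those into \eqref{Eq:L}, exhibits ${\mathcal Q}$ and ${\mathcal P}$ as linear combinations of $\tau_3$ and $\tau_4$. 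The common denominator produced is $A_1A_3\big(1-m[\varrho^2(1-\tau_1^2)/A_1+(\varrho\tau_1-L)^2/A_3]\big)$, which after expansion collapses to $\mathbb{I}_1\mathbb{I}_3+m\langle\mathbb{I}s,s\rangle=P(\tau_1)$; this cancellation is exactly what produces the stated matrix $[{\mathcal{QP}}]$.

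I expect the main obstacle to be the bookkeeping in this last step: one must track the distinct $A_1$ and $A_3$ denominators (the only place the asymmetry $\mathbb{I}_1\neq\mathbb{I}_3$ enters), correctly combine the $\mu$-terms with the direct terms, and identify the four numerator entries after clearing $P(\tau_1)$. As a consistency check I would verify the resulting $[{\mathcal{QP}}]$ in the axisymmetric-surface limit ($\varrho$ constant) and against the expressions in \cite{Book:CDS}.
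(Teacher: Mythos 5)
Your proposal is correct and is essentially the paper's own argument: part $(i)$ is the same direct substitution, and for part $(ii)$ the paper likewise inverts ${\bf M}=\mathbb{I}\vecOm+m\,s\times(\vecOm\times s)$ to $\vecOm=A^{-1}{\bf M}+m\langle s,\vecOm\rangle A^{-1}s$, solves the single scalar equation $\langle s,\vecOm\rangle = \langle A^{-1}{\bf M},s\rangle/E$ (your equation for $\mu$), and substitutes back, the identity $A_1A_3E=P(\tau_1)$ being exactly the cancellation you identify; the only cosmetic difference is that the paper routes the substitution through the intermediate quantities $\sigma_1,\sigma_2$ and records the result as a matrix product $[\mathcal{QP}]=\tfrac{1}{P(\tau_1)}B\,C$. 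One small caution: ``contracting $M_i=A_i\Omega_i-m\mu s_i$ with $s_i$'' must be understood with the weights $1/A_i$ (i.e.\ contracting $\Omega_i=(M_i+m\mu s_i)/A_i$ with $s_i$), since an unweighted contraction leaves a stray $(\mathbb{I}_3-\mathbb{I}_1)s_3\Omega_3$ term --- but the pairings you list and the denominator you obtain show this is what you intended.
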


\begin{proof}
 Item $(i)$ of the Lemma is straightforward. For item $(ii)$ we used the fact that 
 \begin{equation} \label{Eq:Solids:Om}
 \vecOm = A^{-1} {\bf M} + m\langle s, \vecOm\rangle A^{-1} s,
 \end{equation}
 for $A$ the $3\times 3$ diagonal matrix given by $A = \mathbb{I} + m \langle s,s\rangle \textup{Id}$ \, and $\langle s,\vecOm\rangle = \frac{\langle A^{-1} {\bf M}, s\rangle}{E}$, where $E=E(\tau_1) = 1 - m \langle A^{-1}s,s\rangle \neq 0$. If we call $\sigma_1 = \gamma_1\Omega_2 - \gamma_2\Omega_1$ and $\sigma_2 = \Omega_3$ we obtain that 
 $$
 B \left(\!\! \begin{array}{c} \sigma_1\\ \sigma_2\end{array}\!\!\right)= \left(\!\! \begin{array}{c}  \mathcal{Q} \\ \mathcal{P} \end{array}\!\!\right)\quad \mbox{and} \quad \frac{1}{P(\tau_1)} C\left(\!\! \begin{array}{c}  \tau_3 \\ \tau_4\end{array}\!\!\right) = \left(\!\! \begin{array}{c}  \sigma_1 \\ \sigma_2\end{array}\!\!\right),
 $$ 
where $B$ and $C$ are the matrices 
 $$B = m \left(\!\begin{array}{cc} 
 - \varrho^2 + \varrho' \zeta \ \ \  & - \varrho^2 \tau_1 - \varrho'\varrho (1-\tau_1^2) \\ \varrho L - L' \zeta &  \varrho( L \tau_1 + L' (1-\tau_1^2)) \end{array} \! \right)\quad \mbox{and} \quad C =
\left(\!\begin{array}{cc} \mathbb{I}_3  + m\varrho^2(1-\tau_1^2) &  m\varrho\zeta(1-\tau_1^2)  \\  m\varrho \zeta & \mathbb{I}_1 + m \zeta^2 \end{array}\!\right).$$
Hence,
$[{\mathcal{QP}}] = \frac{1}{P(\tau_1)}  B.C$. 
 
\end{proof}

Following Prop.~\ref{P:gaugeMomentum}, the momentum equation \eqref{Eq:DifEq} can be written as a linear system of ordinary differential equations of first order, as the following theorem shows. 

\begin{theorem} \label{T:Solids-Diff}
 The nonholonomic vector field $X_{\emph\nh}$ admits two gauge momenta $J_1$ and $J_2$ that are a combination of the functions $j_1$ and $j_2$:   
\begin{equation}\label{Eq:Solids-ConserQuant}
 J_1 = f_1(\tau_1)  j_1 + g_1(\tau_1)  j_2  \quad \mbox{and} \quad J_2 = f_2(\tau_1)  j_1 + g_2(\tau_1)  j_2,
\end{equation}
where $f_i$ and $g_i \in C^\infty(Q)^G$ satisfy the following system of first order linear differential equations
\begin{equation} \label{Eq:Solids-Diff} 
\left(\!\! \begin{array}{c} f' \\ g'\end{array}\!\! \right) = \left( \begin{array}{cc}  \tau_1 & -1 \\ 1 & 0  \end{array}\right) [\mathcal{QP}]^T \left(\!\! \begin{array}{c} f\\ g\end{array}\!\! \right).
\end{equation} 
 for $(\cdot) ' = \tfrac{d}{d\tau_1}(\cdot)$. Here $[\mathcal{QP}]^T$ denotes the transpose of the matrix $[\mathcal{QP}]$.
\end{theorem}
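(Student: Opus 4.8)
\section*{Proof proposal}

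The plan is to apply the momentum equation of Prop.~\ref{P:gaugeMomentum} with $k=2$, writing a candidate gauge momentum as $J = f\,j_1 + g\,j_2$ for $G$-invariant functions $f,g \in C^\infty(Q)^G$, and to show that the scalar equation \eqref{Eq:DifEq} collapses onto the advertised $2\times 2$ linear ODE system once everything is expressed in the reduced variable $\tau_1$. Since $f$ and $g$ are $G$-invariant they depend only on $\tau_1 = \gamma_3$, so that $df(X_\nh) = f'(\tau_1)\,X_\nh(\tau_1)$ and $dg(X_\nh) = g'(\tau_1)\,X_\nh(\tau_1)$. Substituting these together with Prop.~\ref{P:Obstruction} turns \eqref{Eq:DifEq} into
\[
j_1\,f'\,X_\nh(\tau_1) + j_2\,g'\,X_\nh(\tau_1) + f\,X_\nh(j_1) + g\,X_\nh(j_2) = 0,
\]
where $X_\nh(j_1)$ and $X_\nh(j_2)$ have already been computed in \eqref{Eq:JK(xi,Xnh)}.

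First I would compute $X_\nh(\tau_1) = X_\nh(\gamma_3)$. Using the kinematic relation $\dot\vecgamma = \vecgamma\times\vecOm$ together with the expression \eqref{Eq:Solids:Om} for $\vecOm$ and $s = \varrho\,\vecgamma - L\,{\bf e}_3$, the term proportional to $\langle s,\vecOm\rangle$ carries the factor $\gamma_1 s_2 - \gamma_2 s_1 = \varrho(\gamma_1\gamma_2 - \gamma_2\gamma_1) = 0$ and therefore drops out, leaving the clean identity $X_\nh(\tau_1) = A_1^{-1}\tau_2$ with $A_1 = \mathbb{I}_1 + m\langle s,s\rangle$. This is the same factor $A_1^{-1}\tau_2$ that appears in $X_\nh(j_1) = -\mathcal{Q}\,A_1^{-1}\tau_2$ and $X_\nh(j_2) = -\mathcal{P}\,A_1^{-1}\tau_2$ from \eqref{Eq:JK(xi,Xnh)}. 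Factoring out the common (and generically nonzero) $A_1^{-1}\tau_2$ then reduces the momentum equation to
\[
j_1\,f' + j_2\,g' = f\,\mathcal{Q} + g\,\mathcal{P}.
\]

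The decisive step is to insert the linear-in-$(\tau_3,\tau_4)$ expressions from Lemma~\ref{L:Solids-G-Functions}, namely $j_1 = -\tau_4$, $j_2 = \tau_3 + \tau_1\tau_4$, and $(\mathcal{Q},\mathcal{P})^T = [\mathcal{QP}]\,(\tau_3,\tau_4)^T$. Writing the left-hand side as $g'\,\tau_3 + (-f' + \tau_1 g')\,\tau_4$ and the right-hand side as $(f,g)\,[\mathcal{QP}]\,(\tau_3,\tau_4)^T$, and using that $\tau_3$ and $\tau_4$ are independent coordinates on the smooth part of $\bar\M$, I would match the coefficients of $\tau_3$ and $\tau_4$ separately. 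Denoting $[\mathcal{QP}] = \left(\begin{smallmatrix} a & b \\ c & d\end{smallmatrix}\right)$, this gives $g' = a\,f + c\,g$ and $-f' + \tau_1 g' = b\,f + d\,g$; eliminating $g'$ from the second relation yields $f' = (\tau_1 a - b)f + (\tau_1 c - d)g$, and the pair $(f',g')$ assembles precisely into $\left(\begin{smallmatrix}\tau_1 & -1\\ 1 & 0\end{smallmatrix}\right)[\mathcal{QP}]^T (f,g)^T$, i.e.\ the system \eqref{Eq:Solids-Diff}.

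Existence of the two gauge momenta then follows from the standard theory of linear ODE systems: the coefficient matrix is smooth on $(-1,1)$ (here $P(\tau_1) = \mathbb{I}_1\mathbb{I}_3 + m\langle\mathbb{I}s,s\rangle > 0$ ensures $[\mathcal{QP}]$ is well defined), so its solution space is two-dimensional, and two linearly independent initial conditions furnish two functionally independent solutions $(f_1,g_1)$ and $(f_2,g_2)$, hence two independent horizontal gauge momenta $J_1,J_2$ by Prop.~\ref{P:gaugeMomentum}. I expect the main obstacle to be the bookkeeping rather than any conceptual difficulty: specifically, verifying that the cancellation $X_\nh(\gamma_3) = A_1^{-1}\tau_2$ is exact (so that the common factor really does divide out) and that matching the $\tau_3,\tau_4$ coefficients reproduces the transpose $[\mathcal{QP}]^T$ together with the correct prefactor $\left(\begin{smallmatrix}\tau_1 & -1\\ 1 & 0\end{smallmatrix}\right)$.
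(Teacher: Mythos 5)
Your proposal is correct and follows essentially the same route as the paper's own proof: both start from the momentum equation of Prop.~\ref{P:gaugeMomentum}, reduce it to $j_1 f' + j_2 g' = f\,\mathcal{Q} + g\,\mathcal{P}$ by computing $X_{\nh}(\gamma_3) = A_1^{-1}\tau_2$ (via \eqref{Eq:Solids:Om}) and invoking \eqref{Eq:JK(xi,Xnh)}, and then use the linearity in $(\tau_3,\tau_4)$ from Lemma~\ref{L:Solids-G-Functions} to extract the system \eqref{Eq:Solids-Diff}, concluding with the two-dimensional solution space of a linear first-order ODE system. The only cosmetic difference is that the paper packages the coefficient matching as the matrix identity $(\tau_3,\tau_4)\left(\begin{smallmatrix} 0 & 1 \\ -1 & \tau_1 \end{smallmatrix}\right)\left(\begin{smallmatrix} f' \\ g' \end{smallmatrix}\right) = (\tau_3,\tau_4)\,[\mathcal{QP}]^T\left(\begin{smallmatrix} f \\ g \end{smallmatrix}\right)$, whereas you match the $\tau_3$ and $\tau_4$ coefficients entrywise.
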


\begin{proof}
We start by writing the differential equation \eqref{Eq:DifEq} in the case of the solid of revolution: if $J=fj_1 + gj_2$ is a horizontal gauge momentum, then $f,g\in C^\infty(Q)$ satisfy
\begin{equation}\label{Eq:Solids-DifEq}
 j_1 df(X_\nh)+j_2dg(X_\nh) + f\langle \mathcal{J}, \mathcal{K}_\subW\rangle ((\xi_1)_\subM, X_\nh) + g\langle \mathcal{J}, \mathcal{K}_\subW\rangle ((\xi_2)_\subM, X_\nh) = 0.
 \end{equation}
First we see that if $f$ and $g$ are $G$-invariant functions on $Q$, we have that $df(X_\nh) = f' \rho^*d\tau_1(X_\nh)$ and $dg(X_\nh) = g' \rho^*d\tau_1(X_\nh)$ for $(\cdot) ' = \tfrac{d}{d\tau_1}(\cdot)$. Then, using that $\rho^*d\tau_1 = \gamma_1\lambda_2 - \gamma_2\lambda_1$ and equations \eqref{Eq:NHdynamics} with \eqref{Eq:Solids:Om} we have that 
$
df(X_\nh) = f' A_1^{-1} \tau_2$ and $dg(X_\nh) = g' A^{-1}_1\tau_2.$
Second, using \eqref{Eq:JK(xi,Xnh)}, the differential equation \eqref{Eq:Solids-DifEq} reads   
 \begin{equation}\label{Eq:Solids-Diff-Proof1} 
j_1 f'+ j_2 g'- f\mathcal{Q} - g \mathcal{P} = 0.  
 \end{equation} 
Finally, by Lemma \ref{L:Solids-G-Functions} $(i)$ we can write \eqref{Eq:Solids-Diff-Proof1} as
$ (\tau _3, \tau_4) \left(\!\! \begin{array}{cc} 0 & 1 \\ -1 & \tau_1  \end{array}\!\!\right) \! \left(\!\!\begin{array}{c} f' \\ g'\end{array}\!\!\right) -  (\tau_3, \tau_4) [\mathcal{QP}]^{T}\! \left(\!\!\begin{array}{c} f\\ g\end{array}\!\!\right)=0$, and
so \eqref{Eq:Solids-Diff} follows. The system of first order linear differential equations
\eqref{Eq:Solids-Diff} admits two solutions $(f_1,g_1)$ and $(f_2,g_2)$ and thus we obtain \eqref{Eq:Solids-ConserQuant}.

\end{proof}

\begin{corollary} The conserved quantities $J_1$ and $J_2 \in C^\infty(\M)^G$  of $X_{\emph\nh}$ are of the form
 $$
 J_i = \langle {\mathcal J}^{\emph\nh}, \chi_i\rangle  \ \ \ \mbox{for } i=1,2,
 $$
 where $\chi_i = f_i \xi_1 + g_i \xi_2 \in \Gamma(\mathfrak{g}_\subSS)$ for $\xi_1 = P_{\mathfrak{g}_\subSS} (( 1; 0, {\bf 0}))$ and $\xi_2 = P_{\mathfrak{g}_\subSS} ( (0; 1, {\bf 0}))$ and with the pairs $(f_i,g_i)$ satisfying \eqref{Eq:Solids-Diff}.  In other words,  each $J_i$ is a horizontal gauge momentum with $\chi_i$ the associated horizontal gauge symmetry. 
\end{corollary}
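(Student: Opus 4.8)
The plan is to read this statement off directly from Theorem~\ref{T:Solids-Diff}, which already asserts that $X_\nh$ admits two conserved quantities $J_i = f_i\, j_1 + g_i\, j_2$ with $(f_i,g_i)$ solving the linear system \eqref{Eq:Solids-Diff}. What remains is only to repackage the coefficient functions $f_i,g_i$ together with the base sections $\xi_1,\xi_2$ into a single section $\chi_i\in\Gamma(\g_\subSS)$, and then to recognize $J_i$ as the pairing of the nonholonomic momentum map with $\chi_i$.

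First I would recall from \eqref{Def:j} and \eqref{Eq:Proj-gs} that $j_1 = \langle J^\nh,\xi_1\rangle$ and $j_2 = \langle J^\nh,\xi_2\rangle$, where $\xi_1 = P_{\g_\subSS}((1;0,{\bf 0}))$ and $\xi_2 = P_{\g_\subSS}((0;1,{\bf 0}))$. The crucial observation is that the assignment $\xi\mapsto\langle J^\nh,\xi\rangle$ is $C^\infty(Q)$-linear: for $f\in C^\infty(Q)$ the infinitesimal generator satisfies $(f\xi)_\subM = (\tau^*f)\,\xi_\subM$, so from the defining relation $\langle J^\nh,\xi\rangle = {\bf i}_{\xi_\subM}\Theta_\subM$ one obtains $\langle J^\nh, f\xi\rangle = (\tau^*f)\,\langle J^\nh,\xi\rangle$. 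Applying this with $f=f_i$ and $f=g_i$, Theorem~\ref{T:Solids-Diff} gives
$$
J_i = f_i\,j_1 + g_i\,j_2 = f_i\,\langle J^\nh,\xi_1\rangle + g_i\,\langle J^\nh,\xi_2\rangle = \langle J^\nh,\, f_i\,\xi_1 + g_i\,\xi_2\rangle = \langle J^\nh, \chi_i\rangle,
$$
where $\chi_i := f_i\,\xi_1 + g_i\,\xi_2$. Since $\Gamma(\g_\subSS)$ is a $C^\infty(Q)$-module and $\xi_1,\xi_2\in\Gamma(\g_\subSS)$, the combination $\chi_i$ is again a section of $\g_\subSS$, as required.

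To finish, I would invoke Definition~\ref{Def:Gauge-Momentum}: a function is a horizontal gauge momentum precisely when it has the form $\langle J^\nh,\chi\rangle = {\bf i}_{\chi_\subM}\Theta_\subM$ for some $\chi\in\Gamma(\g_\subSS)$ and is annihilated by $X_\nh$. The first requirement is exactly what was just established, with $\chi=\chi_i$; the second, $X_\nh(J_i)=0$, is part of the conclusion of Theorem~\ref{T:Solids-Diff}, since the pairs $(f_i,g_i)$ were chosen to solve \eqref{Eq:Solids-Diff}, which by Proposition~\ref{P:gaugeMomentum} is equivalent to the momentum equation \eqref{Eq:DifEq} guaranteeing conservation. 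Hence each $J_i$ is a horizontal gauge momentum with horizontal gauge symmetry $\chi_i$.

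The only point that requires care --- and the main conceptual content of the statement --- is the $C^\infty(Q)$-linearity of the nonholonomic momentum pairing, i.e.\ that multiplying a section of $\g_\subSS$ by a basic function pulls out of the pairing as $\tau^*f$. This is precisely the feature of the nonholonomic momentum map (as opposed to an ordinary momentum map evaluated on a fixed Lie-algebra element) that legitimizes absorbing the functions $f_i,g_i$ into the gauge symmetry $\chi_i$ rather than leaving them outside the momentum map; beyond this, no genuine obstacle arises, the result being essentially a reformulation of Theorem~\ref{T:Solids-Diff}.
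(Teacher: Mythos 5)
Your proposal is correct and follows essentially the same route as the paper: the paper states this corollary without proof precisely because it is the specialization of Proposition~\ref{P:gaugeMomentum} (whose proof contains the same $C^\infty(Q)$-linearity observation, $J=\Sigma f_i\,{\bf i}_{(\xi_i)_\subM}\Theta_\subM = \Sigma f_i j_i$) to the solutions $(f_i,g_i)$ of \eqref{Eq:Solids-Diff} produced by Theorem~\ref{T:Solids-Diff}. Your only addition is to make explicit the step $\langle J^{\nh}, f\xi\rangle = (\tau^*f)\langle J^{\nh},\xi\rangle$ and the equivalence of \eqref{Eq:Solids-Diff} with the momentum equation \eqref{Eq:DifEq}, both of which are exactly what the paper's preceding results establish.
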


The functions $J_1$ and $J_2$ are the known first integrals of the solids of revolutions, found first in \cite{Book:CDS}. In what follows we see that they play a fundamental role for the reduced bracket $\{\cdot, \cdot \}_\red^\B$.

Now, we consider the ($G$-invariant) 2-form $B$ given in \eqref{Eq:DynGauge} and the dynamically gauge-related bivector field $\pi_\B$ to the nonholonomic bivector $\pi_\nh$. Following Theorem~\ref{T:Casimirs} we show that the reduced bracket  $\{\cdot, \cdot\}_{\red}^\B$ has {\it two} (independent) Casimirs induced by the horizontal gauge momenta $J_1$ and $J_2$.

\begin{theorem}\label{T:Solids-Casimirs}
 The $G$-invariant functions $J_1$ and $J_2$ induce the Casimirs $\bar{J}_1$ and $\bar{J}_2 \in C^\infty(\bar\M)$ of the reduced bracket $\{\cdot, \cdot\}_{\emph\red}^\B$, where $\bar{J}_1$ and $\bar{J}_2$ satisfy that $J_1 = \rho^*\bar{J}_1$ and $J_2= \rho^*J_2$. 
\end{theorem}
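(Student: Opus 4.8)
The plan is to apply Theorem~\ref{T:Casimirs} separately to $J_1$ and $J_2$. By the Corollary following Theorem~\ref{T:Solids-Diff}, each $J_i$ is a $G$-invariant horizontal gauge momentum of $X_\nh$ with horizontal gauge symmetry $\chi_i = f_i\xi_1 + g_i\xi_2\in\Gamma(\g_\subSS)$, and the 2-form $B$ of \eqref{Eq:DynGauge} has already been shown to be a $G$-invariant dynamical gauge transformation. Hence the only hypothesis of Theorem~\ref{T:Casimirs} still to be checked is the compatibility condition ${\bf i}_{(\chi_i)_\subM}B|_\C = \Lambda_i|_\C$, where $\Lambda_i$ is the 1-form \eqref{Eq:Lambda} associated to the splitting $\chi_i = f_i\xi_1+g_i\xi_2$. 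Once this is in hand, Theorem~\ref{T:Casimirs} gives $\pi_\B^\sharp(dJ_i) = -(\chi_i)_\subM\in\Gamma(\V)$ and, because $B$ and $J_i$ are $G$-invariant, promotes each $\bar J_i$ to a Casimir of $\{\cdot,\cdot\}_\red^\B$.

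The crucial observation is that $B$ is exactly the first term of $\langle\mathcal{J},\mathcal{K}_\subW\rangle$, so by \eqref{Eq:Solids-JK-B} we have $\langle\mathcal{J},\mathcal{K}_\subW\rangle - B = \langle\mathcal{L},d\vecL\rangle = \mathcal{Q}\langle\vecgamma,d\vecL\rangle + \mathcal{P}\,d\lambda_3$. Writing ${\bf i}_{(\xi_k)_\subM}B = {\bf i}_{(\xi_k)_\subM}\langle\mathcal{J},\mathcal{K}_\subW\rangle - {\bf i}_{(\xi_k)_\subM}\langle\mathcal{L},d\vecL\rangle$ and inserting this into the definition \eqref{Eq:Lambda} of $\Lambda_i$, the two $\langle\mathcal{J},\mathcal{K}_\subW\rangle$-contributions cancel and the compatibility condition collapses to
$$
-f_i\,{\bf i}_{(\xi_1)_\subM}\langle\mathcal{L},d\vecL\rangle - g_i\,{\bf i}_{(\xi_2)_\subM}\langle\mathcal{L},d\vecL\rangle = j_1\,df_i + j_2\,dg_i \qquad\text{on }\C.
$$
First I would carry out the contractions: using the generators \eqref{Eq:InfinitesGenerators-M} together with $d\vecL = (\lambda_3\wedge\lambda_2,\lambda_1\wedge\lambda_3,\lambda_2\wedge\lambda_1)$, a direct calculation gives ${\bf i}_{(\xi_1)_\subM}\langle\mathcal{L},d\vecL\rangle = -\mathcal{Q}(\gamma_1\lambda_2-\gamma_2\lambda_1)$ and ${\bf i}_{(\xi_2)_\subM}\langle\mathcal{L},d\vecL\rangle = -\mathcal{P}(\gamma_1\lambda_2-\gamma_2\lambda_1)$ on $\C$; the $\partial_{\bf M}$-component of $(\xi_1)_\subM$ drops out since $\langle\mathcal{L},d\vecL\rangle$ contains no $d{\bf M}$.

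On the right, since $f_i,g_i\in C^\infty(Q)^G$ depend only on $\tau_1=\gamma_3$ and $\rho^*d\tau_1 = \gamma_1\lambda_2-\gamma_2\lambda_1$ (as in the proof of Theorem~\ref{T:Solids-Diff}), one has $df_i = f_i'(\gamma_1\lambda_2-\gamma_2\lambda_1)$ and $dg_i = g_i'(\gamma_1\lambda_2-\gamma_2\lambda_1)$. Both sides are therefore multiples of the single 1-form $\gamma_1\lambda_2-\gamma_2\lambda_1$, so the identity reduces to the scalar equation $f_i\mathcal{Q}+g_i\mathcal{P} = j_1 f_i' + j_2 g_i'$, which is precisely the momentum equation \eqref{Eq:Solids-Diff-Proof1} (equivalently the system \eqref{Eq:Solids-Diff}) satisfied by $(f_i,g_i)$ by construction in Theorem~\ref{T:Solids-Diff}. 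This verifies the hypothesis, and applying Theorem~\ref{T:Casimirs} to each $i=1,2$ yields the two Casimirs $\bar J_1,\bar J_2$. I expect the only real obstacle to be the contraction computation: one must track carefully which $\lambda_i\wedge\lambda_j$ terms survive and confirm that the $\langle\vecgamma,d\vecL\rangle$-piece contributes only through $(\xi_1)_\subM$ and the $d\lambda_3$-piece only through $(\xi_2)_\subM$, so that both sides align as multiples of $\gamma_1\lambda_2-\gamma_2\lambda_1$. After that, the identification with \eqref{Eq:Solids-Diff-Proof1} is immediate, and the degenerate locus $\gamma_1=\gamma_2=0$ (where the form $\gamma_1\lambda_2-\gamma_2\lambda_1$ vanishes and both sides are trivially zero on $\C$) causes no trouble.
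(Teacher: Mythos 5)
Your proposal is correct and follows essentially the same route as the paper's own proof: both reduce the claim, via Theorem~\ref{T:Casimirs}, to the compatibility condition ${\bf i}_{(\chi_i)_\subM}B=\Lambda_i$, both use the splitting $\langle\mathcal{J},\mathcal{K}_\subW\rangle = B+\langle\mathcal{L},d\vecL\rangle$ so that everything collapses onto multiples of $\gamma_1\lambda_2-\gamma_2\lambda_1$, and both close the argument with the momentum equation \eqref{Eq:Solids-Diff-Proof1}. Your explicit contractions ${\bf i}_{(\xi_1)_\subM}\langle\mathcal{L},d\vecL\rangle=-\mathcal{Q}(\gamma_1\lambda_2-\gamma_2\lambda_1)$ and ${\bf i}_{(\xi_2)_\subM}\langle\mathcal{L},d\vecL\rangle=-\mathcal{P}(\gamma_1\lambda_2-\gamma_2\lambda_1)$ are accurate and simply spell out what the paper leaves implicit.
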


\begin{proof}  First we will see that, for each $i=1,2$, if $\chi_i = f_i\xi_1 +g_i\xi_2$ and $J_i = f_ij_1+g_ij_2$, then  $\pi_\B^\sharp(dJ_i)= -(\chi_i)_\subM$ when the pair $(f_i, g_i)$ satisfies  \eqref{Eq:Solids-Diff}. 
Following Thm.~\ref{T:Casimirs}, it is sufficient to prove that ${\bf i}_{(\chi_i)_\subM} B = \Lambda_i$ for $\Lambda_i = {\bf i}_{(\chi_i)_\subM} \langle \mathcal{J}, \mathcal{K}_\subW\rangle + j_1 df_i+  j_2dg_i. $

Observe that $df_i = f'_i (\gamma_1\lambda_2 - \gamma_2\lambda_1)$ and $dg_i = g'_i (\gamma_1\lambda_2 - \gamma_2\lambda_1)$. Using Lemma \ref{L:JK}, we see that $\langle \mathcal{J}, \mathcal{K}_\subW\rangle = B + \langle \mathcal{L}, d\vecL\rangle$.  Therefore, by \eqref{Eq:Q-P} we obtain
\begin{equation}
 \begin{split}
  \Lambda_i & = {\bf i}_{(\chi_i)_\subM} B + f_i {\bf i}_{(\chi_i)_\subM} \langle \mathcal{L},d\vecL\rangle +  g_ i{\bf i}_{(\chi_i)_\subM} \langle \mathcal{L},d\vecL\rangle +  j_1 df_i+  j_2dg_i \\
  & = {\bf i}_{(\chi_i)_\subM} B - ( f_i{\mathcal Q} + g_i \mathcal{P} -  j_1 f'_i -  j_2g'_i )(\gamma_1\lambda_2 - \gamma_2\lambda_1) = {\bf i}_{(\chi_i)_\subM} B,
 \end{split}
\end{equation}
where in the last equation we used that $f_i$ and $g_i$ satisfy equation \eqref{Eq:Solids-Diff-Proof1} since they are the coefficients of the gauge momenta $J_1$ and $J_2$. Therefore, we have that $\pi_\B^\sharp(dJ_i) = -(\chi_i)_\subM$ and so since $J_i\in C^\infty(\M)^G$, we conclude that $\{\bar{J}_i, \cdot \}_{\red}^\B=0$ for $\bar{J}_i\in C^\infty(\bar\M)$ such that $\rho^*\bar{J}_i = \bar{J}_i$.    

\end{proof}

So, Theorem \ref{T:Solids-Casimirs} gives an alternative viewpoint and proof of Theorem \ref{T:Solids-Poisson}.

\begin{corollary} \label{C:Solids-Poisson}
 The reduced bracket $\{\cdot, \cdot\}_{\emph\red}^\B$ on $\M/G$ is Poisson.  
\end{corollary}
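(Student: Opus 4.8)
The plan is to obtain the Poisson property as a direct consequence of the two Casimirs produced in Theorem~\ref{T:Solids-Casimirs}, thereby recovering Theorem~\ref{T:Solids-Poisson} without the explicit Jacobiator computation. I would first discard the singular locus and work on the open dense set $\bar\M^\circ := \bar\M \setminus \{(\pm 1,0,0,\tau_4,\tau_5)\}$, on which $\bar\M$ is a genuine $4$-dimensional manifold and $\pi_\red^\B$ is a smooth bivector field.

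On $\bar\M^\circ$ the decisive input is that the two Casimirs $\bar J_1,\bar J_2$ are functionally independent. This follows by composing the two linear relations already recorded: by Lemma~\ref{L:Solids-G-Functions}$(i)$ the pair $(j_1,j_2)$ is obtained from $(\tau_3,\tau_4)$ through a matrix of determinant $1$, while \eqref{Eq:Solids-ConserQuant} expresses $(J_1,J_2)$ from $(j_1,j_2)$ through the matrix of coefficients $(f_i,g_i)$, whose determinant (a Wronskian) is nonzero because $(f_1,g_1)$ and $(f_2,g_2)$ are independent solutions of the linear system \eqref{Eq:Solids-Diff}. Hence $(\bar J_1,\bar J_2)$ is an invertible ($\tau_1$-dependent) combination of $(\tau_3,\tau_4)$, so $d\bar J_1\wedge d\bar J_2\neq 0$ on $\bar\M^\circ$. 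Since $\bar J_1,\bar J_2$ are Casimirs, $d\bar J_1,d\bar J_2\in \textup{Ker}\,(\pi_\red^\B)^\sharp$, and rank-nullity gives $\textup{rank}\,(\pi_\red^\B)^\sharp = \dim\bar\M^\circ - \dim\textup{Ker}\,(\pi_\red^\B)^\sharp \le 4-2 = 2$.

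I would then invoke the standard fact that the Schouten bracket $[\pi_\red^\B,\pi_\red^\B]$ is a $3$-vector field whose value on any triple $(df,dg,dh)$ depends on these $1$-forms only through the hamiltonian vectors $(\pi_\red^\B)^\sharp(df),(\pi_\red^\B)^\sharp(dg),(\pi_\red^\B)^\sharp(dh)$; equivalently, it is a section of $\wedge^3 D$, where $D=\textup{Im}\,(\pi_\red^\B)^\sharp$ is the characteristic distribution. As $\dim D\le 2$ on $\bar\M^\circ$, we have $\wedge^3 D = 0$, whence $[\pi_\red^\B,\pi_\red^\B]=0$; that is, $\{\cdot,\cdot\}_\red^\B$ satisfies the Jacobi identity on $\bar\M^\circ$.

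It remains to cross the singular points, where $\bar\M$ is only a differential space. Here I would use formula \eqref{Eq:Jac-pi_Bred}: for $f,g,h\in C^\infty(\bar\M)=C^\infty(\M)^G$ the cyclic Jacobiator of $\{\cdot,\cdot\}_\red^\B$ equals the expression $[d\langle\mathcal{J},\mathcal{K}_\subW\rangle - dB](\pi_\B^\sharp(d\rho^*f),\pi_\B^\sharp(d\rho^*g),\pi_\B^\sharp(d\rho^*h))$, which is a continuous function on the manifold $\M$. Since $\rho^{-1}(\bar\M^\circ)$ is the complement of $\{\gamma_3=\pm1\}$, hence open and dense in $\M$, and the expression vanishes there by the previous step, it vanishes identically; thus the Jacobi identity holds for all $f,g,h$ and $\{\cdot,\cdot\}_\red^\B$ is Poisson. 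I expect the only genuine subtlety to be exactly this passage to the singular locus: the differential-space formula \eqref{Eq:Jac-pi_Bred} is what legitimizes the density argument, while everything else reduces to a rank count together with the elementary observation that $\wedge^3 D=0$ once $\dim D\le 2$.
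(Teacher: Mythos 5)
Your steps 1--3 and 5 are fine, but the decisive step 4 rests on a false ``standard fact,'' and this is a genuine gap. It is \emph{not} true that the Schouten bracket $[\pi,\pi]$ of a bivector field is a section of $\wedge^3 D$ with $D=\textup{Im}\,\pi^\sharp$, nor that its value on $(df,dg,dh)$ is determined pointwise by the hamiltonian vectors: the Jacobiator involves first derivatives of $\pi$ in directions transverse to $D$. Concretely, on $\R^3$ take $\pi=\partial_x\wedge(\partial_y+x\,\partial_z)$, so $\{x,y\}=1$, $\{x,z\}=x$, $\{y,z\}=0$. Then $\textup{rank}\,\pi^\sharp=2$ everywhere, hence $\wedge^3 D=0$, yet $[\pi,\pi]=2\,\partial_z\wedge\partial_x\wedge\partial_y\neq 0$; indeed $\{x,\{y,z\}\}+\{y,\{z,x\}\}+\{z,\{x,y\}\}=1$. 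This is precisely the phenomenon behind nonholonomic brackets: low rank with a non-involutive characteristic distribution. What is true is that the Jacobiator vanishes when one argument is a \emph{global} Casimir (then $\{g,h\}\equiv 0$, so $d\{g,h\}\equiv 0$); a covector lying pointwise in $\textup{Ker}\,\pi^\sharp$ forces nothing, because $d\{g,h\}$ need not vanish where $\{g,h\}$ does. So ``two Casimirs $\Rightarrow$ rank $\le 2$ $\Rightarrow$ Jacobi'' is a non sequitur: were it valid, every rank-$2$ almost Poisson bracket would be Poisson, which is false.

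The gap is repairable using an ingredient you established but then discarded: the functional independence of $\bar J_1,\bar J_2$ as \emph{functions}. It gives the rank-2 distribution $\textup{Ker}\,d\bar J_1\cap\textup{Ker}\,d\bar J_2\supseteq \textup{Im}\,(\pi_\red^\B)^\sharp$, which is \emph{involutive} because it is cut out by exact 1-forms; in local coordinates $(\bar J_1,\bar J_2,u,v)$ on $\bar\M^\circ$ the Casimir conditions force $\pi_\red^\B=F\,\partial_u\wedge\partial_v$, which is Poisson (equivalently: the bracket restricts to the 2-dimensional joint level sets, and any bivector field on a 2-manifold is Poisson). This corrected argument is exactly the remark the paper makes immediately after the Corollary. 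The paper's own proof of the Corollary is different and purely algebraic: since $(J_1,J_2)$ is an invertible combination of $(\tau_3,\tau_4)$, the ring $C^\infty(\bar\M)$ is generated by $\tau_1,\tau_2,\tau_5,J_1,J_2$; the Jacobiator annihilates any triple containing a Casimir, so only the triple $(\tau_1,\tau_2,\tau_5)$ remains, and that one is checked by the computation \eqref{Eq:Solids-ProofZ1}. Note that this route works directly on the differential space $\bar\M$, singular points included, whereas your continuity argument via \eqref{Eq:Jac-pi_Bred} is a legitimate alternative for crossing the singular locus, and your Wronskian justification of the independence of $\bar J_1,\bar J_2$ makes explicit a point the paper leaves implicit. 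But neither of these merits compensates for step 4 as written.
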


\begin{proof}
 From Lemma \ref{L:Solids-G-Functions}$(i)$ we observe that $J_1, J_2$ depend linearly from the $G$-invariant functions $\tau_3$ and $\tau_4$.  Therefore, from \eqref{Eq:Solids-ReducedVar}, the $G$-invariant functions on $\M$ are also generated by $J_1, J_2, \tau_1, \tau_2,\tau_5$. Since $\bar{J}_1, \bar{J}_2$ are Casimirs of $\{\cdot, \cdot\}_\red^\B$ we have to check the Jacobi identity only on $\tau_1, \tau_2,\tau_5$.  It is straightforward to check that $cyclic [ \{ \{\tau_1, \cdot\}_\red^\B, \cdot\}_\red^\B ]=0$  using \eqref{Eq:Solids-ProofZ1}.
 
\end{proof}

\begin{remark} Away from the singularities (where $\bar\M$ is a 4-dimensional manifold) we can argue that two Casimirs already guarantee that $\{\cdot, \cdot\}_\red^\B$ is Poisson.  The argument goes as follows, with 2 Casimirs $\bar{J}_1, \bar{J}_2$, it is clear that the characteristic distribution of $\{\cdot, \cdot\}_\red^\B$ has rank 2. Moreover, the characteristic distribution is integrable since its annihilator is given by the exact forms $\{d\bar{J}_1, d\bar{J}_2\}$ and hence, we see that the bracket admits a 2-dimensional foliation. Finally, on each leaf there is a 2-form that is closed since the leaves are 2-dimensional. 

\end{remark}

\begin{remark} \label{R:Solids-not-Poisson} None of the conserved quantities $J_i$, for $i=1,2$  are horizontal gauge momenta with respect to the $E(2)$-action. That is, the vector fields $(\chi_i)_\subM$ are {\it not} infinitesimal generators with respect to the $E(2)$-action. Observe that if one of the conserved quantities, say $J_1$, was a horizontal gauge momentum with respect to the $E(2)$-action then $f_1$ should be zero and thus $J_1$ and $j_2=\langle {\bf M}, \vecgamma\rangle$ would be linearly dependent.  It is easy to see that the pair $(0, g_1)$ is not a solution of \eqref{Eq:Solids-Diff}.  Therefore, there is {\it no} bracket (describing the dynamics) on the manifold $\M/E(2)$ having $\bar{J}_1$ or $\bar{J}_2$ as Casimir functions (c.f. \cite{Bolsinov}).
\end{remark}

\section{Particular examples: the Routh sphere and the rolling ellipsoid}

\subsection{The Routh sphere} \label{S:Routh}

\subsubsection{Preliminaries and hamiltonization}

Consider a sphere of radio $r$ with its geometric center not coinciding with the center of mass.  In the plane perpendicular to the line joining the center of mass and the geometric center, the inertia tensor of the sphere has two equal principal moments of inertia. 
Following \cite{Cushman1998} we study the dynamics of this sphere --called the Routh sphere-- rolling on a plane without sliding, see also \cite{Bizy-Tsiganov}, \cite{routh}.  

As in Section \ref{S:Body-of-revolution}, the coordinates describing the position and velocities of the sphere are $((g,\veca ), \Omega, \dot \veca)$ on $T(SO(3)\times \R^3)$, where $\Omega = g^{-1} \dot g$ with the lagrangian and the nonholonomic constraints given in \eqref{Eq:Lagrangian} and \eqref{Eq:Constraints}, respectively.   In this particular case we have that  $$
s = -r \vecgamma + le_3,
$$
where $l$ is the distance between the center of mass and the geometric center of the sphere.  Therefore, following \eqref{Eq:s} we have $\varrho(\gamma_3) = -r$, $\zeta(\gamma_3) = -r \gamma_3 +l$ and $L(\gamma_3)= -l$, and consequently $\varrho'=L' =0$.

From Lemma \ref{L:JK} (see \eqref{Eq:Solids-JK}) we obtain that
\begin{equation}
  \langle {\mathcal J}, \mathcal{K}_\subW \rangle = -rm \langle \vecgamma, s\rangle \langle \vecOm, d\vecL\rangle + r m \langle \vecOm, \vecgamma\rangle \langle s, d\vecL\rangle, 
\end{equation}
and following \eqref{Eq:DynGauge} we have also that 
\begin{equation}\label{Eq:RouthGauge} 
B= -rm \langle \vecgamma, s\rangle \langle \vecOm, d\vecL\rangle
\end{equation}
defines a dynamical gauge transformation of $\pi_\nh$. We learned about this 2-form $B$ from Luis Garcia-Naranjo during a talk in 2015; it appears in \cite{LuisJames} with a different approach.

Following Section \ref{Ss:Solids-Gauge}, the gauge transformation of $\pi_\nh$ by the 2-form $B$ above, induces a new $G$-invariant bivector field $\pi_\B$. The bivector $\pi_\B$ induces a reduced Poisson bracket $\{\cdot,\cdot\}_\red^\B$ on $\bar\M$ describing the reduced dynamics (Thm.~\ref{T:Solids-Poisson}).

Observe that 
$$ 
\langle {\mathcal J}, \mathcal{K}_\subW \rangle - B = r m \langle \vecOm, \vecgamma\rangle \langle s, d\vecL\rangle = -r^2 m \langle \vecOm, \vecgamma\rangle \langle \vecgamma, d\vecL\rangle + r l m \langle \vecOm, \vecgamma\rangle d\lambda_3,
$$  
and thus we get that $\mathcal{Q} = -r^2 m \langle \vecOm, \vecgamma\rangle$ and $\mathcal{P} = r l m \langle \vecOm, \vecgamma\rangle$.

\begin{remark} Following the idea of Remark~\ref{R:Solids-not-Poisson}, we could wonder if it is possible to obtain a twisted Poisson on $\M/E(2)$; that is, if the reduction by $E(2)$ of $\{\cdot, \cdot\}_\B$ admits a foliations of almost symplectic leaves (as in the case of the Chaplygin ball, Example \ref{Ex:ChapBall}).  In order to obtain a bracket with such properties, the 2-form $\langle {\mathcal J}, \mathcal{K}_\subW \rangle - B$ has to be basic with respect to the bundle $\M\to \M/E(2)$, see item $(iii)$ in Sec.~\ref{Ss:Regular}. On one hand, we see that the first term $\mathcal{Q}\langle \vecgamma, d\vecL\rangle$ is basic. However, on the other hand, the second term $\mathcal{P}d\lambda_3$  is not basic. Moreover since ${\bf i}_{X_\nh} d\lambda_3 \neq 0$, that term cannot be considered as part of the {\it dynamical gauge transformation}.  In conclusion, we cannot expect to have a twisted Poisson bracket on $\M/E(2)$. 
 
\end{remark}

Now, following Section \ref{Ss:Solids-Reduction} the action of the Lie group $G= S^1\times SE(2)$ induces the reduced (differential space) $\bar\M = \M/ G$ and the reduced bracket $\{\cdot, \cdot \}_\red^\B$ defined on functions $C^\infty(\bar\M)$ as in \eqref{Eq:Solids-GaugeReducedBracket}.  From Theorem \ref{T:Solids-Poisson} we have that the reduced bracket $\{\cdot, \cdot \}_\red^\B$ is Poisson on $\bar\M$.  In the next section, we will analyze the conserved quantities and argue that the reduced bracket $\{\cdot, \cdot \}_\red^\B$ is Poisson by following Theorem \ref{T:Solids-Casimirs} and Corollary \ref{C:Solids-Poisson}.

\subsubsection{Conserved quantities and Casimirs}\label{Ss:Casimir}

To compute the conserved quantities that are gauge momenta, we use Thm.~\ref{T:Solids-Diff} and we see that 
\begin{equation*}
 J_1 = f_1(\gamma_3)  j_1 + g_1(\gamma_2)  j_2  \quad \mbox{and} \quad J_2 = f_2(\gamma_3)  j_1 + g_2(\gamma_2)  j_2,
\end{equation*}
for $f_i$ and $g_i \in C^\infty(Q)^G$  satisfying \eqref{Eq:Solids-Diff}. In this case, \eqref{Eq:Solids-Diff} is written as 
\begin{equation} \label{Eq:Routh-Diff} 
\left(\!\!\begin{array}{c} f' \\ g'\end{array}\!\!\right) =  \left(\!\!\begin{array}{cc}\tau_1 & -1 \\ 1 & 0  \end{array}\!\!\right)[\mathcal{QP}]^T  \left(\!\!\begin{array}{c} f\\ g\end{array}\!\!\right) \ \ \mbox{for} \ \  [\mathcal{QP}]^T \!= \frac{mr}{P(\gamma_3)} \left(\!\!\begin{array}{cc} -r(\mathbb{I}_3 -r\sigma)   &  l(\mathbb{I}_3 -r\sigma) \\   -r(\mathbb{I}_1\gamma_3 +\zeta\sigma)   & l (\mathbb{I}_1\gamma_3 +\zeta\sigma) 
\end{array}\!\!\right).
\end{equation}

Observe that the kernel of $[\mathcal{QP}]^T$ is generated by the constant vectors $(l,r)$. That is,  the constant functions $f_1(\gamma_3) = l$ and $g_1(\gamma_3) = r$ are solutions of the differential equations \eqref{Eq:Routh-Diff}. Therefore we obtain the first horizontal gauge momentum $J_1$ given by $J_1 = lj_1+rj_2 = - \langle {\bf M}, s\rangle$.
On the other hand, we see that  the functions
$$
f_2(\gamma_3) = \frac{-\mathbb{I}_1 - ml\zeta}{\sqrt{P(\gamma_3)}} \qquad \mbox{and} \qquad g_2(\gamma_3) = \frac{ - mr\zeta}{\sqrt{P(\gamma_3)}}
$$
satisfy that $[\mathcal{QP}]^T \left(\!\!\begin{array}{c} f_2 \\ g_2 \end{array}\!\!\right) = [\mathcal{QP}]^T \left(\begin{array}{c} \frac{-\mathbb{I}_1}{\sqrt{P(\gamma_3)}}\\ 0 \end{array}\right). $
Now, we can directly check that 
\begin{equation} \label{Eq:Routh-Diff3} 
\tfrac{-m\mathbb{I}_1}{P(\gamma_3)\sqrt{P(\gamma_3)}}\left(\begin{array}{cc}\tau_1 & -1 \\ 1 & 0  \end{array}\right)  [\mathcal{QP}]^T \left(\begin{array}{c} 1 \\ 0 \end{array}\right) =    \left(\!\!\begin{array}{c} f'_2 \\ g'_2 \end{array}\!\!\right).  
\end{equation}

Therefore, $J_2 = f_2j_1+g_2j_2 = \sqrt{P(\gamma_3)} \Omega_3$ is a horizontal gauge momentum with associated horizontal gauge symmetry $\chi_2 = f_2\xi_1+g_2\xi_2$.  The functions $J_1$ and $J_2$ were found in \cite{routh}, see also \cite{Cushman1998}.

Since $J_1$ and $J_2$ are $G$-invariant functions on $\M$, from Theorem \ref{T:Solids-Casimirs}, they induce the Casimir functions $\bar{J}_1$ and $\bar{J}_2$ of the reduced bracket $\{\cdot, \cdot\}_\red^\B$, and following Corollary \ref{C:Solids-Poisson} we obtain that $\{\cdot, \cdot\}_\red^\B$ is Poisson on $\bar\M$.   As we already pointed out in the general case, away from the singularies of $\bar\M$, the reduced bracket $\{\cdot, \cdot\}_\red^\B$, describing the (reduced) dynamics of the Routh sphere, has rank 2 and has a foliation given by 2-dimensional symplectic leaves.

\subsection{The rolling ellipsoid}

\subsubsection{Preliminaries and hamiltonization}

Consider the geometrically axisymmetric ellipsoid
$$
\frac{x^2}{b}+ \frac{y^2}{b} + \frac{z^2}{c} = 1, 
$$ 
with its center of mass coinciding with its geometric center.  We assume also that, after choosing a moving frame whose axes coincide with the principal axes of inertia of the ellipsoid, the inertia tensor  has the form $\mathbb{I}= diag(\mathbb{I}_1, \mathbb{I}_1, \mathbb{I}_3)$.  
In this case, 
$$
s = -\frac{(b\gamma_1 , b\gamma_2,c \gamma_3)}{\sqrt{b(1-\gamma_3^2) + c\gamma_3^2}} = \frac{-\mathbb{B}\vecgamma }{\sqrt{\langle \mathbb{B}\vecgamma,\vecgamma\rangle}},
$$
for $\mathbb{B}= diag(b,b,c)$. That is, 
$$
\varrho(\gamma_3) =  \frac{-b}{\sqrt{b(1-\gamma_3^2) + c\gamma_3^2}} = \frac{-b}{\sqrt{\langle \mathbb{B}\vecgamma,\vecgamma\rangle}} \qquad \mbox{and}  \qquad \zeta(\gamma_3)=  \frac{- c \gamma_3}{\sqrt{b(1-\gamma_3^2) + c\gamma_3^2}} = \frac{-c\gamma_3}{\sqrt{\langle \mathbb{B}\vecgamma,\vecgamma\rangle}}
$$

Therefore, following \eqref{Eq:Solids-JK}, we see that
\begin{equation*}
  \langle \mathcal{J},\mathcal{K}_\subW\rangle  =  mb\langle \vecOm, d\vecL\rangle + \mathcal{Q}\langle \vecgamma, \vecL\rangle + \mathcal{P}d\lambda_3,
\end{equation*}
where, using \eqref{Eq:Q-P}, we compute
\begin{equation*}
 \begin{split}
  \mathcal{Q} =    - \tfrac{ mb}{\langle\mathbb{B}\vecgamma,\vecgamma\rangle}\left[b \langle \vecOm, \vecgamma\rangle + \tfrac{b(b-c)}{\sqrt{\langle\mathbb{B}\vecgamma,\vecgamma\rangle}} \gamma_3 c_3\right] \quad  \mbox{and} \quad
  \mathcal{P} =  \tfrac{b(c-b)}{\sqrt{\langle\mathbb{B}\vecgamma,\vecgamma\rangle}} \left[ b \tfrac{\langle \vecOm, \vecgamma \rangle}{\sqrt{\langle\mathbb{B}\vecgamma,\vecgamma\rangle}} \gamma_3 - \tfrac{(b-c)}{\langle\mathbb{B}\vecgamma,\vecgamma\rangle} \gamma_3^2 c_3 +1 \right].
 \end{split}
\end{equation*}

From \eqref{Eq:DynGauge}, we obtain the dynamical gauge transformation defined by
$$
B=  mb\langle \vecOm, d\vecL\rangle .
$$

Following Theorem \ref{T:Solids-Poisson} we conclude that the reduced dynamics of the ellipsoid is described by the Poisson bracket $\{\cdot, \cdot \}_\red^\B$ on $\bar\M$ induced by the gauge related bracket $\{\cdot, \cdot\}_\B$ on $\M$.

\subsubsection{Conserved quantities and Casimirs}

Following Section \ref{Ss:Casimir} we see that the gauge momenta associated to the rolling ellipsoid are the functions
$$
J_1 = f_1(\gamma_3)  j_1 + g_1(\gamma_2)  j_2  \quad \mbox{and} \quad J_2 = f_2(\gamma_3)  j_1 + g_2(\gamma_2)  j_2,
$$
where $f_i$ and $g_i \in C^\infty(Q)^G$ satisfy \eqref{Eq:Solids-Diff}.

In our case, the matrix $[\mathcal{QP}]$ has the form 
\begin{equation} \label{Eq:Ellipsoid-QP}
[\mathcal{QP}] = \frac{m}{P(\gamma_3)} \left( [\mathcal{QP}]_1 +\frac{mb}{\langle {\mathcal B}\vecgamma,\vecgamma\rangle}\left(\begin{array}{cc} b^2 & bc \\ b(b-c) & c(b-c)\gamma_3^2 \end{array}\right)  \right),
\end{equation}
where 
$$
[\mathcal{QP}]_1 = \frac{1}{\langle \mathcal{B}\vecgamma, \vecgamma\rangle} \left( \begin{array}{cc} -b(b+ \tfrac{(c-b)}{\langle \mathcal{B}\vecgamma, \vecgamma\rangle}\gamma_3^2) 
& -b^2\gamma_3 (-1 +\tfrac{(c-b)}{\langle\mathcal{B}\vecgamma, \vecgamma\rangle}(1-\gamma_3^2) ) \\
(c-b)^2 \gamma_3 (1- \tfrac{\gamma_3}{\langle \mathcal{B}\vecgamma, \vecgamma\rangle} )  
& b(b-c) (\tfrac{(c-b)}{\langle \mathcal{B}\vecgamma, \vecgamma\rangle} \gamma_3(1-\gamma_3^2) -1 )
\end{array} \right) \left(\begin{array}{cc} \mathbb{I}_3 & 0 \\ 0 & \mathbb{I}_1 \end{array}\right).
$$

The linear system \eqref{Eq:Solids-Diff} of ordinary differential equations has two $G$-invariant solutions $J_1, J_2$, and we see that these conserved quantities can be seen as a linear combination of the function $j_1$ and $j_2$. 
Since the 2-form $B$ was chosen in such a way that the induced functions $\bar{J}_1, \bar{J}_2$ on $\bar\M$ are Casimirs of the reduced bracket $\{\cdot, \cdot\}_\red^\B$ (see Theorem \ref{T:Solids-Casimirs}) we obtain that $\{\cdot, \cdot\}_\red^\B$ on $\bar\M$ is Poisson as a consequence of Corollary \ref{C:Solids-Poisson}.  

Moreover, following Section \ref{Ss:Regular} we conclude that away from the singularities (where $\bar\M$ is a 4-dimensional manifold), the  reduced bracket $\{\cdot, \cdot\}_\red^\B$ has symplectic leaves of dimension 2.

\begin{remark}\label{R:Bolsinov}
In \cite{Bolsinov} it was raised out the question of whether the system could be described by a Poisson bracket  on $\M/E(2)$.  From \eqref{Eq:Q-P},  we see that $\langle \mathcal{J},\mathcal{K}_\subW\rangle - B = {\mathcal Q}\, \langle \vecgamma, d\vecL\rangle + {\mathcal P}\, d\lambda_3$.  As in the case of the Routh sphere,  the term ${\mathcal Q}\, \langle \vecgamma, d\vecL\rangle$ is a well defined 2-form on $\M/E(2)$ while $ {\mathcal P}\, d\lambda_3$ is not.  The fact that $\mathcal{P} \neq 0$ is the reason why the reduced bracket on $\M/E(2)$ induced by $\{\cdot, \cdot\}_\B$ is not twisted Poisson, see item $(iii)$ in Sec.~\ref{Ss:Regular}.
Moreover, considering $(\vecgamma, {\bf M})$ the coordinates on $\M/E(2)$, we can check also that the (partially) reduced bracket $\{\cdot , \cdot \}_{1}^\B$ on $\M/E(2)$ induced by $\{\cdot , \cdot \}_\B$ and the orbit projection $\rho_1:\M\to \M/E(2)$ fails to satisfy the Jacobi identity. In fact, using \eqref{Eq:Jac-pi_Bred} we can check that  $d(\langle \mathcal{J},\mathcal{K}_\subW\rangle - B) (\pi_\B^\sharp(d\rho_1^*(dM_1)),\pi_\B^\sharp(d\rho_1^*(dM_2)),\pi_\B^\sharp(d\rho_1^*(d\gamma_1))) \neq 0$. 

We could try to find another dynamical gauge transformation. However, any other choice of a 2-form $B$ involves a term $\langle \vecgamma,d\vecL\rangle$ or $d\lambda_3$, which are not ``dynamical'' in the sense that ${\bf i}_{X_\nh}\langle \vecgamma,d\vecL\rangle = \langle \vecgamma, \vecOm \times \vecL\rangle \neq 0$ and ${\bf i}_{X_\nh}d\lambda_3 = \Omega_1\gamma_2-\Omega_2\gamma_1 \neq 0$.

Another view-point, in connection with the observations made in \cite{Bolsinov}, is that none of the conserved quantities are gauge momenta with respect to the $E(2)$-action.  In other words,  none of the functions $\bar{J}_i$ are  Casimirs of the (partially) reduced bracket $\{\cdot , \cdot \}_{1}^\B$ on $\M/E(2)$ as we saw in Remark~\ref{R:Solids-not-Poisson}.
\end{remark}

\begin{remark}\label{R:Chaplygin}
 If $c=b$ then we are describing a inhomogeneous ball of radio $b$ with the center of mass in the geometric center, which is the {\it Chaplygin ball}, Example \ref{Ex:ChapBall}.  In this case, $\langle\mathbb{B}\vecgamma,\vecgamma\rangle = 1$ and $s=-b\vecgamma$. Hence we have that 
 $$
 \langle \mathcal{J},\mathcal{K}_\subW\rangle  =  mb\langle \vecOm, d\vecL\rangle 
  + mb^2\langle \vecOm, \vecgamma\rangle  \langle \vecgamma, d\vecL\rangle 
  $$
 and $ \langle \mathcal{J},\mathcal{K}_\subW\rangle - B =  mb^2\langle \vecOm, \vecgamma\rangle  \langle \vecgamma, d\vecL\rangle $, which is a well defined 2-form on $\M/E(2)$ (observe that here ${\mathcal P}=0$). That is why in the reduced manifold $\M/E(2)$ we have a twisted Poisson bracket describing the dynamics, see Ex.\ref{Ex:ChapBall} and \cite{Jac} for more details.  Therefore the $S^1$-reduction is not needed and that is why we can avoid the hypothesis of the axisymmetric distribution of mass for the Chaplygin ball. 
\end{remark}

\end{document}